\documentclass[12pt,reqno]{article}

\usepackage{preamble_draft}

\begin{document}

\title{Providing Certainty\footnote{For valuable comments and suggestions, we thank Tilman B\"orgers, Benjamin Brooks, Christoph Carnehl, Yeon-Koo Che, Modibo Camara, Andr\'es Carvajal, Takuma Habu, Emir Kamenica, Jinwoo Kim, Andrew Ferdowsian, Amanda Friedenberg, Athanasios Geromichalos, David Miller, Roi Orzach, Marco Ottaviani, Minseon Park, Alessandro Pavan, Doron Ravid, Arseniy Samsonov, Mu Zhang, and various seminar participants.}
}

\author{
Andrew B. Choi 
\and
Christoph Schlom 
\and
Chengyang Zhu\footnote{Choi: University of Michigan, \href{mailto:choiandy@umich.edu}{choiandy@umich.edu}. Schlom: UC Davis, \href{mailto:cschlom@ucdavis.edu}{cschlom@ucdavis.edu}. Zhu: Boston University, \href{mailto:zhuc@bu.edu}{zhuc@bu.edu}.}
}
\date{June 2026 \\
 \href{https://drive.google.com/file/d/1rQDN9W3Cy9HSq8_MennlfRtJqgQgj-Rj/view?usp=drive_link}{Click here for the latest draft.}}
\maketitle

\begin{abstract}

We introduce a moral hazard model in which public information about a payoff-relevant state arrives over time, an agent decides when to make an irreversible investment, and a principal commits to a state-contingent policy to incentivize investment. To discourage the agent from waiting for more information, the principal's optimal policy \textit{provides certainty}, reducing the degree to which the agent's payoff depends on the state. This is inefficient -- both players would be better off with less certainty. We study when the agent receives positive rent, and when moral hazard delays investment. Our results apply to environmental subsidies and R\&D incentives.



\vspace{2mm}

\noindent\emph{Keywords}: moral hazard, optimal stopping, policy uncertainty
\end{abstract}

\section{Introduction}\label{sec:motivating-example}

Policymakers often seek to reduce uncertainty about future policies. For example, lawmakers introduce subsidies that remain in place for a fixed number of years, regardless of subsequent fiscal shocks. Governments make advance market commitments, promising to purchase a pre-specified quantity of a vaccine once it is developed, irrespective of realized demand.

We argue that such commitments are solutions to a novel moral hazard problem in which a principal wants an agent to invest, but the agent would rather wait for more information. To encourage investment, it is optimal for the principal to \textit{provide certainty} about her future policy. This is distortionary -- both the principal and the agent would be better off with less certainty.

Our results differ from those of standard moral hazard models (\`a la \citealp{holmstrom1979moral}) in two crucial ways. First, the principal provides certainty, not to insure a risk-averse agent, but because certainty discourages the agent from waiting for information. Second, the agent can obtain positive rent 
without limited liability or private information.

A more detailed overview of the paper will be provided in \cref{subsec:overview}. Here, we begin with a motivating example. Suppose a principal (she) wants an agent (he) to invest in a project. There is an uncertain state of the world $c\in \{1,2\}$ with a common prior probability $p=1/2$ that $c=1$. There are two periods. The principal chooses a policy $y\geq 0$, and the agent chooses when, if ever, to invest. The timeline of the game is as follows:
\begin{itemize}[leftmargin=0cm]\itemsep-.1em
    \item[] \textbf{Period 0:} Principal commits to a state-contingent policy rule $y(c)$.
    \item[] \hphantom{\textbf{Period 0:}} Agent decides whether to invest.
    \item[] \textbf{Period 1:} State $c$ is publicly realized.
    \item[] \hphantom{\textbf{Period 1:}} If agent has not invested, he decides whether to invest. Policy is implemented.
\end{itemize}
The cost of the policy to the principal is $cy$. If the agent invests, he incurs a cost $I=6$, obtains a flow benefit $b=1$ in each period, and also benefits from the principal's policy $y$. That is,
\begin{align*}
        \text{Agent's payoff} &\,= \begin{cases}
            2b - I + y \caseif \text{he invests in period 0} \\
            b - I + y  \caseif \text{he invests in period 1} \\
            0 \caseif \text{he never invests.}
        \end{cases}
\end{align*}
The policy may be interpreted either as a direct subsidy, or as public expenditure in infrastructure that enhances the value of the agent's investment. The uncertainty over the principal's marginal cost $c$ may represent uncertainties in the future fiscal environment.

The principal's objective is to minimize her expected cost from policy, subject to inducing the agent to invest in period 0.\footnote{This need not be because the principal intrinsically prefers early investment. The principal may simply wish to induce investment under both states; then, the principal \textit{must} induce investment in period 0, since the agent will not wait until period 1, forgoing the period-0 flow benefit, only to invest under both states. See \cref{OA:motivating-example} for an analysis of the optimal investment timing in the example.} Note that if the principal sets $y=0$, the agent will never invest.




The agent faces a trade-off between obtaining more flow benefits and obtaining more information about policy. Investing in period 0 allows him to enjoy the flow benefit twice, but if $y$ turns out to be low, he will regret having invested. On the other hand, waiting until period 1 gives him more information: he will be able to observe $c$, infer $y(c)$, and invest only if his payoff from doing so is positive.



\paragraph*{First Best (Contractible Investment).}
Consider a first-best benchmark in which the principal can commit to a policy rule that depends not only on the state $c$, but also on whether and when the agent invested. In this case, it is clearly without loss for the principal to commit to set the lowest possible policy of $y=0$ unless the agent invests in period 0. Then, the agent has no reason to wait until period 1 -- he should either invest in period 0 or never invest. Thus, the principal's first-best problem boils down to choosing a subsidy rule $y(c)$ (which specifies the subsidy to be paid if the agent invests in period 0 and the realized state is $c$) in order to minimize the expected policy cost, subject to the (ex-ante) \textit{individual rationality constraint} (IR) that the agent prefer investing in period 0 to never investing:
\begin{align*}
    \min_{y} &\, \E[cy(c)] \\
    \text{subject to } &\,  2b - I + \E[y(c)] \geq 0. \tag{IR}
\end{align*}
The first-best solution is $(y^{FB}(1),y^{FB}(2)) = (8,0)$. Intuitively, the cheapest way to satisfy IR is to provide zero policy incentives in the high-cost state, and set the policy in the low-cost state just high enough that IR holds with equality.


\paragraph*{Second Best (Non-contractible Investment).} Let us now suppose the principal can only condition her policy on the state, and not on the agent's investment decision. Then, the first-best outcome cannot be implemented. Indeed, suppose the principal commits to set the policy to $y^{FB}(c)$ in state $c$, regardless of the agent's actions. As before, the agent obtains zero expected payoff from investing in period 0, but he now has a profitable deviation: wait until period 1, observe the state, and then invest if and only if $c=1$, to obtain an expected payoff of $p[y^{FB}(1)+b-I] = 3/2$. 
Intuitively, the first-best policy rule loads all incentives onto the low-cost state, so it gives the agent too much option value from waiting for information about the state.

To prevent the agent from waiting, a feasible policy rule must additionally satisfy the \textit{no waiting constraint} (NW):
\begin{align*}
     2b - I + \E[y(c)] \geq p(b - I + y(1)), \tag{NW}
\end{align*}
where the LHS is the agent's expected payoff from investing in period 0, and the RHS is his expected payoff from waiting and then investing in period 1 only in the low-cost state. Hence, the principal's second-best problem is\footnote{In principle, the agent may also consider two other deviations:  he could wait and invest in both states, or wait and invest only if $c=2$. It is easy to check that both of these potential deviations can be ignored (i.e., the corresponding non-deviation constraints do not bind). This observation is a special case of our \cref{thm:aux=original}.}
\begin{align*}
    \min_{y} &\, \E[cy(c)] \\
    \text{subject to } &\, (\text{IR}),\,(\text{NW}).
\end{align*}
One can verify that both constraints should bind, and that the solution is $(y^{SB}(1),y^{SB}(2)) = \left(I-b, I - b(2-p)/(1-p)\right) = (5, 3)$.

Note that in the high-cost state, the second-best policy ($y^{SB}(2) = 3$) is higher than the first-best policy ($y^{FB}(2) = 0$). Intuitively, waiting for information gives the agent the option of avoiding investment when the policy is low. To discourage waiting, the principal must reduce the value of this option, and she achieves this by raising the policy in the high-cost state.\footnote{In the current example, (NW) \textit{only} depends on $y(2)$, since the terms involving $y(1)$ cancel out. As we will see later, this is a feature that comes from the agent's payoff being additively separable in $y$.}
A higher $y(2)$ then relaxes the IR constraint, allowing the principal to lower $y(1)$.
As a result, the second-best policies are sandwiched between the first-best policies: $y^{FB}(2) < y^{SB}(2) < y^{SB}(1) < y^{FB}(1)$. Moreover, the expected policies are equal, $\E[y^{FB}(c)] = \E[y^{SB}(c)] = I - 2b = 4$, so the second-best policy rule is a mean-preserving contraction of the first-best policy rule. That is, moral hazard makes the principal \textit{provide certainty} about her future policy.

Providing certainty is distortionary. $y^{SB}$ is clearly more costly than $y^{FB}$ for the principal, since $y^{SB}$ requires her to set a relatively higher policy when it is more costly to do so. On the other hand, conditional on investing in period 0, the agent has zero value for a less-dispersed policy rule, since his payoff is linear in policy; indeed, the agent's expected payoff is 0 under both $y^{FB}$ and $y^{SB}$. Hence, the second-best policy is Pareto-dominated by the first-best policy.

Note that, to discourage waiting, the principal is providing certainty to an agent who is risk-neutral (with respect to the policy). This stands in contrast to standard moral hazard models \citep{holmstrom1979moral}, where the agent is insured because he is risk-averse. In our main model, we will allow the agent to be risk-averse, and show that the principal optimally reduces the dispersion of her policy beyond what efficient risk-sharing would dictate.


In the example, the provision of certainty does not benefit the agent -- his expected payoff under $y^{SB}$ is 0. The reason behind this, as we will argue using our main model, is that the agent's benefit from policy is independent of the timing of his investment -- the policy $y$ gets added to the agent's payoff as long as the agent invests, regardless of whether he invests in period 0 or 1. On the other hand, in settings where investing earlier allows the agent to benefit even more from a given policy, it will be shown that the agent often benefits from certainty. This will again be in contrast to standard models of moral hazard, where the agent's rent typically comes from having limited liability or private information.



In studying the example, we have only considered the cost-minimizing policy rule, assuming that the principal must induce the agent to invest in period 0. In \cref{OA:motivating-example}, we augment the example by explicitly introducing the principal's benefit from investment, and solve for the optimal investment timing. It turns out that the optimal investment timing is weakly later under second-best than under first-best; that is, moral hazard can delay investment.


\subsection{Overview of the Paper}\label{subsec:overview}

In our main model (\cref{sec:setup}), the principal and the agent have richer payoff structures and interact over an arbitrary number of periods. Information about a binary state is publicly revealed over time via a breakdown process. The agent solves a stopping time problem of choosing when (if ever) to invest. The principal sets a policy in the final period, and commits ex ante to a policy rule which is contingent on the full history of public signals. The agent wishes to wait for information before investing, while the principal wishes to encourage early investment.
This general setup can be applied to the provision of environmental subsidies (\cref{ex:renewing-subsidies}) or the procurement of goods that require research and development (\cref{ex:procurement}).

\cref{sec:optimal-policy-rule} solves for the principal's optimal policy rule. It would be natural to consider the classical two-step approach \citep{grossman1983ananalysis}: first solve for the cost-minimizing policy rule that induces the agent to adopt a given stopping time, and then maximize over stopping times. A challenge is that, even though we assume a breakdown information process, the agent can still potentially choose from a multi-dimensional set of stopping times. We therefore combine the two-step approach with a guess-and-verify approach: we first solve for the cost-minimizing policy rule that makes an agent prefer a given \textit{simple stopping time} over all ``later'' simple stopping times (\cref{thm:inner}), and then verify that maximizing across these cost-minimization problems delivers the optimal policy rule in the original problem (\cref{thm:inner=aux,thm:aux=original}).

The main model allows the agent to be risk-averse, so the principal may reduce the dispersion of her policy purely for risk-sharing purposes. Nevertheless, the insight from the motivating example generalizes: the optimal policy rule provides certainty to the agent, beyond what efficient risk-sharing would imply (\cref{thm:inner}).



We study whether the principal's policy incentives benefit the agent  (\cref{sec:rent}). The answer hinges on whether investment timing and policy are \textit{complements} for the agent. In settings where the agent's benefit from policy does not depend on how early he invested, as in \cref{ex:renewing-subsidies}, the agent always receives zero rent. In contrast, when there is complementarity -- the earlier the agent invests, the more he benefits from higher policy -- the agent obtains positive rent under relatively mild conditions, as in \cref{ex:procurement}.

Then, focusing on settings where the agent's benefit from policy is independent of investment timing, we show that the structure of the optimal policy rule depends crucially on the agent's \textit{waiting premium}, which is the policy-independent portion of the agent's gain from waiting one more period before stopping. 
When the waiting premium is increasing over time, the principal can implement the optimal policy rule by announcing a \textit{state-measurable} policy rule at a time of her choice (\cref{sec:ota}).
On the other hand, when the waiting premium is decreasing over time, the optimal policy rule has a \textit{nested} structure (\cref{sec:nested}). The nested structure in turn implies that moral hazard delays investment.
We show via an example that, without nestedness, moral hazard may hasten investment (\cref{OA:hastening}).


As we discuss in the literature section (\cref{sec:lit}), a growing empirical literature aims to quantify the benefits of policy certainty in various economic settings. Our theoretical framework complements this effort by studying how a policymaker should \textit{endogenously provide} certainty, trading off the benefit (incentivizing investment) against the cost (not adapting policy to state).

\section{Setup}\label{sec:setup}

A principal (she) and an agent (he) interact over a finite number of periods $t\in \{0,1,\ldots,T\}$. We will abuse notation and sometimes let $T$ denote the set $\{0,1,\ldots,T\}$. There is a payoff-relevant state of the world $\t \in \{0, 1\}$ with an interior common prior. We interpret $\t=1$ as the good state and $\t=0$ as the bad state. The principal implements a policy $y\in \R_+$ in period $T$. The agent decides when, if ever, to make a one-time investment -- that is, he solves a stopping time problem.



\paragraph*{Breakdown Signals.} 

In each period, the principal and the agent symmetrically update their information about $\t$, by observing a public Markov signal $x_t$. Each $x_t$ takes a value of 1 (good) or 0 (bad). We let $\t = x_T$, meaning that the players fully learn the state in period $T$. The transition probabilities are $\Probability(x_{t}=0|x_{t-1}=0) = 1$ and $\Probability(x_{t}=1|x_{t-1}=1) = p_t \in (0,1)$. We say that a \textit{breakdown} occurred in period $t$ to mean that the signal transitioned from $x_{t-1}=1$ to $x_t=0$. 
The initial signal is $x_0 = 1$.

\paragraph*{Histories.} A (terminal) \textit{history} is a sequence of signals of length $T$, representing the signals that arrived in each period. We define
\[\bfa:=(\overbrace{1,\dots,1}^{T \text{ entries}}),\qquad \bfb_{t}:=(\overbrace{1,\dots,1}^{t-1 \text{ entries}},\overbrace{0,\dots,0}^{T-t+1\text{ entries}})\]
to represent the two possible varieties of histories. History $\bfa$ indicates that there has been no breakdown. History $\bfb_t$ indicates that a breakdown occurred in period $t$. Let 
$\ms{H}:=\{\bfa,\bfb_1,\ldots,\bfb_T\}$ denote the set of all possible histories, and let $\mathbf{h}$ denote a generic element of $\ms H$. Figure \ref{fig:signal-hist} summarizes the signal structure and histories.

\begin{figure}
    \centering
    \includegraphics[width=0.55\linewidth]{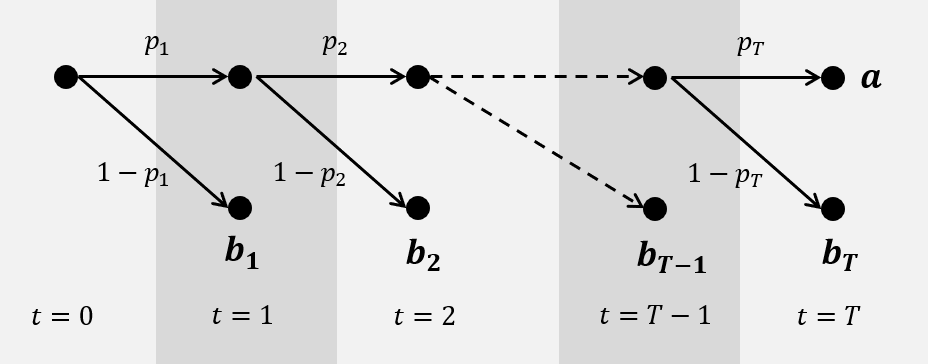}
    \caption{Signals and histories.}
    \label{fig:signal-hist}
\end{figure}

\paragraph*{Policy Rule.} 
The principal is able to commit ex ante to a \textit{policy rule}, $y:\ms{H}\to \R_+.$ This entails two economic assumptions regarding her commitment power:
\begin{enumerate}
    \item The policy rule \textit{can} depend on both the state and the signals. Such contingent policies are common. For example, the Clean Electricity Production Credit in the US is set to phase out when the US greenhouse gas emissions from electricity, which may be viewed as an exogenous state from an individual firm's point of view, drop below a predetermined level.\footnote{Similarly, in Operation Warp Speed, the US government procured 100 million COVID vaccines from Pfizer, with an option to order 500 million more \citep{pfizer20}. Since that option would presumably be exercised if and only if demand were sufficiently high, it can be viewed as a state-contingent policy.}
    \item The policy rule \textit{cannot} depend on the agent's choice of whether and when to invest. It is often difficult for a policymaker to observe investment decisions by individual firms or households, especially if the investment is intangible (such as building human capital). Moreover, one interpretation of our model is that there is a unit mass of homogeneous agents, and the principal chooses a single policy that applies to all agents in the economy (such as a central bank choosing the interest rate). Then, even if individual agents' investments are observed, it may not be feasible to condition policy on them.\footnote{Furthermore, it would not help to condition policy on the aggregate investment level, since no individual agent can affect this aggregate level.} 
\end{enumerate} 
In addition, we are restricting attention to deterministic policy rules which do not randomize over policies conditional on history $\bfh$.\footnote{The restriction is without loss in the special case where the principal's payoff is non-increasing in policy. This is because, for each $\bfh$, a randomized policy can be replaced by its certainty equivalent for the agent. An analogous argument was first made by \cite{grossman1983ananalysis}.}

\begin{rem}\label{rem:monitoring}
    We have assumed that the principal can condition her policy only on the public history -- that is, not on the agent's investment decision. Alternatively, we could allow the principal to condition policy on \textit{whether, but not when} the agent invested. Formally, such a policy rule would be described by $\tty(\bfh,i)$, where $i\in\{0,1\}$ is a binary variable indicating whether the agent invested by time $T.$

In \cref{OA:monitoring} we show that the stronger monitoring structure would not strictly benefit the principal, and so any optimal solution to our problem would remain optimal under the stronger monitoring structure.\footnote{I.e., if $y$ is an optimal policy rule in our problem, then the policy rule $\tty(\bfh,i)=y(\bfh)$ for all $\bfh,i$ is optimal under the stronger monitoring structure.} 

\cref{OA:monitoring} also shows that, if we strengthen the monitoring structure in this way, then we can weaken \cref{ass:A-payoff-main}(c) to:
\[
    \text{Assumption 1(c$'$):  For all $\theta$ and $y$,  $u(\theta,y,\infty) \geq u(\theta,0,\infty)=0$,}
\]
while maintaining the optimality of the policy rule that we derive. Assumption 1(c$'$) allows the agent to benefit from the principal's policy, even if he has not invested. In many settings, Assumption 1(c$'$) is more realistic than \cref{ass:A-payoff-main}(c). For example, it seems possible for the principal to give a cash transfer to an agent who has \textit{not} invested in a project; the agent, presumably, would benefit from this. Our results in \cref{OA:monitoring} show that our model also captures such situations, provided that the principal can verify whether investment has occurred by time $T$.
\end{rem}

\paragraph*{Timing.}

Within each period $t \in T$, the timing of events is as follows:
\begin{enumerate}[1.]
    \item Signal $x_t$ is publicly realized.
    \item If the agent has not already invested, he decides whether to invest or not.
    \item If $t=T$, the principal implements policy $y$.
\end{enumerate}

\paragraph*{Payoffs.} Payoffs are realized in period $T$. There are three payoff-relevant variables: the state $\t$, the principal's policy $y$, and the agent's investment timing $t\in T \cup \{\infty\}$, where $t=\infty$ means that the agent never invests. The agent's payoff is $u(\theta,y,t)$, and the principal's payoff is $v(\theta,y,t).$ Assumptions about $u$ and $v$ will be introduced shortly.

\paragraph*{Agent's Problem.} 

The agent solves a stopping time problem of choosing when to invest. We will use ``stop" and ``invest" interchangeably to refer to the agent's action.
Let $\ms T$ denote the set of all stopping times $\tau:\ms H \to \Tui$ adapted to the signal process $\{x_t\}_{t\in T}$. Taking as given the policy rule $y$, the agent chooses $\tau \in \ms T$ to maximize his expected payoff:\footnote{It is without loss to ignore mixed stopping times. If the agent mixes between two stopping times, then he must be indifferent between them, so either stopping time can be induced, and the principal can induce whichever she prefers.}
\begin{align*}
    \max_{\tau \in \ms T} \E[u(\t,y(\bfh),\tau)].
\end{align*}


\paragraph*{Principal's Problem.}
The principal chooses a policy rule $y$ and a stopping time $\tau$ to maximize her expected payoff, subject to the agent's incentive constraint:
\begin{align}
    [\ms P_0] \hspace{10em} \max_{y,\,\tau \in \ms T} &\, \mathbb{E}[v(\t,y(\bfh),\tau)]  \hspace{20em} \notag\\
    \text{subject to } &\, \E[u(\t,y(\bfh),\tau)] \geq \E[u(\t,y(\bfh),\tau')] \qquad \forall \tau' \in \ms T. \notag
\end{align}

\paragraph*{Payoff Assumptions.} 

We start with a definition.

\begin{defn}
    A payoff function $f(\t,y,t)$ satisfies \textit{aggregate positive monotonicity (APM)} if, whenever
    \begin{equation*}
    \label{eq:apm-premise}
        \sum_i \alpha_i f(\theta_i,y_i,t)\geq 0
    \end{equation*}
    for $t\in T$ and a finite sequence $(\alpha_i,\theta_i,y_i)_i$ with $\alpha_i>0$, we have:
    \begin{align*}
        \sum_i \alpha_i f(\theta_i,y_i,t') \geq \sum_i \alpha_i f(\theta_i,y_i,t) \qquad \forall t' < t. \label{eq:wAPM}
    \end{align*}
\end{defn}
APM is a weak discounting notion. It says that, provided the agent is receiving a lottery with a non-negative expected payoff by stopping tomorrow, he would prefer to receive that same lottery and stop today.\footnote{The agent's payoff from not investing will be normalized to 0 (\cref{ass:A-payoff-main}(c)), so a non-negative expected payoff means that the agent weakly prefers the lottery to never investing.} APM is implied by $f$ being non-increasing in $t$, but is weaker than that condition, because it only applies if the lottery has a non-negative expected utility.

We make the following assumptions on the agent's payoffs.

\begin{assumption}\label{ass:A-payoff-main}
    The agent's payoff function, $u(\t,y,t)$, can be written as:
        $$u(\t,y,t) = \phi(t)w(\t,y) + g(\t,t),$$
        where $\phi,w,$ and $g$ are real-valued functions\footnote{Specifically, $\phi:\Tui\to\R,$ $w:\{0,1\}\times \R_+\to\R$ and $g:\{0,1\}\times (\Tui)\to \R.$} satisfying the following conditions:
    \begin{enumerate}
        \item $w(\t,y)$ is twice continuously differentiable in $y$,  with $w_y>0$ and $w_{yy}\leq 0$.\footnote{We use $w_y$ and $w_{yy}$ to denote the first and second partial derivatives of $w$ with respect to $y$.} 
        \item For all $t\in T$, $\phi(t)$ is strictly positive and non-increasing.
        \item For all $\t$ and $y$,  $u(\theta,y,\infty)=0$. 
        \item $u$ satisfies APM.
        \item For all $y\geq 0$ and $t \in T$, $u(1,y,t) \geq u(0,y,t)$ and $u_y(1,y,t)\geq u_y(0,y,t)$.
        \item For all $t\in T$, $u(0,0,t)<0$.
    \end{enumerate}
\end{assumption}
The agent's payoff $u$ consists of a policy-dependent term, $\phi(t) w(\t,y)$, and a policy-independent term, $g(\t,t)$.
Note that the investment timing $t$ affects the agent's payoff through two channels -- it scales the impact of policy $y$ via the multiplicative component $\phi$, and it also enters additively via $g$.
That $\phi$ is non-increasing (condition (b)) means that investment timing and policy are weak complements for the agent -- if he invests earlier, he benefits weakly more from higher policy. The complementarity is strict if $\phi(t)$ is strictly decreasing. If $\phi$ is constant, then there is no complementarity, and the agent's payoff is additively separable in policy $y$ and investment timing $t$ -- as long as the agent invests in some period, he receives the same benefit from the policy.
Condition (c) says that if the agent never invests, then he cares about neither the policy nor the state, obtaining a constant payoff normalized to 0.
A sufficient condition for $u$ to satisfy APM (condition (d)) is that $g/\phi$ is non-increasing in $t$ (see \cref{OA:bgd}).
Condition (e) says that the agent's utility, and his marginal utility of policy, are both higher if the state is good. Condition (f) makes our problem non-trivial; it says that the agent prefers not to invest if the state is bad and the policy is 0, the lowest possible value.

Intuitively, as long as the agent is obtaining a positive payoff by stopping, he benefits from stopping earlier. However, he must trade off this benefit against the value of waiting for more information, since depending on the state (which also determines the policy), his payoff from stopping may be negative.

We make the following assumptions on the principal's payoffs.
\begin{assumption}\label{ass:P-payoff-main}
    The principal's payoff function, $v(\t,y,t)$, satisfies the following:
    \begin{enumerate}
        \item $v(1,0,\infty) = v(0,0,\infty) = 0.$
        \item $v$ is differentiable in $y$. For all $t\in T$, $v_{yy}<0$.
        \item For all $\t$ and $y$, $v_y(\t,y,\infty) \leq 0$. For all $\t$ and $t\in T$, $\lim_{y\to\infty} v(\t,y,t) = -\infty$.
        \item For all $\t$ and $t \in \Tui$, $v(\t,0,t)$ is non-increasing in $t$.
        \item $v$ satisfies APM.
        \item For all $y$ and $t\in T$, $v(1,0,t) \geq v(0,0,t)$ and $v_y(1,y,t)\geq v_y(0,y,t)$.
    \end{enumerate}
\end{assumption}
Condition (a) is a normalization.
Condition (b) says that the principal's payoff is strictly concave in policy.\footnote{In the motivating example (\cref{sec:motivating-example}), for expositional simplicity, we assumed $v$ to be affine in $y$.} Note that we do not require $v$ to be monotone in policy. 
Condition (c) says that if the agent never invests, the principal does not benefit from setting a higher policy; and the principal's payoff decreases without bound as the policy grows large. Condition (d) says that, at least when the policy is 0, the principal prefers that the agent invests and that he does so earlier.
Condition (f) says that, if the policy is 0 and the agent invests, the principal's payoff is higher in the good state; moreover, if the agent invests, the principal's marginal utility of policy is higher in the good state.






Conditions (c), (d) and (e) represent a weak version of the assumption that the principal wishes to induce the agent to invest early, but finds it costly to provide policy incentives. Weakening this assumption broadens the scope of our model, and in particular, allows us to capture Example 2 below.\footnote{In many settings, the policymaker herself may prefer to provide a positive level of policy, or prefer that the agent does not invest too early with little information. We would not be able to capture these settings if we simply assumed that $v$ is decreasing in $y$ and $t$.}



%


\paragraph*{Examples.}
We end this section by presenting two examples of our model. Throughout the paper, we will return to these examples to apply our results.

\begin{example}[Renewing Subsidies]\label{ex:renewing-subsidies}

The agent is a developer considering building a wind farm, which produces one unit of renewable electricity every period after it is built. Past legislation has determined and funded a subsidy of $s$ per unit, but this subsidy expires after the end of period $T-1$.\footnote{Empirically, expiration of environmental subsidies is a major source of policy uncertainty. See, for example, \cite{chen2024dynamic}. } The principal is the government and commits to a new subsidy $y \geq 0$ which will be paid from period $T$ onward. (This example has an infinite horizon, but we assume that investment must be made before time $T$.) The state $\t \in \{0,1\}$ represents exogenous technological progress; if $\t=1$, producing wind energy becomes more profitable for the agent. 

The agent's payoff from not investing is 0, and his payoff from investing in period $t\leq T$ is
\begin{align*}
    u(\t,y,t) =  \sum_{i=t}^{T-1}\d^i s - \d^t I + \sum_{i=T}^{\infty}\d^i (y + \t),
\end{align*}
where $\d \in (0,1)$ is the discount rate, and $I>0$ is the up-front cost of investment. We assume $s\geq I(1-\delta)$,
which means that the current subsidy more than covers the agent's rental cost on capital $I$. One can verify that $u$ satisfies \cref{ass:A-payoff-main} with $\phi(t)=1$, $w(\t,y) = (y+\t)\d^{T}/(1-\d)$, and $g(\t,t) = s(\d^t - \d^T)/(1-\d) - \d^t I$. Notice that $s\geq I(1-\delta)$ implies APM.

If the agent does not invest, the principal's payoff is 0. If the agent invests in period $t \leq T$, the principal's payoff is
\begin{align*}
    v(\t,y,t) = \sum_{i=t}^{\infty} \d^i V - \frac{\d^T}{1-\d} c(y),
\end{align*}
where $V>0$ is the social surplus generated per unit produced, and $c(y)$ is the cost of subsidies and satisfies $c(0) = 0$, $c'(y) \geq 0$, and $c''(y)>0$. One can verify that $v$ satisfies \cref{ass:P-payoff-main}; notice that APM is immediate, since $v$ is non-increasing in $t.$\footnote{We assume for simplicity that the current subsidy $s$ has already been funded and does not enter $v$. This assumption can easily be relaxed.}

This example is similar to the motivating example in \cref{sec:motivating-example}, but here, the agent cares about the state, while the principal has state-independent preferences and is risk-averse.

\end{example}

\begin{example}[Procurement]\label{ex:procurement}

The principal is the government and seeks to procure a good (say, a vaccine) from the agent, a producer, at time $T.$ Developing vaccines requires a costly and uncertain R\&D process, which is more likely to bear fruit the earlier it is started.

Specifically, to start R\&D, the agent must sink a cost of $I$. If he does so in period $t$, then success arrives in each period $t,t+1,\dots,T$ with probability $q \in (0,1)$. Thus, if the agent invests in period $t$, the probability that development succeeds by period $T$ is $1-(1-q)^{T-t+1}$. If the agent succeeds by period $T$, he is able to produce an arbitrary amount of vaccines at constant marginal cost (normalized to 0). If the agent does not succeed by period $T$, he cannot produce any vaccines.

Both players share a discount rate $\delta \in (0,1].$ We assume that the ratio $\frac{\delta^t}{1-(1-q)^{T-t+1}}$ is non-decreasing in $t$, for $t\in T.$
This means that as the agent waits, the probability of success decreases faster than fiscal discounting.
It is satisfied if $\delta$ is sufficiently close to 1 and $q$ is sufficiently close to 0.

The payoff-relevant state, $\theta\in\{0,1\}$, represents the principal's value for vaccines (which is ex ante uncertain and evolves with, for example, public news about the disease's severity). The policy rule $y$ represents an \textit{advance market commitment} \citep{kremer2004strong,kremer2020advance} by the principal to buy quantity $y$ of vaccines at a fixed price (normalized to $1/\delta^T$) if the vaccine has been developed. 
If the principal buys quantity $y$ in state $\theta$, the agent obtains revenue $y$, and the principal's payoff is:
\[(k\t+1) h(y) - y,\]
where $k>0$ is a scale parameter, and we assume $h'(y)>0$, $h''(y)<0$, $h(0)=0$, $h'(0)>1$, and $\lim_{y\to\infty} [(k+1)h(y) - y] = -\infty$. An example is $h(y) = \sqrt{y}$. Both the principal and the agent maximize expected utilities.

The agent's expected payoff (where the expectation is taken over the outcome of the R\&D process) from investing at time $t \in T$ and receiving policy $y$ in state $\theta$ is:
\begin{align*} 
    u(\theta,y,t)=\left(1-(1-q)^{T-t+1}\right) y - \d^t I.
\end{align*}
Taking $\phi(t)=(1-(1-q)^{T-t+1})$, $w(\theta,y)=y$ and $g(\t,t)= - \delta^t I$, we have that $u$ is of the functional form given in \cref{ass:A-payoff-main}. One can verify that all conditions in \cref{ass:A-payoff-main} are satisfied. In particular, APM is implied by the monotonicity of $\frac{\delta^t}{1-(1-q)^{T-t+1}}$.

If the agent invests at time $t$, the state is $\theta$, and the policy is $y$, the principal's expected net payoff is:

\begin{equation*}
    v(\theta,y,t)=\left(1-(1-q)^{T-t+1}\right)\left[(k\t+1) h(y) - y\right].
\end{equation*}
\noindent One can verify that all conditions in \cref{ass:P-payoff-main} are satisfied. In particular, to show APM, we may observe that $v$ takes the form given for the agent's payoff in \cref{ass:A-payoff-main}, with $\phi(t)=(1-(1-q)^{T-t+1})$, $w(\theta,y)=(k\t+1)h(y) - y$ and $g(\theta,t)=0,$ and then 
use the same argument as in \cref{OA:bgd}. 
\end{example}

\section{Optimal Policy Rule}\label{sec:optimal-policy-rule}


To solve the principal's problem $\original$, it would be natural to consider the classical two-step approach \citep{grossman1983ananalysis}, first solving for the optimal policy rule that induces the agent to play a given stopping time, and then maximizing the principal's expected payoff across stopping times. 
A potential difficulty is that our agent chooses a stopping time -- a mapping from histories to stopping decisions -- rather than a one-dimensional effort level, which complicates the agent's incentive constraints.

To address this issue, we combine the two-step approach with a guess-and-verify approach. We start by introducing the \textit{simple stopping times}, a naturally ordered subset of all stopping times (\cref{subsec:simple-stopping-time}).
After considering the first-best benchmark (\cref{subsec:first-best}), we solve for an optimal  (second-best) policy rule that makes an agent prefer a given simple stopping time over all other simple stopping times (\cref{subsec:solving-inner}). We then verify that maximizing the principal's payoff across simple stopping times leads us to the solution to the original problem $\original$ (\cref{subsec:solving-original}).


\subsection{Simple Stopping Times}\label{subsec:simple-stopping-time}

For $t \in T$, the \textit{simple stopping time} $\tau_t \in \ms T$ is the random variable defined by 
$\tau_t = t$ if $x_t = 1$, and $\tau_t = \infty$ if $x_t = 0$. We let $\tau_\infty \:= \infty$.
That is, an agent who adopts $\tau_t$ always waits until period $t$; then, if the period-$t$ signal is good, he stops immediately, and otherwise he never stops. A stopping time that is not simple is one where  the agent sometimes stops after a breakdown. Though such behavior is rather unintuitive, it cannot be ruled out a priori, since it can be a best response against some policy rules, and the policy rule is chosen endogenously by the principal. 

Given a policy rule $y$, the agent's expected payoff from a simple stopping time $\tau_t$ is
\begin{align*}
    U_t(y) = 
    \begin{dcases}
         \E [u(\t,y(\bfh),t) \mid x_t = 1] \prod_{i=1}^t p_i \caseif t < \infty\\
        0 \caseif t=\infty,
    \end{dcases}
\end{align*}
where the expectation is taken with respect to the state $\t$ and the terminal history $\mathbf{h} \in \mathcal{H}$. 
When unambiguous, we will omit the argument $y$ and simply write $U_t$.


The principal's \textit{auxiliary problem} is
\begin{align}
    [\ms P] \hspace{12.5em} \max_{y,\,\tau_t} &\, \mathbb{E} [v(\t,y(\bfh),\tau_t)]  \hspace{20em} \notag\\
    \text{subject to } &\, U_t \geq U_{t'} \qquad \forall t' \in \{0,1,\ldots,T,\infty\}. \notag
\end{align}
The auxiliary problem $[\ms P]$ differs from $[\ms P_0]$ in two aspects: the principal cannot induce the agent to adopt a non-simple stopping time, and the agent cannot deviate to a non-simple stopping time. We will verify in \cref{thm:aux=original} that the former difference is without loss of optimality, while the latter is without gain of optimality.


For $t \in \Tui$, we also define the \textit{inner (auxiliary) problem}:
\begin{align}
     [\ms P(t)] \hspace{11.5em}\max_{y} &\, \mathbb{E}[v(\t,y(\bfh),\tau_t)] \hspace{20em}\notag\\
    \text{subject to } &\, U_t \geq U_{t'} \qquad \forall t' \in \{t,t+1,\ldots,T,\infty\}. \notag
\end{align}
The problem $[\ms P(t)]$ differs from $[\ms P]$ in two aspects. First, $[\ms P(t)]$ must induce the simple stopping time $\tau_t$. Second, $[\ms P(t)]$ does not allow the agent to deviate to any ``earlier'' simple stopping time $\tau_{t'}$ with $t' < t$. These modifications ensure that $\inner$ is convex.\footnote{\cite{grossman1983ananalysis} transforms the principal's cost-minimization problem into a convex program by treating the agent's payoff from his wage as the principal's control variable. Their approach relies on a separability assumption on the players' payoffs, which would be too restrictive in our setting.}

\subsection{First-Best Benchmark}\label{subsec:first-best}

Before proceeding with the analysis of the optimal policy rule, it will be useful to consider a \textit{first-best} scenario where the agent's action is contractible, so that the principal can commit to a policy rule $y(\bfh,t)$ contingent on both the history $\bfh$ and the agent's investment timing $t$. In this benchmark case, whenever the agent deviates, the principal can maximally punish the agent by setting $y(\bfh,t)=0$.
Hence, as we show in \cref{OA:1st-best}, the principal's first-best problem boils down to
\begin{align}
    [\ms P^{FB}] \hspace{13em}\max_{y,\,\tau_t} &\, \mathbb{E} [v(\t,y(\bfh),\tau_t)] \hspace{15em}  \notag\\
    \text{subject to } &\, U_t(y) \geq \max_s U_s(\mathbf{0}), \notag
\end{align}
where $y(\bfh)$ is the policy implemented if the history is $\bfh$ and the agent stops at $\tau_t(\bfh)$ (that is, he does not deviate), and $\mathbf 0$ is the policy rule that sets $\mathbf{0}(\bfh) = 0$ for all histories $\bfh$. 
The following proposition describes the first-best policy rule.

\begin{prop}\label{prop:benchmark-solution}
    Suppose $(y,\tau_t)$ with $t < T$ solves $[\ms P^{FB}]$, and suppose $y(\bfh) > 0$ for $\bfh\in\{\bfb_{t+1},\ldots,\bfb_T,\bfa\}$. Then it must be that
    \begin{align} \label{eq:prop-2-1}
        y(\bfb_{t+1})=y(\bfb_{t+2})=\cdots = y(\bfb_{T}) \qquad \text{and} \qquad \frac{v_y\left(0,y(\bfb_T),t\right)}{u_y\left(0,y(\bfb_T),t\right)}=\frac{v_y\left(1,y(\bfa),t\right)}{u_y\left(1,y(\bfa),t\right)}.
    \end{align}
\end{prop}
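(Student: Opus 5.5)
The plan is to fix the stopping time $\tau_t$ with $t<T$ and view $[\ms P^{FB}]$ as a single-constraint optimization over the vector $(y(\bfh))_{\bfh\in\ms H}$. Under $\tau_t$ the agent stops at $t$ exactly on the histories $\bfb_{t+1},\ldots,\bfb_T,\bfa$, i.e.\ those with $x_t=1$. On $\bfb_1,\ldots,\bfb_t$ he never invests, so these coordinates enter neither $U_t(y)$ nor the constraint. The state is also pinned down by the history: $\t=0$ on each $\bfb_s$ and $\t=1$ on $\bfa$. So, writing $\pi(\bfh)>0$ for the probability of $\bfh$, the relevant part of the problem is
\[
\max_{(y(\bfh))_{\bfh\in\{\bfb_{t+1},\ldots,\bfb_T,\bfa\}}}\ \sum_{\bfh} \pi(\bfh)\, v(\t(\bfh),y(\bfh),t)\quad\text{s.t.}\quad \sum_{\bfh}\pi(\bfh)\,u(\t(\bfh),y(\bfh),t)\ \ge\ K,
\]
with $K$ a constant. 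Both the objective and the constraint are additively separable across histories.

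The main step is a variational argument between two histories. Take any two histories $\bfh,\bfh'$ in this set. By hypothesis $y(\bfh),y(\bfh')>0$, so both coordinates can be moved in either direction. Perturb $y(\bfh)$ by $\varepsilon$ and $y(\bfh')$ by $-\varepsilon\,\pi(\bfh)u_y(\t(\bfh),y(\bfh),t)/[\pi(\bfh')u_y(\t(\bfh'),y(\bfh'),t)]$. Since $u_y=\phi(t)w_y>0$, this keeps the constraint value fixed to first order. To make feasibility exact, I would use the implicit function theorem (or simply the continuity and strict monotonicity of $u$ in $y$) to adjust $y(\bfh')$ so that the constraint holds with the same value. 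The first-order change in the objective is then
\[
\varepsilon\,\pi(\bfh)\,u_y(\t(\bfh),y(\bfh),t)\left[\frac{v_y(\t(\bfh),y(\bfh),t)}{u_y(\t(\bfh),y(\bfh),t)}-\frac{v_y(\t(\bfh'),y(\bfh'),t)}{u_y(\t(\bfh'),y(\bfh'),t)}\right].
\]
If the two ratios differ, choosing the sign of $\varepsilon$ appropriately strictly improves the objective, which contradicts optimality. Hence the ratio $v_y/u_y$ must be equal across all of $\bfb_{t+1},\ldots,\bfb_T,\bfa$. Comparing $\bfb_T$ with $\bfa$ gives the second equality in \eqref{eq:prop-2-1}. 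This argument covers the slack-constraint case automatically, so no constraint qualification or sign of a multiplier is needed.

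The first equality, equal policies on $\bfb_{t+1},\ldots,\bfb_T$, requires showing that $y\mapsto v_y(0,y,t)/u_y(0,y,t)$ is injective. I expect this to be the main obstacle, because the ratio need not be monotone everywhere. I would split by the sign of the common value $r$ of the ratio.
\begin{itemize}
\item If $r=0$, each $y(\bfb_s)$ solves $v_y(0,y,t)=0$. Since $v_{yy}<0$, this equation has at most one solution.
\item If $r\ne 0$, the common value is $v_y=r\,u_y$. Here $u_y=\phi(t)w_y(0,y)>0$ is nonincreasing in $y$ because $w_{yy}\le 0$, while $v_y$ is strictly decreasing.
\begin{itemize}
\item If $r<0$, then $|v_y|$ is strictly increasing wherever $v_y<0$, and $u_y$ is positive and nonincreasing. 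So $v_y/u_y=-|v_y|/u_y$ is strictly decreasing on that region, which gives uniqueness.
\item If $r>0$, this monotonicity fails, and I would instead rule the case out. A positive ratio means $v_y>0$ at that history, so raising $y$ there increases the objective and relaxes the constraint (since $u_y>0$), contradicting optimality.
\end{itemize}
\end{itemize}
The same reasoning excludes $r>0$ globally, because the equal ratios would make $v_y>0$ at every relevant history. Together these give $y(\bfb_{t+1})=\cdots=y(\bfb_T)$.
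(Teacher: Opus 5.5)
Your proof is correct and is essentially the paper's argument. The paper writes the Lagrangian first-order condition $v_y+\lambda u_y=0$ at each stopping history, with $\lambda\ge 0$, and notes that its left side is strictly decreasing in $y$; your common ratio $r$ is just $-\lambda$, ruling out $r>0$ is dual feasibility, your two-coordinate perturbation derives that condition by hand, and your $r=0$ and $r<0$ cases merge into one line because $y\mapsto v_y(0,y,t)-r\,u_y(0,y,t)$ is strictly decreasing whenever $r\le 0$.
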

\begin{proof}
    See \Cref{proof:benchmark-solution}.
\end{proof}
Condition \eqref{eq:prop-2-1} says that conditional on the payoff-relevant state being bad, the first-best policy does not depend on when the breakdown occurred (as long as it occurred after the agent's equilibrium investment time).  Moreover, the Borch rule holds -- the ratio of the players' marginal utilities of policy is equalized across states. We will see in \cref{subsec:solving-inner} that \eqref{eq:prop-2-1} is in general violated under second-best.

\subsection{Solving the Inner Problem: Provision of Certainty}\label{subsec:solving-inner}

Let us now return to studying the optimal (second-best) policy rule.
We first show that a solution to the inner problem $\inner$ must satisfy two properties: the principal provides certainty to the agent, and certainty is ``front-loaded''.
This analysis serves two purposes. First, the properties are economically meaningful because they will also apply to any solution to the original problem $\original$, as we will verify in \cref{subsec:solving-original}. Second, when verifying the connection between $\inner$ and $\original$, our arguments will make use of the properties of $\inner$ derived here.

In stating the result, we will say that a constraint is \textit{slack (binding)} to mean that the corresponding Lagrangian multiplier must be zero (nonzero).


\begin{thm}[Provision of Certainty] \label{thm:inner}
    Suppose $y$ solves $\inner$ for some $t < T$.
    The following statements are true:
    \begin{enumerate}\itemsep1em
        \item It must be that 
        \begin{align}
            \dfrac{v_y(0,y(\bfb_T),t)}{u_y(0,y(\bfb_T),t)}\leq \dfrac{v_y(1,y(\bfa),t)}{u_y(1,y(\bfa),t)}. \label{eq:thm1-a}
        \end{align}
        Furthermore, suppose $y(\bfb_T)>0$ and $y(\bfa)>0$. Then, the constraint $U_t \geq U_T$ is slack (binding) if \eqref{eq:thm1-a} holds as an equality (a strict inequality).
        \item It must be that
        \begin{align}
            y(\bfb_{t+1}) \geq y(\bfb_{t +2}) \geq \cdots \geq y(\bfb_T). \label{eq:thm1-b}
        \end{align}
          If $y(\bfb_{t+s}) = y(\bfb_{t+s+1}) >0$, then the constraint $U_t \geq U_{t+s}$ is slack. If $y(\bfb_{t+s}) > y(\bfb_{t+s+1})$, then $U_t \geq U_{t+s}$ is binding.
          \item It must be that $u(1,y(\bfa),t) \geq u(0,y(\bfb_{t+1}),t)$. 
    \end{enumerate}
\end{thm}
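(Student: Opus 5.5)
The plan is to treat $\inner$ as a concave program and read all three statements off its KKT conditions, using an ordering of the Lagrange multipliers across histories. I will write $y_j:=y(\bfb_j)$, $y_a:=y(\bfa)$, and $\pi_h$ for the probability of history $\bfh$ conditional on $x_t=1$.

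\textbf{Reduction to a concave program.} First, histories $\bfb_1,\dots,\bfb_t$ do not enter $U_t$ or any $U_s$ with $s\ge t$. The agent never invests there, so by \cref{ass:P-payoff-main}(c) we may set $y_j=0$ for $j\le t$. This leaves the variables $(y_{t+1},\dots,y_T,y_a)$. For $s\in\{t+1,\dots,T\}$ the constraint $U_t\ge U_s$, after dividing by $\prod_{i\le t}p_i$, reads
\[
\sum_{j=t+1}^{s}\pi_{j}\,u(0,y_j,t)+\sum_{j=s+1}^{T}\pi_j\big[u(0,y_j,t)-u(0,y_j,s)\big]+\pi_a\big[u(1,y_a,t)-u(1,y_a,s)\big]\ \ge 0 .
\]
Each bracket equals $(\phi(t)-\phi(s))w(\cdot,y)$ plus a constant, with $\phi(t)-\phi(s)\ge 0$ and $w$ concave. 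The constraint $U_t\ge U_\infty=0$ has weight $\phi(t)$ on every history. So every constraint is concave in $y$ and the objective is strictly concave by $v_{yy}<0$. I will therefore use KKT conditions with multipliers $\lambda_s\ge0$ for $s\in\{t+1,\dots,T,\infty\}$. I need a constraint qualification; I will verify Slater by noting that raising all $y$'s uniformly makes the IR and NW constraints strictly slack, using $w_y>0$ and \cref{ass:A-payoff-main}(d),(e).

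\textbf{Stationarity and statements (a), (b).} Set $r_\theta(y):=-v_y(\theta,y,t)/u_y(\theta,y,t)$. This ratio is strictly increasing in $y$, because $v_{yy}<0$ and $u_y=\phi(t)w_y>0$ is nonincreasing. For an interior $y_j$, stationarity gives
\[
r_0(y_j)=\lambda_\infty+\sum_{s\ge j}\lambda_s+\sum_{t<s<j}\lambda_s\Big(1-\tfrac{\phi(s)}{\phi(t)}\Big).
\]
For $y_a$ it gives
\[
r_1(y_a)=\lambda_\infty+\sum_{t<s\le T}\lambda_s\Big(1-\tfrac{\phi(s)}{\phi(t)}\Big).
\]
At a corner $y=0$, the equality becomes "$\ge$" on the left-hand side.

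Subtracting, $r_0(y_T)-r_1(y_a)=\lambda_T\phi(T)/\phi(t)\ge0$. This is \eqref{eq:thm1-a}, and in the interior case equality holds iff $\lambda_T=0$. Likewise $r_0(y_j)-r_0(y_{j+1})=\lambda_j\phi(j)/\phi(t)\ge 0$. By strict monotonicity of $r_0$ this gives $y_j\ge y_{j+1}$, with equality (and both positive) iff $\lambda_j=0$. The corner cases are handled as follows. If $y_{j+1}=0$ the inequality is trivial. If $y_{j+1}>0$, combining the equality at $j+1$ with the KKT inequality at $j$ still gives $y_j\ge y_{j+1}$. The same reasoning applies to (a) when one of $y_T,y_a$ is zero.

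\textbf{Statement (c), the expected obstacle.} This statement compares utility \emph{levels} across states, not marginal ratios. The KKT comparison of $\bfa$ with $\bfb_{t+1}$ yields only $r_1(y_a)\le r_0(y_{t+1})$, and this does not by itself order the utilities. I would argue by contradiction using a perturbation. Suppose $u(1,y_a,t)<u(0,y_{t+1},t)$; by \cref{ass:A-payoff-main}(e) this forces $y_a<y_{t+1}$, and by (b) $y_a<y_j$ for the relevant $j$. Consider the mean-utility-preserving move that raises $y_a$ and lowers the $y_j$'s at the top of the nested block, keeping $U_t$ fixed.

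The first task is to check the constraints. Because $U_s$ for $s>t$ puts weight $\phi(s)/\phi(t)\le1$ on $\bfa$ but excludes early breakdowns, the move must be calibrated, possibly lowering the $y_j$ for $j$ up to the first binding $s$ rather than $y_{t+1}$ alone. I would then check that the change in each $U_s$ is nonpositive, using APM (\cref{ass:A-payoff-main}(d)) applied to the lottery of payoffs after $s$. The second task is to show the principal gains, using $v_y(1,\cdot)\ge v_y(0,\cdot)$ from \cref{ass:P-payoff-main}(f) and concavity, given that $y_a<y_j$.

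If the direct perturbation fails, my fallback is to use the KKT conditions together with complementary slackness of the binding constraints. I would show that $U_t\ge U_{t+1}$ binding, with the ordering from (b), forces the period-$t$ lottery to satisfy the required dominance. I expect getting this step right to be the main difficulty.
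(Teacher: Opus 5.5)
Your handling of (a) and (b) follows the paper's argument: KKT conditions for the concave program $\inner$, comparing stationarity at adjacent histories, where the multiplier expressions differ by $\lambda_j\phi(j)/\phi(t)$ (resp.\ $\lambda_T\phi(T)/\phi(t)$). Two repairs are needed there.

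First, $r_\theta$ is not globally increasing; when $v_y>0$ and $w_{yy}<0$ it can fall. It is strictly increasing on $\{y: v_y(\theta,y,t)\le 0\}$ and negative below that set. That suffices, because the multiplier expressions are nonnegative.

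Second, the corner $y(\bfa)=0$ in (a) is not ``the same reasoning.'' There the KKT inequality at $\bfa$ bounds $r_1(0)$ from \emph{below}, which is the wrong direction. The paper instead compares $r_0(y(\bfb_T))$ with $r_1(0)$ directly, using $v_y(1,0,t)<0$, concavity, and the cross-state conditions \cref{ass:A-payoff-main}(e) and \cref{ass:P-payoff-main}(f).

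The genuine gap is (c). You leave it unproved, and your primary route fails whenever some constraint $U_t\ge U_{t+k}$, $k\le T-t$, holds with equality. Take any move that raises $y(\bfa)$ by $\delta_a>0$ and lowers $y(\bfb_j)$ by $\delta_j\ge 0$ for $j\in J$, with $\pi_a w_y(1,y(\bfa))\delta_a=\sum_{j\in J}\pi_j w_y(0,y(\bfb_j))\delta_j$, so that $U_t$ is unchanged to first order. Then for each $s\in\{t+1,\dots,T\}$ the first-order change in $U_s$ is $\phi(s)\sum_{j\in J,\,j\le s}\pi_j w_y(0,y(\bfb_j))\delta_j\ge 0$.

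So lowering policies ``up to the first binding period $t+k$'' strictly raises $U_{t+k}$ and breaks the binding constraint. The same is true of lowering $y(\bfb_{t+1})$ alone. This is exactly the mechanism by which certainty discourages waiting. APM is a statement about payoff levels, so it cannot offset a first-order increase.

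The only admissible moves in your class lower policies after the last period $\hat s$ with $U_t=U_{\hat s}$. For those $j$, stationarity gives $r_0(y(\bfb_j))\le r_1(y(\bfa))$, so such moves are never strictly profitable to first order. If $U_t=U_T$, there are none at all. Your fallback mentions complementary slackness but does not identify the step that actually does the work.

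What is missing is a level argument rather than a perturbation, split into two cases.

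\emph{Some constraint holds with equality.} Take the smallest $s$ with $U_t=U_{t+s}$. Then (b) and complementary slackness give $y(\bfb_{t+1})=\dots=y(\bfb_{t+s})$. Hence $U_t=u(0,y(\bfb_{t+1}),t)\sum_{i=1}^{s}p(\bfb_{t+i})+U_{t|t+s}$, where $U_{t|t+s}$ is the contribution of $\bfb_{t+s+1},\dots,\bfb_T,\bfa$ evaluated at stopping date $t$. Since $U_{t+s}=U_t\ge U_\infty=0$, APM applied to that lottery gives $U_{t|t+s}\ge U_{t+s}$. Therefore $u(0,y(\bfb_{t+1}),t)\le 0$. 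By (b), every breakdown payoff at $t$ is then nonpositive, and $U_t\ge 0$ forces $u(1,y(\bfa),t)\ge 0$.

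\emph{All these constraints are strict.} Then all $y(\bfb_j)$, $j>t$, coincide and only $\lambda_\infty$ is active. Under the contradiction hypothesis, $y(\bfb_T)>y(\bfa)\ge 0$, so $y(\bfb_T)$ is interior. The first-order conditions then give $v_y(0,y(\bfb_T),t)/u_y(0,y(\bfb_T),t)=-\lambda_\infty\ge v_y(1,y(\bfa),t)/u_y(1,y(\bfa),t)$, with $v_y(1,y(\bfa),t)\le 0$. \cref{ass:P-payoff-main}(b),(f) and \cref{ass:A-payoff-main}(a),(e) reverse this inequality strictly, a contradiction. This second case is the only one in which your perturbation computation would also succeed.
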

\begin{proof}
    See \cref{proof:thm:inner}.
\end{proof}

Let us explain why the three statements of \cref{thm:inner} together imply that the principal provides certainty. 
First, part (a) says that when the constraint $U_t \geq U_T$ is binding, Borch rule is violated -- the policy $y(\bfa)$ in the good state is inefficiently low, and the policy $y(\bfb_T)$ in the bad state is inefficiently high.
Intuitively, a higher $y(\bfb_T)$ increases the agent's expected payoff if he stops at $t$, but not if he waits until $T$. Hence, raising $y(\bfb_T)$ helps satisfy $U_t \geq U_T$.

Part (b) says that policy incentives are \textit{front-loaded}: conditional on the state being bad, the principal sets an even higher policy if the breakdown occurred earlier (see \Cref{fig:subs-theorem1}). The intuition is similar: 
raising policies $y(\bfb_{t+1})$ through $y(\bfb_{t+s})$ increases the agent's expected payoff if he stops at $t$, but not if he waits until $t+s$, so it helps satisfy the constraint $U_t \geq U_{t+s}$.


Parts (a) and (b) show that the optimal policy rule is distortionary. Except in the trivial case where the constraint $U_t\geq U_{t+s}$ is slack for every $s=1,\ldots,T-t$, at least one of the inequalities in \eqref{eq:thm1-a} or \eqref{eq:thm1-b} must be strict. Then, both the principal and the agent would be made better off if they could agree that the principal would raise $y(\bfa)$ and lower some $y(\bfb_{t+s})$, and the agent would continue to adopt the stopping time $\tau_t$. Such a Pareto improvement is not feasible because it would lead the agent to profitably deviate to a later stopping time.




We interpret the above-mentioned distortion -- $y(\bfa)$ being too low and $y(\bfb_{t+s})$ being too high -- as the principal \textit{providing certainty} to the agent. 
Indeed, (b) and (c) imply that the agent's ex post payoff $u(\t,y(\bfh),t)$ is higher under $\bfa$ than $\bfb_{t+s}$, and $u$ is increasing in $y$, so the distortion reduces uncertainty in the agent's ex post utility levels.\footnote{To be precise, consider the distribution of the agent's ex post payoff $u(\t,y(\bfh),\tau_t)$  under the solution $(y,\tau_t)$. It is possible to raise $y(\bfa)$ and lower $y(\bfb_T)$ in such a way that this distribution becomes more dispersed in the convex order (which implies the agent is indifferent ex ante), while the principal is ex ante strictly better off.} 
Likewise, our interpretation of (b) is that the principal front-loads certainty by promising higher utility levels for earlier breakdowns. In the special case where the agent's payoff is state-independent (as in the motivating example or \cref{ex:procurement}), (b) and (c) imply that $y(\bfa) \geq y(\bfb_{t+s})$, so the principal provides certainty about not only the agent's utility level, but also the policy itself.

\begin{figure}[t]
    \centering
    \begin{subfigure}[t]{0.48\linewidth}
    \centering
        \includegraphics[width=1\linewidth]{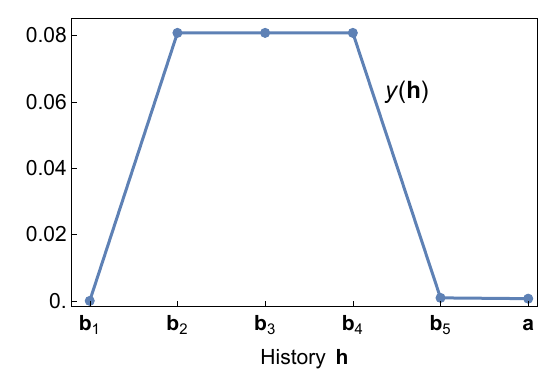}
        \caption{Solution to $[\ms P(1)]$}
    \end{subfigure}
    \hfill
    \begin{subfigure}[t]{0.48\linewidth}
    \centering
        \includegraphics[width=1\linewidth]{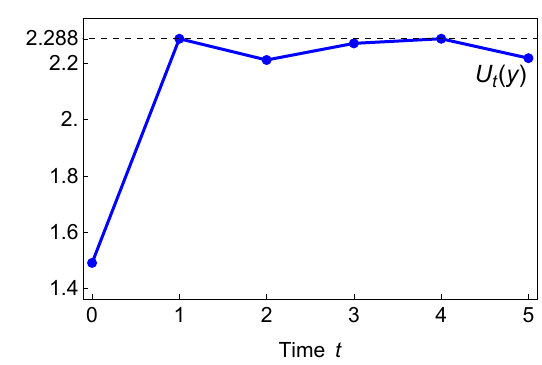}
        \caption{$U_t(y)$}
    \end{subfigure}
    \caption{Solving an inner problem (\Cref{ex:renewing-subsidies})}
    \medskip
    \small
    \justifying
    \noindent The figure is based on \cref{ex:renewing-subsidies} with $T=5$.
    Panel (a) depicts a solution $y$ to the inner problem $[\ms P(1)]$.
    Since the agent does not stop if the history is $\bfb_1$, the policy $y(\bfb_1)$ is irrelevant and may be set to 0. For $t\geq 2$, the principal provides certainty by raising the policy under the bad state ($y(\bfb_t) > y(\bfa)$), and certainty is front-loaded ($y(\bfb_t)$ is nonincreasing in $t$).
    
    Panel (b) plots the agent's expected payoff $U_t(y)$ from adopting $\tau_t$ under policy $y$, for each $t$.
    Comparing panels (a) and (b), we see that the policies at two adjacent histories are equal to each other if and only if the corresponding constraint is slack. That is, we have $y(\bfb_4)>y(\bfb_5)$ because $U_1 = U_4$, and we have $y(\bfb_2)=y(\bfb_3)=y(\bfb_4)$ because $U_1 > U_2$ and $U_1 > U_3$.
    (Parameters: $(p_t)_{t=1}^5=(0.45,0.8,0.7,0.8,0.98)$, $\delta=0.95, I=5.5, s=0.8, V=0.002$ and $c(y)=4y^3$.)
    \label{fig:subs-theorem1}
\end{figure}

\cref{thm:inner} implies that the second-best policy features distortions that do not arise under first-best. However, the second-best policy is not in general Pareto-dominated by the first-best policy. This is because the agent's expected payoff may be higher under second-best, as we will see in \cref{sec:rent}.




\subsection{Solving the Original Problem}\label{subsec:solving-original}

Suppose one solves the inner problem $\inner$ for each $t$, and then selects $t$ which maximizes the principal's payoff. We now argue that this two-step approach is essentially equivalent to solving the original problem $\original$. We will first show that taking the two-step approach is equivalent to solving the auxiliary problem $\auxiliary$, and then argue that $\auxiliary$ is equivalent to $\original$.

We start with a definition.
\begin{defn}\label{def:standard-sol}
        A solution $y$ to $\inner$ is \textit{standard} if
        \begin{align}\label{eq:standard-sol}
           y(\bfb_s)=0 \quad \forall s \leq t;
\qquad
y(\bfa)=0 \ \text{if } t=\infty.
        \end{align}
        A solution $(y,\tau_t)$ to $\original$ or $[\ms P]$ is \textit{standard} if it satisfies \eqref{eq:standard-sol}.
\end{defn}
A standard solution $(y,\tau_t)$ has two important properties. First, $\tau_t$ must be a simple stopping time (notice that stopping times which solve $[\ms P_0]$ are not guaranteed to be simple). Second, if a breakdown occurs before the agent is supposed to consider stopping, the principal must provide the zero policy. This is natural: the zero policy provides the maximum possible incentives for the agent not to stop, and, by \cref{ass:P-payoff-main}(c), is the principal's favorite policy conditional on the agent not stopping.


We have the following lemma.


\begin{lem}\label{lem:standard-solutions-exist}
    The following statements are true:
    \begin{enumerate}
        \item For $t\in \Tui$, if $\inner$ has a nonempty feasible set, then a standard solution to $[\ms P(t)]$ exists.
        \item A standard solution to $[\ms P]$ exists.
    \end{enumerate}
\end{lem}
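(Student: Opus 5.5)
Part (a) splits into two steps: existence of some solution to $\inner$, then a modification that makes it standard without losing optimality.

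\textbf{Existence.} Take $t\in T$ with a nonempty feasible set. The objective $\E[v(\t,y(\bfh),\tau_t)]$ depends only on $y(\bfh)$ for histories with $x_t=1$, that is, on $\bfa$ and on $\bfb_s$ with $s>t$. The constraints $U_t\ge U_{t'}$ for $t'\ge t$ involve $y(\bfb_s)$ with $s\le t$ only through the term $U_\infty=0$, and not at all. Indeed, $U_{t'}$ conditions on $x_{t'}=1$, which for $t'\ge t$ excludes $\bfb_s$ with $s\le t$. So the policy on $\{\bfb_s:s\le t\}$ is irrelevant to both objective and constraints, and we may set it to $0$.

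It remains to show the supremum over the relevant coordinates is attained. I would argue this by compactness.
\begin{itemize}
\item Fix a feasible $\hat y$. By Assumption~\ref{ass:P-payoff-main}(c), $v(\t,y,t)\to-\infty$ as $y\to\infty$, and each $v(\t,\cdot,t)$ is bounded above on $\R_+$ because it is concave and tends to $-\infty$.
\item Every relevant history has positive probability. Hence any $y$ with some relevant coordinate above a large bound $M$ yields an objective below that of $\hat y$.
\item We may therefore restrict to $[0,M]^{T-t+1}$. The constraint set is closed there, since $u$ is continuous in $y$, and the objective is continuous.
\item Weierstrass then gives a maximizer.
\end{itemize}

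\textbf{The case $t=\infty$.} The constraint set is trivial, and the objective is $\E[v(\t,y(\bfh),\infty)]$. By Assumption~\ref{ass:P-payoff-main}(c), $v_y(\t,\cdot,\infty)\le 0$, so $y\equiv 0$ is optimal and standard.

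\textbf{Part (b).} I would show that $\auxiliary$ reduces to $\max_t$ of the $[\ms P(t)]$ values, restricted to those $t$ whose inner solution is also feasible for $\auxiliary$. The concern is that an inner solution may violate $U_t\ge U_{t'}$ for some $t'<t$.

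I would avoid that issue directly. $\auxiliary$ has finitely many candidate $t$, and for each $t$ the feasible set
\[
F_t=\{y: U_t\ge U_{t'}\ \forall t'\}
\]
is closed. The same compactness argument as in part (a) applies to $F_t$: use a feasible point, bound the relevant coordinates, and set the irrelevant coordinates to $0$. The irrelevant coordinates now need more care, because $U_{t'}$ with $t'<t$ does involve $y(\bfb_s)$ for $t'<s\le t$.

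The key step is to lower $y(\bfb_s)$, for $s\le t$, to $0$. This leaves $U_t$ and the objective unchanged. It weakly lowers every $U_{t'}$ with $t'<s$, since $w_y>0$ and $\phi>0$ imply that $u$ is increasing in $y$. It leaves $U_{t'}$ with $t'\ge s$ unchanged. So feasibility is preserved.

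\textbf{Nonemptiness of $\auxiliary$.} At least one $F_t$ is nonempty: $t=\infty$ with $y\equiv 0$. We need $0\ge U_{t'}(\mathbf 0)$ for all $t'$, which may fail. If it fails, choose $t^*\in\arg\max_{t'}U_{t'}(\mathbf 0)$; then $\mathbf 0\in F_{t^*}$, and standardizing keeps $y=\mathbf 0$.

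The solution of $\auxiliary$ then maximizes over the finitely many nonempty $F_t$, taking the best attained value. For $t=\infty$, lowering $y(\bfa)$ to $0$ is also feasibility-preserving by monotonicity of $u$ in $y$, and weakly improves the objective. That gives standardness.

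\textbf{Main obstacle.} The main obstacle is the coercivity bound. It requires the objective to be bounded above uniformly, which follows from concavity plus the limit $-\infty$ in Assumption~\ref{ass:P-payoff-main}(b)--(c), so that large policies cannot be optimal.
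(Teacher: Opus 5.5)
Your argument is essentially the paper's: you bound the on-path coordinates using $v(\theta,y,t)\to-\infty$, the positive probability of every history with $x_t=1$, and a feasible reference point, then apply Weierstrass and zero out $y(\bfb_s)$ for $s\le t$. The paper omits part (b) as ``analogous''; your observation that lowering $y(\bfb_s)$ only weakly lowers $U_{t'}$ for $t'<s$ correctly fills that in, as do your nonemptiness argument via $\mathbf 0\in F_{t^*}$ and your treatment of $t=\infty$.

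One correction, which applies both in (a) and in the standardization step of (b). The objective is not independent of $y(\bfb_s)$ for $s\le t$: under $\tau_t$ those histories contribute $p(\bfb_s)\,v(0,y(\bfb_s),\infty)$, which can be strictly negative. Setting these coordinates to $0$ is justified because it weakly \emph{raises} the objective, by Assumption~\ref{ass:P-payoff-main}(a),(c). That is exactly the paper's reasoning, not irrelevance, and with this fix the proof is complete.
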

\begin{proof}
    See \cref{proof:lem:standard-solutions-exist}.
\end{proof}

Let $W(t)$ denote the value of $[\ms P(t)]$, with $W(t) \:= -\infty$ if the feasible set of $\inner$ is empty. The following theorem establishes the relationship between $\inner$ and $\auxiliary$.

\begin{thm}\label{thm:inner=aux}
    The following statements are true:
    \begin{enumerate}
        \item If $(y,\tau_{t})$ solves $[\ms P]$, then $t\in \argmax_{s} W(s)$, and $y$ solves $[\ms P(t)]$.
        \item Suppose $t \in \argmax_s W(s)$, and $y$ is a standard solution to $\inner$. Then, there exists $t' \in \argmax_s W(s)$ such that $t' \leq t$, $y$ solves $[\ms P(t')]$, and the pair $(y,\tau_{t'})$ solves $[\ms P]$.
    \end{enumerate} 
\end{thm}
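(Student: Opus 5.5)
Both parts rest on comparing feasible sets. If $(y,\tau_t)$ is feasible for $[\ms P]$, then $y$ is feasible for $\inner$, because the constraints of $\inner$ are a subset of those of $[\ms P]$. Hence the value of $[\ms P]$ is at most $\max_s W(s)$. Part (a) then follows once we establish the reverse inequality $\mathrm{val}[\ms P]\ge \max_s W(s)$. Given that equality, a solution $(y,\tau_t)$ of $[\ms P]$ attains $\max_s W(s)$ while being feasible for $\inner$. So $W(t)\ge \mathrm{val}[\ms P]=\max_s W(s)$, and $y$ solves $\inner$. The reverse inequality comes out of the construction in part (b), so the real work lies there.

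For part (b), fix $t\in\argmax_s W(s)$ and a standard solution $y$ of $\inner$. The constraints $U_t\ge U_{t'}$ for $t'\ge t$ (including $\infty$) hold by feasibility. Let $t'$ be the smallest index in $\argmax_{s\in\Tui} U_s(y)$. Then $t'\le t$, and $\tau_{t'}$ is an optimal simple stopping time for the agent among all simple stopping times, so $y$ is feasible for $[\ms P(t')]$. Indeed, $(y,\tau_{t'})$ is feasible for $[\ms P]$, since $U_{t'}\ge U_s$ for all $s$. If $t'=t$, we are done.

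Otherwise $t'<t$, and $U_{t'}(y)=U_t(y)$ because $U_t\ge U_{t'}$ by feasibility. Here the main step is to show
\[
\E[v(\t,y(\bfh),\tau_{t'})]\ \ge\ \E[v(\t,y(\bfh),\tau_t)].
\]
Granting this, $W(t')\ge W(t)=\max_s W(s)$, so $t'\in\argmax W$ and $y$ solves $[\ms P(t')]$. Then $(y,\tau_{t'})$, being feasible for $[\ms P]$ with value $\max_s W(s)\ge \mathrm{val}[\ms P]$, solves $[\ms P]$. This also delivers the reverse inequality needed in part (a).

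To prove the key inequality, compare the two expectations history by history.
\begin{itemize}
\item On $\bfb_s$ with $s\le t'$, neither stopping time invests, so the terms coincide.
\item On $\bfb_s$ with $t'<s\le t$, $\tau_{t'}$ invests at $t'$ while $\tau_t$ never invests. Standardness gives $y(\bfb_s)=0$, and Assumption~\ref{ass:P-payoff-main}(d) gives $v(0,0,t')\ge v(0,0,\infty)$.
\item On the remaining histories $\bfb_s$ with $s>t$ and $\bfa$, both stopping times invest: at $t'$ versus at $t$. Here apply APM of $v$ (Assumption~\ref{ass:P-payoff-main}(e)) to the weighted sum of $v(\t,y(\bfh),t)$ over these histories, with weights given by their probabilities. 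This requires the premise that this weighted sum is nonnegative.
\end{itemize}
Establishing that premise is the main obstacle. I would first get it from optimality of $y$ in $\inner$ combined with $t\in\argmax W$. Because $y$ is standard, on $\{x_t=0\}$ the policy is zero and $v(\cdot,0,\infty)=0$. The conditional value on $\{x_t=1\}$ must then be nonnegative. Otherwise the principal could deviate to $t=\infty$ with the zero policy, which is feasible and yields $W(\infty)\ge 0$. This would contradict $t\in\argmax W$, using that $W(t)$ equals the probability of $\{x_t=1\}$ times this conditional value. If this argument fails, I would instead work with the agent side: APM of $u$ together with $U_{t'}=U_t\ge 0$, plus the binding structure from Theorem~\ref{thm:inner}, to control the principal's terms.
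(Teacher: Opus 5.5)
Your main line is essentially the paper's proof: take $t'$ maximizing $U_s(y)$, then show $\E[v(\theta,y(\bfh),\tau_{t'})]\ge W(t)$. On breakdowns in $(t',t]$ you use standardness with \cref{ass:P-payoff-main}(d), and on the later histories you use APM, with the premise $V_t(y)\ge 0$ obtained from $W(\infty)=0$ and $t\in\argmax_s W(s)$. Part (a) then follows from the value comparison.

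Three small repairs are needed. First, the assertion $U_{t'}(y)=U_t(y)$ ``by feasibility'' is unjustified, since the constraints of $\inner$ only cover $s\ge t$. It can fail, e.g.\ when $t=\infty$, $y\equiv 0$, and the agent strictly prefers some $\tau_{t'}$ under the zero policy; fortunately your main argument never uses it, only the fallback does. Second, the case $t=\infty$ must be handled separately: APM is not available there, but standardness gives $y\equiv 0$, and \cref{ass:P-payoff-main}(d) suffices, as in the paper. Third, deducing (a) from (b) requires a standard solution to exist at some maximizer of $W$, which is \cref{lem:standard-solutions-exist}(a).
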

\begin{proof}
    See \cref{proof:thm:inner=aux}.
\end{proof}
Recall that the inner problem $\inner$ ignores the agent's deviations to earlier stopping times. \cref{thm:inner=aux} says that this is without loss. Specifically, part (a) of the theorem says that any solution to $[\ms P]$ can be obtained by the two-step procedure of first solving $[\ms P(t)]$ at each $t$ and then maximizing over $t$. This implies that the necessary properties of a solution to $\inner$, derived in \cref{thm:inner}, must also be necessary for a solution to $\auxiliary$.
Part (b) establishes a partial converse: if one solves $[\ms P(s)]$ for each $s$ and then maximizes over $s$, the policy rule obtained in this manner solves $[\ms P]$, with the caveat that the agent might adopt an earlier stopping time ($\tau_{t'}$ rather than $\tau_t$). Part (b) will prove useful for obtaining solutions to various examples.

Intuitively, we may ignore the agent's deviations to earlier stopping times for the following reason.
Under a standard solution to $\inner$, the principal sets the policy to be zero if a breakdown occurs before $t$. However, as long as the policy is zero, the principal prefers that the agent stops, even if the state is bad (\cref{ass:P-payoff-main}(d)). Moreover, conditional on the policy being nonzero, APM implies that the principal prefers earlier stopping. Hence, if the agent wishes to deviate to an earlier stopping time $\tau_{t'}$, the principal should actually let the agent deviate.

We now establish the relationship between $\auxiliary$ and $\original$.


\begin{thm}[Equivalence between ${[\ms P]}$ and ${[\ms P_0]}$]
\label{thm:aux=original}
The following statements are true:
\begin{enumerate}
    \item If $(y,\tau_t)$ is a standard solution to $[\ms P]$, then it solves $[\ms P_0].$ 
    \item If $(y,\tau_t)$ solves $[\ms P_0]$, then $(y,\tau_t)$ solves $[\ms P].$ 
\end{enumerate}
\end{thm}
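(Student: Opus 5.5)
The plan is to prove part (b) from part (a), and to prove part (a) by showing that non-simple deviations are never strictly profitable under a standard solution.

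\textbf{Part (b).} Since every simple stopping time lies in $\ms T$, the constraint set of $[\ms P]$ is a subset of the incentive constraints in $[\ms P_0]$. Hence a solution $(y,\tau_t)$ of $[\ms P_0]$ with $\tau_t$ simple is feasible in $[\ms P]$. By \cref{lem:standard-solutions-exist}(b), $[\ms P]$ has a standard solution. By part (a), that solution is feasible in $[\ms P_0]$, so the value of $[\ms P]$ is at most the value of $[\ms P_0]$. The reverse inequality holds because $(y,\tau_t)$ is feasible in $[\ms P]$ and attains the value of $[\ms P_0]$. So $(y,\tau_t)$ solves $[\ms P]$.

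\textbf{Part (a): feasibility.} Let $(y,\tau_t)$ be a standard solution of $[\ms P]$. The objectives of the two problems coincide on simple stopping times, and $[\ms P_0]$ has a weakly larger feasible set of principal-chosen stopping times. So optimality in $[\ms P_0]$ reduces to two claims: $(y,\tau_t)$ is feasible in $[\ms P_0]$, and no non-simple $\tau$ gives the principal more. I would settle the second claim by a reduction to feasibility. Take any $(y',\tau)$ feasible in $[\ms P_0]$. I would replace it by a simple stopping time plus a standard modification of $y'$ (zero policy after early breakdowns) that is weakly better for the principal. This uses \cref{ass:P-payoff-main}(c),(d),(e), in the same way that \cref{lem:standard-solutions-exist} is presumably proved. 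The core of the proof is therefore feasibility: $U_t(y)\ge \E[u(\t,y(\bfh),\tau')]$ for every $\tau'\in\ms T$.

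\textbf{Structure of a deviation.} The no-breakdown path is deterministic, so any $\tau'$ is described by a stopping date $r\in\Tui$ on that path, together with, for each breakdown date $s\le r$, a stopping decision at some date $\ge s$ on history $\bfb_s$. After a breakdown the state is known to be $0$. So by APM applied to a single lottery, the agent optimally either stops immediately at $s$, if $u(0,y(\bfb_s),s)\ge0$, or never stops. For $s\le t$, standardness gives $y(\bfb_s)=0$, and \cref{ass:A-payoff-main}(f) gives $u(0,0,\cdot)<0$, so the agent never stops there. Thus the best deviation has the form $\sigma_{r,S}$: stop at $r>t$ on the no-breakdown path, and stop immediately at $\bfb_s$ for $s\in S\subseteq\{t+1,\dots,r\}$.

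\textbf{Bounding $\sigma_{r,S}$ (main obstacle).} The step I expect to be hardest is showing $U_t\ge$ payoff of $\sigma_{r,S}$. I would first use \cref{thm:inner=aux}(a) to note that $y$ solves $[\ms P(t)]$, so \cref{thm:inner} applies: $y(\bfb_s)$ is nonincreasing in $s$ for $s>t$, and $u(1,y(\bfa),t)\ge u(0,y(\bfb_{t+1}),t)$. Then I decompose $U_t$ as
\[
U_t=\sum_{s=t+1}^{r}\Pr(\bfb_s)\,u(0,y(\bfb_s),t)+A_t ,
\]
where $A_t$ collects the histories with $x_r=1$, so that $U_r$ is the date-$r$ analogue $A_r$. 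The payoff of $\sigma_{r,S}$ is $\sum_{s\in S}\Pr(\bfb_s)u(0,y(\bfb_s),s)+A_r$.

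\textbf{Key idea.} Define $S^\ast=\{s\in(t,r]: u(0,y(\bfb_s),t)\ge 0\}$. Because $y(\bfb_s)$ is monotone and $w_y>0$, the set $S^\ast$ is an initial segment $\{t+1,\dots,k\}$. APM then gives $u(0,y(\bfb_s),t)\ge u(0,y(\bfb_s),s)$ whenever the latter is nonnegative, so $S\cap S^\ast$ is handled. The remaining terms must be absorbed by comparing with the constraints $U_t\ge U_{k}$ and $U_t\ge U_r$: the breakdown terms beyond $k$ are negative at $t$, and they are exactly the terms that the comparison $U_t\ge U_{k+1}$ removes. I would try to write the payoff of $\sigma_{r,S}$ as bounded by a convex combination of $U_{k+1}$, $U_r$, and terms dominated by APM, and then conclude from $U_t\ge U_{t'}$ for all $t'$. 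If this exact pairing fails, my fallback is induction on the number of non-simple stopping nodes. At each step I would remove the latest breakdown stop and show, via APM and front-loading, that this weakly raises the agent's payoff or reduces the deviation to a simple one.
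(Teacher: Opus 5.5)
Your part (b) is correct and is the paper's argument. The split of (a) into ``feasibility in $[\ms P_0]$'' and ``no gain from non-simple $\tau$'' also matches the paper; these are \cref{lem:p_sol_is_feas} and \cref{lem:principal_wants_simple}. Your reduction of deviations to the form $\sigma_{r,S}$ is fine.

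\textbf{Feasibility.} The step you flag as the main obstacle cannot be closed with the ingredients you list: the simple-time constraints, front-loading, \cref{thm:inner}(c), and APM. So neither your pairing nor your fallback induction can work. Take $T=2$, $p_1=p_2=\tfrac12$, $t=0$, and $u(\t,y,t)=y-1$, so $\phi\equiv1$ and $g\equiv-1$, which satisfies \cref{ass:A-payoff-main}. Set $y(\bfb_1)=3$, $y(\bfb_2)=\tfrac12$, $y(\bfa)=3$. Then $U_0=\tfrac{11}{8}$, $U_1=\tfrac38$, $U_2=\tfrac12$, $U_\infty=0$. This $y$ is standard, front-loaded, and satisfies \cref{thm:inner}(c). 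Yet the deviation ``stop right after a period-1 breakdown, otherwise stop at period 2 iff $x_2=1$'' yields $\tfrac12\cdot2+\tfrac14\cdot2=\tfrac32>U_0$.

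What rules this out is the complementary-slackness clause of \cref{thm:inner}(b): here $U_0>U_1$ although $y(\bfb_1)>y(\bfb_2)$. The paper's proof is built on that clause.
\begin{itemize}
\item If some constraint binds, take the earliest binding $t'$. The policy is flat on $\bfb_{t+1},\dots,\bfb_{t'}$, so a profitable stop after a breakdown at $s>t$ would force $U_t>U_{t'}$. The cases are $t'\le s$, $s<t'<\infty$, and $t'=\infty$, the last using \cref{thm:inner}(c). Hence no post-breakdown stop is ever profitable, and every deviation is dominated by its simple part (\cref{lem:2_part_a}).
\item If no constraint binds, the post-$t$ policy is a constant $y_0$. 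Either breakdown stops are never profitable, or $u(0,y_0,t)>0$. In the latter case the first-order conditions give $y(\bfa)\ge y_0$, so $u(1,y(\bfa),t)>0$. Then $\tau_t$ dominates, history by history, any deviation that stops on the no-breakdown path after $t$.
\end{itemize}

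\textbf{Principal side.} This is not a \cref{lem:standard-solutions-exist}-type argument, and your modification fails. A non-simple $\tau$ has the agent stop after some breakdown, say on $\bfb_r$ in period $s$. The principal may strictly gain there: $v(0,y'(\bfb_r),s)>0$ is allowed, since $v(0,0,s)\ge0$ by \cref{ass:P-payoff-main}(a),(d).
\begin{itemize}
\item Zeroing the policy there and inducing a simple time that does not stop after that breakdown discards this payoff. For example, take $T=1$, $\tau\equiv1$, $y'\equiv Y$ with $u(0,Y,1)\ge0$ and $v(0,Y,1)>0$.
\item Moving to an earlier simple time without changing $y'$ can break the agent's incentive constraint. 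This happens through breakdowns where $\tau$ did not stop and the policy is low.
\end{itemize}

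The missing construction is in \cref{lem:principal_wants_simple}. Start from an optimal $(y',\tau)$ and pick the principal's favorite post-breakdown stop $(r,s)$. Set the policy at every breakdown up to $\tau$'s no-breakdown stopping date equal to $y'(\bfb_r)$, or everywhere if that date is $\infty$, and induce $\tau_0$.
\begin{itemize}
\item Agent obedience, against all stopping times including non-simple ones, uses $u(0,y'(\bfb_r),s)\ge0$, positive monotonicity, and APM.
\item The principal's improvement uses $v(0,y'(\bfb_r),0)\ge v(0,y'(\bfb_r),s)\ge v(0,y'(\bfb_{\hat r}),\hat s)$, $v(0,\cdot,\infty)\le0$, and $V_{0|t}\ge V_t$ by APM.
\item That last step first requires showing $v(0,y'(\bfb_r),s)\ge0$ and $V_t(y')\ge0$ at an optimum.
\end{itemize}
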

    
\begin{proof}
    See \cref{app:thm-2-proof}.
\end{proof}

\cref{thm:aux=original}, together with \cref{thm:inner=aux}, verifies that it is sufficient to analyze the inner problem $\inner$ to solve the original problem $[\ms P_0]$, in the following sense. First, \cref{thm:inner=aux}(b) and \cref{thm:aux=original}(a) imply that, to find a standard solution to $\original$, one can use the two-step procedure: solve each $\inner$, and maximize the principal's payoff across $t$. Second, \cref{thm:inner=aux}(a) and \cref{thm:aux=original}(b) imply that any standard solution to $\original$ must satisfy all the necessary properties of a solution to $\inner$, including those stated in \cref{thm:inner}. In particular, under any standard solution to $\original$, the principal provides certainty to the agent.\footnote{In \cref{thm:aux=original}, it is not possible to discard the assumption that $(y,\tau_t)$ is a standard solution in (a), or that $\tau_t$ is a simple stopping time in (b) (see \cref{app:non-inclusion}).}


Let us sketch the proof of \cref{thm:aux=original}. \cref{thm:aux=original} comprises two main ideas, contained in two supporting lemmas. First, we would like to show that the agent cannot exploit any standard solution to $[\ms P],$ by using a non-simple stopping time. Thus, any standard solution to $[\ms P]$ remains feasible for $[\ms P_0].$ This is proved in \cref{lem:p_sol_is_feas}. Second, we would like to show that the principal would not \textit{want} to induce the agent to use a non-simple stopping time; this is proved in \cref{lem:principal_wants_simple}.

\begin{lem}
\label{lem:p_sol_is_feas}
    Let $(y,\tau_{t^*})$ be a standard solution to $[\ms P].$ Then, $\tau_{t^*}$ solves the agent's problem given policy rule $y$:
    \[\tau_{t^*} \in \argmax_{\tau\in\ms T} \E[u(\t,y(\bfh),\tau)].\]
\end{lem}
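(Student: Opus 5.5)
The plan is to show that every stopping time $\tau\in\ms T$ is weakly dominated, given $y$, by some simple stopping time. Since $(y,\tau_{t^*})$ is feasible for $[\ms P]$, we have $U_{t^*}\ge U_{s}$ for every $s\in\Tui$, and this gives the lemma.

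\textbf{Parametrizing stopping times.} Because signals follow a breakdown process, any $\tau\in\ms T$ is summarized by two objects. The first is the period $s\in\Tui$ at which it stops on the no-breakdown path, i.e.\ on the event that $x_s=1$ and $x_{s'}=1$ for all $s'<s$. The second is, for each $r\le \min(s,T)$, the period $\sigma(r)\in\{r,\dots,T,\infty\}$ at which it stops after a breakdown in period $r$ (i.e.\ under history $\bfb_r$). After a breakdown the state is known to be bad, so the agent's ex post payoff under $\bfb_r$ is $u(0,y(\bfb_r),\sigma(r))$.

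\textbf{Step 1: immediate stopping or never.} Apply APM with a single lottery (one term, $\alpha=1$). If $u(0,y(\bfb_r),\sigma(r))\ge 0$, replacing $\sigma(r)$ by $r$ weakly raises the payoff. If $u(0,y(\bfb_r),\sigma(r))<0$, replacing $\sigma(r)$ by $\infty$ strictly raises it, since $u(\cdot,\cdot,\infty)=0$. So without loss $\sigma(r)\in\{r,\infty\}$.

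For $r\le t^*$ the solution is standard, so $y(\bfb_r)=0$, and \cref{ass:A-payoff-main}(f) gives $u(0,0,r)<0$. Hence the agent never stops after such breakdowns. Let $R=\{r:\sigma(r)=r\}$ be the set of breakdown periods after which the agent stops. If $R=\emptyset$, $\tau$ is the simple stopping time $\tau_s$ and we are done.

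\textbf{Step 2: induction on $|R|$.} Let $r=\min R$, so $r\ge 1$. Consider the node $x_{r-1}=1$, which is reached under $\tau$ without stopping, since $r\le s$ and the agent has not yet stopped on the no-breakdown path. The continuation payoff there, multiplied by the probability of reaching the node, is a finite positive combination
\[
C=\sum_{\bfh}\alpha_{\bfh}\,u(\theta_{\bfh},y(\bfh),\tau(\bfh)).
\]
Group its terms by stopping date $t_1<t_2<\dots<t_m$, all of which are $\ge r$. Then process the groups backward.
\begin{itemize}
\item If the aggregate of the last group is negative, replace its stopping date by $\infty$; this strictly raises the payoff.
\item Otherwise, APM lets us move that group to date $t_{m-1}$, which weakly raises its aggregate. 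The merged group at $t_{m-1}$ is then again either negative (drop it) or nonnegative (move it earlier).
\end{itemize}
Iterating down to date $r-1$ yields a stopping time $\tau'$ that is weakly better than $\tau$. Conditional on reaching $x_{r-1}=1$, $\tau'$ either stops immediately at $r-1$ for all continuation histories or never stops. Either way the new stopping time coincides with a stopping time whose no-breakdown stopping period is $r-1$ or $\infty$. Its breakdown set is $R'\subseteq R\setminus\{r,\dots\}$, which here is empty, because $r$ was the minimum of $R$. Hence $\tau'$ is simple (either $\tau_{r-1}$ or $\tau_\infty$), and $\E[u(\t,y(\bfh),\tau)]\le U_{t'}\le U_{t^*}$ for some simple $t'$.

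\textbf{Main obstacle.} The delicate step is the backward aggregation in Step 2. APM compares lotteries held at a common date, whereas the continuation from $x_{r-1}=1$ mixes stopping dates. We also need to check that the operations preserve adaptedness, i.e.\ that "stop at $r-1$" and "never stop" are valid decisions at that node. Both hold because every modification is made uniformly over the whole subtree following $x_{r-1}=1$.

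The lemma needs only that $(y,\tau_{t^*})$ is feasible for $[\ms P]$ and standard. In particular, \cref{thm:inner} is not needed. Standardness enters only to exclude the degenerate case where $\min R\le t^*$ carries a positive policy, and Step 1 already handles that case via \cref{ass:A-payoff-main}(f).
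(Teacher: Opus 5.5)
Step 1 is fine, but Step 2 has a genuine gap, and it cannot be repaired without using the optimality of $y$.

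\textbf{The failing step.} Read as a change of stopping rule, each merge in your backward aggregation means stopping at a no-breakdown node $\bfa_k$: first $k=t_{m-1}$, and finally $k=r-1$. That forces the agent to stop on \emph{every} history through $\bfa_k$. This includes breakdown histories $\bfb_j$ with $j>k$ and $j\notin R$, at which $\tau$ never stopped. Those terms contribute $0$ to $C$. APM says nothing about them, because it only lets you move terms that already share a common finite stopping date. After the move they contribute $p(\bfb_j)\,u(0,y(\bfb_j),k)$, which can be negative. Your remark that the final modification is uniform over the subtree secures adaptedness, not this payoff comparison.

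\textbf{A counterexample to the method.} Your claim that feasibility for $[\ms P]$ plus standardness suffices is false. Take $T=2$ and $p_1=p_2=1/2$. Let $u(\theta,y,t)=y+2\theta-1$ for $t\in T$ and $u=0$ at $t=\infty$; this satisfies \cref{ass:A-payoff-main}. Set $y(\bfb_1)=3$ and $y(\bfb_2)=y(\bfa)=0$. Then $U_0=1$, $U_1=0$, $U_2=1/4$ and $U_\infty=0$. So $(y,\tau_0)$ is feasible for $[\ms P]$ and (vacuously) standard. Yet stopping at $\bfb_{1,1}$ and at $\bfa_2$ gives $\frac12\cdot 2+\frac14\cdot 1=\frac54>U_0$. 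Your procedure collapses this rule to $\tau_0$ only by sweeping in $\bfb_2$, at a cost of $1/4$.

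\textbf{The missing ingredient.} What you need is the structure of an \emph{optimal} $y$. By \cref{thm:inner=aux}(a), $y$ solves $[\ms P(t^*)]$, so \cref{thm:inner} applies. The example above violates \cref{thm:inner}(b): it has $y(\bfb_1)>y(\bfb_2)$ but $U_0>U_1$. The paper uses front-loading and complementary slackness, splitting into two cases.
\begin{enumerate}
\item If some constraint $U_{t^*}\ge U_t$ binds, then $u(0,y(\bfb_{\hat s}),\hat t)\le 0$ for all $\hat s\le \hat t$ (\cref{lem:2_part_a}). So stopping after a breakdown never helps, and any rule is dominated by the simple stopping time with the same no-breakdown stopping date.
\item If no constraint binds, the policy is constant after $t^*$. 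Then either the same conclusion holds, or $\tau_{t^*}$ dominates every rule. For rules that wait past $t^*$ on the no-breakdown path, that dominance holds history by history.
\end{enumerate}
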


We prove \cref{lem:p_sol_is_feas} by showing that, under any standard solution to $[\ms P]$, the agent cannot receive strictly positive payoff from stopping following a breakdown. The idea of the argument is to exploit APM, in conjunction with the optimality properties of $y$ given by \cref{thm:inner}. Suppose that $(y,\tau_{t^*})$ is a standard solution to $[\ms P]$, and suppose that the agent could profitably stop after some breakdown; let the latest breakdown for which this is the case occur at $s$. Then, let $t$ be the earliest period where the agent's incentive constraint binds under $y.$ In the most interesting case, we have $t^*<s<t.$

By the complementary slackness result in \cref{thm:inner}, and by \cref{thm:inner=aux}(a) which implies that this result also applies to $\auxiliary$, we must have $y(\bfb_{t^*+1})=\ldots = y(\bfb_t)$. Thus, by playing $\tau_{t^*}$, the agent guarantees himself $y(\bfb_s)$ in the event of any breakdown between $t^*$ and $t$; since we assumed that stopping after a breakdown at $s$ gave him positive utility, by APM this event must also give him positive utility. Thus, $U_{t^*}$ must be strictly larger than $U_t$, contradicting the assertion that the incentive constraint at $t$ binds.

Next, \cref{lem:principal_wants_simple} shows that it is optimal for the principal to induce a simple stopping time in $[\ms P_0].$

\begin{lem}
\label{lem:principal_wants_simple}
There exists a policy $y^*$ and a simple stopping time $\tau_{t^*}$ such that $(y^*,\tau_{t^*})$ solves $[\ms P_0]$.
\end{lem}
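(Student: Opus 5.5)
Let $(y,\tau)$ be a solution to $[\ms P_0]$ with an arbitrary, possibly non-simple, $\tau\in\ms T$; if no solution exists, the argument applies to any feasible pair with $\tau$ non-simple. The plan is to construct a simple stopping time $\tau_{t^*}$ and a modified policy $y^*$ such that $y^*$ implements $\tau_{t^*}$ in $[\ms P_0]$ and gives the principal a weakly higher payoff. In the breakdown structure, a stopping time is pinned down by two objects:
\begin{itemize}
\item the stopping period $t^*\in T\cup\{\infty\}$ along the no-breakdown path, i.e.\ the first $t$ at which $\tau$ stops given $x_t=1$;
\item for each breakdown period $s\le t^*$, the (history-measurable) period $\sigma_s$ at which the agent stops after $\bfb_s$, if ever.
\end{itemize}
I define $\tau_{t^*}$ from the same $t^*$. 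It agrees with $\tau$ on $\bfa$ and on every $\bfb_s$ with $s>t^*$, and differs only after early breakdowns, where $\tau_{t^*}$ never stops.

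The candidate policy sets $y^*(\bfh)=y(\bfh)$ for $\bfh\in\{\bfa,\bfb_{t^*+1},\dots,\bfb_T\}$ and $y^*(\bfb_s)=0$ for $s\le t^*$.

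\textbf{Principal's payoff.} On histories where the two stopping times agree, her payoff is unchanged. On $\bfb_s$ with $s\le t^*$, she previously received $v(0,y(\bfb_s),\sigma_s)$ with $\sigma_s$ possibly finite, and now receives $v(0,0,\infty)=0$. I must show the change is weakly positive, or else adjust the construction. Since the agent stopped after $\bfb_s$ under $y$, it must have been individually rational there, so $u(0,y(\bfb_s),\sigma_s)\ge 0>u(0,0,\sigma_s)$ by \cref{ass:A-payoff-main}(f), which forces $y(\bfb_s)>0$. The principal then pays for a policy that only buys investment in the bad state. However, \cref{ass:P-payoff-main} does not imply $v(0,y,\sigma)\le 0$, since the principal may value bad-state investment. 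This is the main obstacle: naively zeroing these policies may hurt her.

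I would therefore try a different approach. Rather than comparing payoffs history by history, I would use the Lagrangian and convexity structure of $[\ms P(t)]$. The plan is to show that the value of $[\ms P_0]$ is at most $\max_s W(s)$, which by \cref{lem:standard-solutions-exist}, \cref{thm:inner=aux}(b) and \cref{lem:p_sol_is_feas} is attained by a standard simple solution feasible for $[\ms P_0]$.

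To bound the value, take any feasible $(y,\tau)$ and collapse each post-breakdown subtree. After $\bfb_s$ with $s\le t^*$, the continuation is a single-agent choice in a known state, so it can be replaced by immediate stopping at $s$ with a suitable policy adjustment. Indeed, since the state is known after $\bfb_s$, APM implies that stopping earlier, at $\max(s,\cdot)$, is weakly better for both parties whenever the lottery is individually rational, so I may assume $\sigma_s=s$.

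I would then view the pair as a simple stopping time $\tau_{t^*}$ augmented by \emph{bad-state stopping options}. The key step is an envelope argument. The extra principal gain from bad-state stopping at $s$ requires $y(\bfb_s)$ high enough for agent participation. Such a high policy also raises $U_{s}$ and the continuation value of waiting, so the NW-type constraints $U_{t^*}\ge U_{t'}$ only tighten; relative to the simple benchmark, any benefit to the principal is offset. Concretely, I would show that the principal's payoff from $(y,\tau)$ is bounded by the payoff of $(y',\tau_{s'})$ for some earlier simple $s'$, using APM for $v$ (\cref{ass:P-payoff-main}(e)): investment induced in the bad state at $s$ is dominated by inducing investment at a common earlier time in both states. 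That comparison, via an APM aggregation of $v$ across the relevant histories, is where I expect the real work to lie.
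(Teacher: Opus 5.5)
You correctly see that your first construction fails: keeping $t^*$ and zeroing the early-breakdown policies can hurt the principal, since $v(0,y,\sigma)$ may be positive. Your fallback architecture is also legitimate. If every feasible $(y,\tau)$ of $[\ms P_0]$ were weakly dominated by a pair feasible for $[\ms P]$, then \cref{lem:standard-solutions-exist}, \cref{thm:inner=aux}(b) and \cref{lem:p_sol_is_feas} would finish the proof.

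The gap is that this domination step is the entire content of the lemma, and you never carry it out. You do not specify which simple stopping time or which policy does the dominating, and the envelope/``offset'' heuristic is not an argument. Its direction is also doubtful: a high $y(\bfb_s)$ only makes stopping \emph{before} $s$ more attractive, and the principal has no reason to resist earlier stopping; the paper's proof exploits exactly this. Your subtree-collapsing step has a further flaw. Moving the agent's post-breakdown stop earlier helps the principal by positive monotonicity only if $v(0,y(\bfb_s),\sigma_s)\ge 0$, and the agent's individual rationality at that node does not give you this.

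The missing construction runs as follows.
\begin{enumerate}
\item Let $t$ be the no-breakdown stopping period of $\tau$. Pick $(r,s)\in\arg\max_{(\hat r,\hat s)\in\ms R} v(0,y(\bfb_{\hat r}),\hat s)$, the principal's favorite post-breakdown stopping node.
\item Use optimality of $(y,\tau)$ (zero out any policy on which she earns a negative payoff) to show $v(0,y(\bfb_r),s)\ge 0$ and $V_t(y)\ge 0$.
\item Induce $\tau_0$, not merely some earlier $s'$: with $s'>0$, breakdowns at or before $s'$ would forfeit bad-state investments the principal valued. Set $\tilde y(\bfb_i)=y(\bfb_r)$ for all $i\le t$ (and also $\tilde y(\bfa)=y(\bfb_r)$ if $t=\infty$), with $\tilde y=y$ elsewhere.
\item Agent: since he was willing to stop at $\bfb_{r,s}$, $u(0,y(\bfb_r),s)\ge 0$. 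By positive monotonicity, stopping at $0$ beats any post-breakdown behavior term by term. APM gives $U_{0|\hat t}(\tilde y)\ge U_{\hat t}(\tilde y)$ on the no-breakdown continuation, using \cref{ass:A-payoff-main}(e) when $\hat t=\infty$.
\item Principal: by positive monotonicity, $v(0,y(\bfb_r),0)\ge v(0,y(\bfb_r),s)$. By the favorite-node choice, this is at least $v(0,y(\bfb_{\hat r}),\hat s)$ for every $(\hat r,\hat s)\in\ms R$. It is also at least $0\ge v(0,y(\bfb_{\hat r}),\infty)$ for every early breakdown after which the agent never stops, by \cref{ass:P-payoff-main}(a),(c). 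APM gives $V_{0|t}(y)\ge V_t(y)$, using \cref{ass:P-payoff-main}(f) when $t=\infty$.
\end{enumerate}
Hence $(\tilde y,\tau_0)$ is feasible and weakly better for the principal.
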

\noindent To prove \cref{lem:principal_wants_simple}, suppose that an optimal solution $(y,\tau)$ has a non-simple $\tau$. Under $\tau$, let $t$ be the period in which the agent stops following no breakdown, and let $s\leq t$ be the principal's \textit{favorite} period among those in which the agent stops following a breakdown. That is, $s\in\argmax_{\hat{s}} v(0,y(\bfb_{\hat{s}}),\hat{s})$, where $\hat{s}$ ranges over all the periods where the agent stops following a breakdown. 

Then, the principal can (weakly) improve by deviating to $(\tilde{y},\tau_0)$, where $\tilde{y}(\bfb_i):=y(\bfb_s)$ for $i\leq t$, and $\tilde{y}=y$ otherwise. Against $\tilde{y},$ one can show that the agent is willing to stop at 0. (A key observation is that, by doing so, he is guaranteed to receive $u(0,y(\bfb_s),0)$ in the case of any breakdown occurring before $t$; by APM this is at least as good as $u(0,y(\bfb_s),s)$, and so must be positive, since he stopped after a breakdown at $s$ under $y$.) As a result, the principal improves: in the case of any breakdown occurring before $t,$ she receives $v(0,y(\bfb_s),0);$ by APM, and because $s$ was her favorite instance of the agent stopping after a breakdown, this must be at least as good as what she received under $y$. 

\cref{thm:aux=original} follows as a corollary to \cref{lem:p_sol_is_feas,lem:principal_wants_simple}. 
\begin{customex}{1}[Renewing Subsidies]
We use this example, with the number of periods equal to $T=2$, to illustrate how our theorems can be used to solve for the optimal policy rule. First, we solve the inner problem $\inner$ for each $t$ (\cref{fig:subs-theorem2-a}). For each $\inner$, let $y^{(t)}$ denote the solution and $W(t)$ its value. One can verify that $t=1$ uniquely maximizes $W(t)$  (\cref{fig:subs-theorem2-b}). Therefore, by \cref{thm:inner=aux}(b) and 
\cref{thm:aux=original}(a), $(y^{(1)},\tau_1)$ solves the original problem $\original$.


\begin{figure}[!htb]
    \centering
    \begin{subfigure}[t]{0.48\linewidth}
    \centering
        \includegraphics[width=1\linewidth]{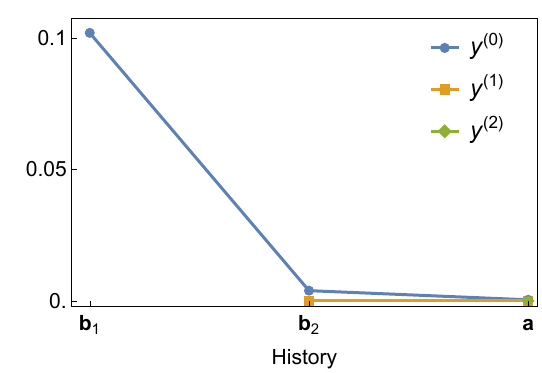}
        \caption{Solutions to inner problems}
        \label{fig:subs-theorem2-a}
    \end{subfigure}
    \hfill
    \begin{subfigure}[t]{0.48\linewidth}
    \centering
        \includegraphics[width=1\linewidth]{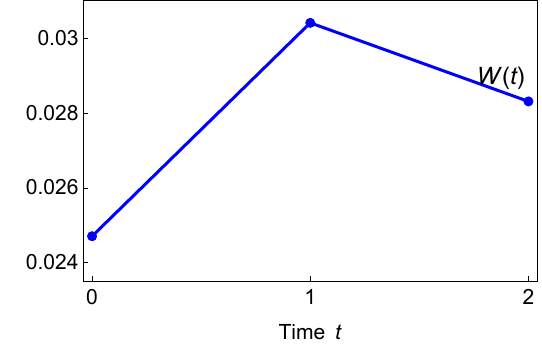}
        \caption{Values of inner problems}
        \label{fig:subs-theorem2-b}
    \end{subfigure}
    \caption{Solving \cref{ex:renewing-subsidies} ($T=2$)}
    \medskip
    \small
    \justifying
    \noindent Panel (a) shows the solution $y^{(t)}$ to the inner problem $\inner$, for $t = 0,1,2$. Panel (b) shows the value of each inner problem.
    (Parameters: $(p_1, p_2)=(0.8,0.98)$, $\delta=0.95, I=5.5, s=0.8, V=0.002$ and $c(y)=4y^3$.)
\end{figure}

\end{customex}

\section{Agent's Rent}\label{sec:rent}


In this section, we ask how the optimal policy affects the agent's welfare. We establish a lower bound on the agent's equilibrium payoff (\cref{prop:null-is-lower-bound}), and study when the agent is held to this lower bound (\cref{prop:no-rent}) and when he obtains rent above it (\cref{prop:rent}). We apply these results to \cref{ex:procurement,ex:renewing-subsidies}, and compare the results to those in standard moral hazard models.

For ease of exposition, in this section, we work under the following assumption:
\begin{assumption}\label{ass:rent}
    For all $\t\in \{0,1\}$, $y\geq 0$ and $t,t' \in T$, we have $\argmax_y v(\t,y,t) = \argmax_y v(\t,y,t') =:\uy (\t)$ and $v(0,\uy(0),t) > 0$.
\end{assumption}
\noindent \cref{ass:rent} says that the principal's ex post optimal policy does not depend on the agent's investment timing (as long as the investment occurs at some point), and this policy always gives the principal a strictly positive payoff. Note that the assumption is satisfied by \cref{ex:renewing-subsidies,ex:procurement}. The results in this section extend straightforwardly to the case without \cref{ass:rent}; see \cref{proof:rent-section}.

We begin with some definitions. The \textit{null policy rule}, denoted $\ynull$, is the policy rule defined by
\begin{align*}
    \ynull (\bfh) \:= \begin{cases}
       \uy(0) \caseif \bfh \in \{\bfb_{1},\bfb_{2},\ldots,\bfb_T\} \\
        \uy(1) \caseif \bfh = \bfa.
    \end{cases} 
\end{align*}
The null policy rule maximizes the principal's ex post payoff under each history, taking as given that the agent invests.\footnote{By \cref{ass:rent}, the principal's ex post optimal policy does not depend on the timing of investment.} We define the agent's \textit{null payoff} as 
$$\Unull \:= \max_{s} U_s (\ynull).$$
This is the agent's ex ante payoff when he best responds to the null policy rule.\footnote{Since $\uy(1) \geq \uy(0)$, a simple stopping time is optimal for the agent under the null policy rule.}

Since the agent's payoff is increasing in policy, his null payoff $\Unull$ is in general weakly greater than his maxmin payoff, $\max_s U_s(\mathbf{0})$. In the special case where the principal's payoff $v$ is decreasing in policy $y$, as in \cref{ex:renewing-subsidies} or the motivating example, the null policy rule is $\ynull \equiv 0$, and the agent's null payoff equals his maxmin payoff, $\Unull = \max_s U_s(\mathbf{0})$. 


Because the principal optimally incentivizes the agent by raising the policy beyond her ex post optimal level, the null payoff serves as a lower bound on the agent's equilibrium payoff, as stated by the following proposition.
\begin{prop}\label{prop:null-is-lower-bound}
    Suppose \cref{ass:rent} holds. If $(y,\tau_t)$ solves $\original$, then $U_t(y) \geq \Unull$.
\end{prop}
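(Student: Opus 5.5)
The plan is to argue by contradiction from optimality, using a local perturbation of the policy rule. By \cref{thm:aux=original}(b) and \cref{thm:inner=aux}(a), if $(y,\tau_t)$ solves $\original$, then $\tau_t$ is simple and $y$ solves the inner problem $\inner$. The argument has two steps. First, I show that $y$ lies weakly above the null policy rule on every history at which $\tau_t$ stops. Second, I convert that pointwise bound into the payoff bound $U_t(y)\geq U_s(\ynull)$ for every $s$.

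\emph{Step 1 (pointwise bound).} The histories at which $\tau_t$ stops are $\bfa$ and $\bfb_k$ with $k>t$. Suppose that $y(\bfh)<\uy(\t_{\bfh})$ at one of them. By strict concavity of $v$ (\cref{ass:P-payoff-main}(b)) and \cref{ass:rent}, $v_y(\t_{\bfh},y(\bfh),t)>0$. So raising $y(\bfh)$ slightly strictly raises the objective of $\inner$. It remains to check that this stays feasible, i.e. that $U_t-U_{t'}$ does not fall for any $t'\in\{t+1,\ldots,T,\infty\}$.

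\begin{itemize}
\item For $\bfh=\bfa$: every $\tau_{t'}$ with $t'\in T$ stops at $\bfa$, with the same probability. The marginal gain is $\phi(t)w_y$ for $U_t$ versus $\phi(t')w_y$ for $U_{t'}$, and $\phi$ is non-increasing (\cref{ass:A-payoff-main}(b)).
\item For $\bfh=\bfb_k$: only $U_{t'}$ with $t'<k$ are affected, by the same comparison.
\item $U_\infty=0$ is untouched in both cases.
\end{itemize}

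Hence the perturbed rule is feasible for $\inner$ and strictly better, contradicting optimality. So $y(\bfa)\geq\uy(1)$ and $y(\bfb_k)\geq \uy(0)$ for all $k>t$.

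\emph{Step 2 (comparison of payoffs).} Fix $s$ attaining $\Unull=\max_s U_s(\ynull)$. Since $u$ is increasing in $y$, the agent's incentive constraint gives $U_t(y)\geq U_s(y)$.

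\begin{itemize}
\item If $s\geq t$ (including $s=\infty$, where the claim is $U_t(y)\geq 0$): $\tau_s$ stops only on a subset of the histories where $\tau_t$ stops. By Step 1, $U_s(y)\geq U_s(\ynull)$, and we are done.
\item If $s<t$: this is the case I expect to be the main obstacle. Under $\tau_s$, the agent also stops at $\bfb_k$ for $s<k\leq t$, where $y$ can be below $\uy(0)$; indeed, a standard solution sets it to $0$.
\end{itemize}

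\emph{Handling the case $s<t$.} My plan is a global, rather than local, deviation by the principal. Suppose $U_t(y)<U_s(\ynull)$. Define $\tilde y$ from $y$ by setting $\tilde y(\bfb_k)=\max\{y(\bfb_k),\uy(0)\}$ for $s<k\leq t$ and leaving all other histories unchanged. Under $\tilde y$:
\begin{itemize}
\item $U_s(\tilde y)\geq U_s(\ynull)>U_t(y)=U_t(\tilde y)$, so the agent strictly prefers $\tau_s$ to $\tau_t$.
\item By the second step of \cref{thm:inner=aux}, applied through \cref{lem:standard-solutions-exist}, the principal can then induce some simple stopping time no later than $t$, with a best response for the agent.
\end{itemize}

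I would then compare the principal's payoffs history by history, in three groups.
\begin{itemize}
\item At the histories $\bfb_k$, $s<k\leq t$, which the agent did not stop at before: she now receives $v(0,\uy(0),\cdot)>0$ by \cref{ass:rent}. Under $\tau_t$ she had $v(0,y(\bfb_k),\infty)\leq v(0,0,\infty)=0$, using \cref{ass:P-payoff-main}(a),(c).
\item At the histories where the agent stopped before: earlier stopping weakly helps her, by APM of $v$ together with \cref{ass:P-payoff-main}(d). This uses the fact, from Step 1, that her ex post payoffs there are at least what she gets at $\uy$-type levels, which are positive.
\end{itemize}

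This makes $\tilde y$ strictly better than $y$ for the principal, contradicting optimality in $\original$. The delicate points are two. First, the agent's new best response may be some $\tau_{s'}\neq\tau_s$. Second, APM of $v$ needs nonnegative aggregate payoffs. I would handle both by choosing $s$ to be the agent's best response under $\tilde y$ (using that the agent best responds with a simple stopping time when $\tilde y$ is monotone in the state, as in the footnote on $\ynull$). I would also group histories so that each APM premise is a sum of terms of the form $v(\cdot,\uy(\cdot),\cdot)>0$.
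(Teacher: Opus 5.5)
Your Step 1 and the case $s\ge t$ are correct and are essentially the paper's argument: the first-order conditions of $\inner$ give $y\ge\ynullt$ pointwise, and then $U_t(y)\ge U_s(y)\ge U_s(\ynull)$. The gap is the case $s<t$. In the paper this case is handled exactly by \cref{lem:0-null-enough}, and your deviation argument for it does not go through as written.

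\emph{The APM step is justified backwards.} On $\bfb_{t+1},\ldots,\bfb_T,\bfa$ your $\tilde y$ equals $y$, and Step 1 gives $y\ge\uy$ there. Since $\uy(\theta)$ maximizes $v(\theta,\cdot,t)$, this yields $v(\theta,y(\bfh),t)\le v(\theta,\uy(\theta),t)$, which is an upper bound. So these payoffs need not be nonnegative, and they cannot be regrouped into terms $v(\cdot,\uy(\cdot),\cdot)>0$. The premise you actually need is the aggregate bound $V_t(y)\ge 0$. It comes from optimality (the principal can always secure $0$, as in the proof of \cref{thm:inner=aux}(b)), and only after standardizing $y$. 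Without standardizing, $\max\{y(\bfb_k),\uy(0)\}$ need not give her $v(0,\uy(0),\cdot)$.

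\emph{The agent's response to $\tilde y$ is not under control.} Choosing $s$ as the best response to $\tilde y$ is circular, because $\tilde y$ is built from $s$. The appeal to the footnote on $\ynull$ also fails: $\tilde y$ need not be monotone in the state, since certainty provision can make $y(\bfb_k)>y(\bfa)$ (cf.\ \cref{fig:subs-theorem1}). In addition, for $t=\infty$ your $\tilde y$ leaves $y(\bfa)=0$, so $U_s(\tilde y)\ge U_s(\ynull)$ can fail.

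\emph{The missing idea is to deviate to the null rule itself rather than perturb $y$.} Suppose some $\hat t\in\argmax_s U_s(\ynull)$ had $\hat t<t$, and consider $(\ynull^{(\hat t)},\tau_{\hat t})$.
\begin{itemize}
\item It is incentive compatible. $\tau_{\hat t}$ is a best response to $\ynull$, and zeroing the policy on $\bfb_1,\ldots,\bfb_{\hat t}$ leaves $U_{\hat t}$ unchanged while weakly lowering every alternative.
\item It is strictly better for the principal: $V_{\hat t}(\ynull^{(\hat t)})>V_t(\ynullt)\ge V_t(y)$. The second inequality holds history by history, since $\ynullt$ is her ex post optimum given $\tau_t$. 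For the first, $V_t(\ynullt)$ is a sum of nonnegative terms, so APM applies, and the histories $\bfb_{\hat t+1},\ldots,\bfb_t$ add $v(0,\uy(0),\hat t)>0$.
\end{itemize}
This contradicts optimality. Hence every maximizer of $U_s(\ynull)$ is at least $t$, your case $s\ge t$ suffices, and both delicate points disappear.

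Alternatively, your own route can be repaired:
\begin{itemize}
\item standardize $y$;
\item take $s'$ to be a best simple response to $\tilde y$, which must satisfy $s'<t$;
\item reset $\tilde y$ to $0$ on $\bfb_1,\ldots,\bfb_{s'}$;
\item use $V_t(y)\ge 0$ as the APM premise;
\item derive the contradiction within $[\ms P]$ via \cref{thm:aux=original}(b), which avoids non-simple best responses altogether.
\end{itemize}
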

\begin{proof}
    See \cref{proof:rent-section}.
\end{proof}

Recall that we assume  $\phi$ is weakly decreasing, meaning investment timing and policy are weak complements for the agent.
The next result says that when $\phi$ is constant, so that there is no complementarity, the lower bound in \cref{prop:null-is-lower-bound} must be tight.

\begin{prop}[No Rent Without Complementarity]
\label{prop:no-rent}
    Suppose \cref{ass:rent} holds, and $\phi$ is constant. If $(y,\tau_t)$ solves $[\ms P_0]$, then $U_t(y) = \Unull$.
\end{prop}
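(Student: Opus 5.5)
By \cref{prop:null-is-lower-bound}, $U_t(y)\ge \Unull$, so I only need the reverse inequality. I will argue by contradiction, supposing $U_t(y)>\Unull$. First I reduce to the inner problem. By \cref{thm:aux=original}(b), $(y,\tau_t)$ solves $\auxiliary$, and by \cref{thm:inner=aux}(a), $y$ solves $\inner$. So the constraint set is $U_t\ge U_{t'}$ for $t'\in\{t,\dots,T,\infty\}$, and $\inner$ is convex, so KKT conditions are available. The cases $t=\infty$ and $t=T$ are degenerate and would be handled directly. For $t=T$ the only constraint is $U_T\ge 0$. If it is slack, the policy rule equals $\ynull$ on every history reached with $x_T=1$, so $U_T(y)=U_T(\ynull)\le\Unull$. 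Hence I focus on $t<T$.

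\textbf{Step 1 (constraint structure when $\phi$ is constant).} I would write $\phi\equiv\bar\phi$. For $s\in\{t+1,\dots,T\}$, the histories consistent with $x_s=1$ are $\bfb_{s+1},\dots,\bfb_T,\bfa$. A direct computation then gives
\[
U_t-U_s=\sum_{i=t+1}^{s}\Pr(\bfb_i)\bigl[\bar\phi\, w(0,y(\bfb_i))+g(0,t)\bigr]+\Pr(x_s=1)\,\E\bigl[g(\t,t)-g(\t,s)\mid x_s=1\bigr].
\]
Because $\bar\phi$ does not depend on timing, the terms in $y(\bfa)$ and in $y(\bfb_i)$ for $i>s$ cancel. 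Thus $y(\bfa)$ enters only the constraint $U_t\ge U_\infty=0$, and $y(\bfb_i)$ enters exactly the constraints $U_t\ge U_s$ with $s\ge i$, always with a positive coefficient. This cancellation is the only place where constancy of $\phi$ is used.

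\textbf{Step 2 (KKT).} Since $U_t>\Unull\ge U_\infty(\ynull)=0$, the constraint $U_t\ge 0$ is slack and its multiplier vanishes. Let $\lambda_s\ge 0$ be the multiplier on $U_t\ge U_s$.
\begin{itemize}
\item The first-order condition for $y(\bfa)$ is just $v_y(1,y(\bfa),t)=0$, so $y(\bfa)=\uy(1)$ by strict concavity and \cref{ass:rent}.
\item For $y(\bfb_i)$ with $i>t$, the condition reads $v_y(0,y(\bfb_i),t)+\bar\phi\, w_y(0,y(\bfb_i))\sum_{s\ge i}\lambda_s=0$, modulo the corner $y\ge 0$. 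Hence $y(\bfb_i)\ge\uy(0)$, with equality whenever $\lambda_s=0$ for all $s\ge i$.
\end{itemize}

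\textbf{Step 3 (conclusion).} Let $s^*$ be the largest $s\in\{t+1,\dots,T\}$ with $\lambda_s>0$.
\begin{itemize}
\item If no such $s$ exists, then $y=\ynull$ on all histories with $x_t=1$. So $U_t(y)=U_t(\ynull)\le\Unull$, a contradiction.
\item Otherwise, $y(\bfb_i)=\uy(0)$ for $i>s^*$, and $y(\bfa)=\uy(1)$. Now $U_{s^*}$ depends only on $y(\bfb_i)$ for $i>s^*$ and on $y(\bfa)$, so $U_{s^*}(y)=U_{s^*}(\ynull)$. The constraint binds, giving $U_t(y)=U_{s^*}(y)=U_{s^*}(\ynull)\le\Unull$, again a contradiction.
\end{itemize}
The complementary-slackness reading of binding is the one supplied by \cref{thm:inner}(b).

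I expect the main obstacle to be making Step 2 rigorous. This requires justifying KKT for $\inner$, which needs a constraint qualification given the convexity claimed in the text, and treating the nonnegativity corner for $y(\bfb_i)$. If $\uy(0)=0$ the corner is harmless, and otherwise $y\ge\uy(0)>0$ is interior. If KKT is awkward, I would fall back on direct perturbations:
\begin{itemize}
\item lowering $y(\bfa)$ toward $\uy(1)$ changes $U_t$ and every $U_s$, $s\le T$, by the same amount, so it preserves all non-IR constraints;
\item lowering $y(\bfb_i)$ for $i>s^*$ toward $\uy(0)$ preserves the slack constraints for small moves.
\end{itemize}
Either perturbation would strictly improve the principal, contradicting optimality.
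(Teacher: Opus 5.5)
Your proof is correct, but it is organized differently from the paper's.

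\textbf{The paper's route.} It proves the general version (\cref{prop-prime:no-rent}, without \cref{ass:rent}) via \cref{lem:utility-at-null-peak}. That lemma is anchored at $\hat t$, the smallest maximizer of $s\mapsto U_s(\ynullt)$ over $s\ge t$, and proceeds in three steps.
\begin{enumerate}
\item A replacement argument shows $y=\ynullt$ on $\{\bfb_{\hat t+1},\ldots,\bfb_T,\bfa\}$.
\item It rules out $U_t(y)>U_{\hat t}(\ynullt)$. To do so, it takes the largest $t'<\hat t$ with $U_{t'}(y)>U_{\hat t}(\ynullt)$ and lowers $y(\bfb_{t'+1})$; \cref{thm:inner}(a),(b) guarantee that this component exceeds the null level.
\item A squeeze, using feasibility in $\original$ and $y\ge\ynullt$, gives $U_t(y)=\Unullt$. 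Finally, \cref{lem:0-null-enough} identifies $\Unullt$ with $\Unull$.
\end{enumerate}

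\textbf{Your route.} You anchor instead at the last binding constraint $s^*$. Once IR is slack, every policy component after $s^*$ sits at its ex post optimum. Hence $U_t(y)=U_{s^*}(y)=U_{s^*}(\ynull)\le\Unull$, and the lower bound is imported from \cref{prop:null-is-lower-bound}. Both proofs rest on the same cancellation: when $\phi$ is constant, $U_t-U_s$ depends only on $y(\bfb_{t+1}),\ldots,y(\bfb_s)$.

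\textbf{What each buys.} Your argument is shorter and compares with $\ynull$ directly. Its perturbation version also needs no constraint qualification at all. For that version, define $s^*$ as the largest $s\le T$ with $U_t(y)=U_s(y)$, since multipliers are not available there. Then move each component toward $\uy(\cdot)$ from whichever side it lies on.

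The paper's route yields the exact value $U_t(y)=\max_{s\ge t}U_s(\ynullt)$ for a solution of every $\inner$, not just for the optimal $t$. It reuses this in \cref{prop: sufficient condition for nestedness}. You can recover the same identity from your argument. Run it without the contradiction hypothesis, splitting on whether IR binds, and replace $\uy(\cdot)$ by $\arg\max_y v(\cdot,y,t)$. This gives the upper half, $U_t(y)\le\max_{s\ge t}U_s(\ynullt)$. The lower half follows from $U_t(y)\ge U_{\hat t}(y)\ge U_{\hat t}(\ynullt)$.
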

\begin{proof}
See \Cref{proof:rent-section}.
\end{proof}


To gain intuition for \cref{prop:no-rent}, suppose that $(y,\tau_0)$ solves $\original$, and suppose that $\tau_2$ is the agent's best response under the null policy rule. First, we claim that $y(\bfh) = \ynull(\bfh)$ for $\bfh \in \{\bfb_3,\ldots,\bfb_T,\bfa\}$. Since the agent is already willing to stop in period 2 under the null policy, the principal does not need to provide policy incentives beyond the null level to prevent deviations to $\tau_t$ for $t>2$. It may be that incentives must be provided to prevent deviations to $\tau_1$ or $\tau_2$. However, because $\phi$ is constant, raising $y(\bfh)$ for $\bfh \in \{\bfb_3,\ldots,\bfb_T,\bfa\}$ would increase $U_0(y)$, $U_1(y)$, and $U_2(y)$ by the same amount, and so would not affect the agent's incentives to deviate to $\tau_1$ or $\tau_2$. Hence, we must have $y(\bfh) = \ynull(\bfh)$ for $\bfh \in \{\bfb_3,\ldots,\bfb_T,\bfa\}$. In turn, this implies $U_2(y) = U_2(\ynull) = \Unull$.

Now, suppose towards a contradiction that $U_0(y)$ is strictly higher than $U_2(y) = \Unull$. It must be that $U_0(y) = U_1(y)$, as otherwise, no incentive constraint binds, implying $U_0(y) = \Unull$. Then, the principal can marginally lower $y(\bfb_2)$ while maintaining $U_0(y) > U_2(y)$. This increases the principal's expected payoff because her marginal utility of policy must be negative in equilibrium. Moreover, since $\phi$ is constant, lowering $y(\bfb_2)$ does not affect the incentive constraint $U_0(y) \geq U_1(y)$. So, the perturbation is profitable for the principal, contradicting the optimality of $y$. Therefore, it must be that $U_0(y) = \Unull$.


The arguments in the above two paragraphs clearly rely on $\phi$ being constant. In particular, when $\phi$ is strictly decreasing, it may be optimal to set $y(\bfh) > \ynull(\bfh)$ for $\bfh \in \{\bfb_3,\ldots,\bfb_T,\bfa\}$ because increasing these policies increases $U_0(y)$ more than it increases $U_1(y)$, helping to satisfy the incentive constraint $U_0(y) \geq U_1(y)$. This provides intuition for the following proposition, which is a partial converse of \cref{prop:no-rent} and states that a strict complementarity between investment timing and policy often leads to rent.
\begin{prop}[Rent from Complementarity]\label{prop:rent}
    Suppose \cref{ass:rent} is satisfied. Let $(y,\tau_t)$ be a standard solution to  $\original$. Suppose the following statements hold:
    \begin{enumerate}[(i)]
        \item  $\phi$ is strictly decreasing.
        \item  If $\hat t \in \argmax_s U_s (\ynull)$, then $\hat t > t$.
        \item There exists $\hat t \in \argmax_s U_s(\ynull)$ such that $\hat t \leq T$, and for this $\hat t$, there exists $\bfh \in \{\bfb_{\hat t+1},\bfb_{\hat t+2},\ldots,\bfb_T,\bfa\}$ such that $y(\bfh)>0$.
    \end{enumerate}
    Then, $U_t(y) > \Unull$. 
\end{prop}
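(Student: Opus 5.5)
We argue by contradiction. By \cref{prop:null-is-lower-bound}, $U_t(y)\geq \Unull$, so it suffices to rule out $U_t(y)=\Unull$. Since $(y,\tau_t)$ is a standard solution to $\original$, \cref{thm:aux=original}(b) and \cref{thm:inner=aux}(a) imply that $y$ solves the convex inner problem $\inner$. We may therefore use its Lagrangian: there are multipliers $\lambda_s\geq 0$ on the constraints $U_t\geq U_s$, $s\in\{t+1,\ldots,T,\infty\}$, with complementary slackness. For every history $\bfh$ on which the agent stops under $\tau_t$ and $y(\bfh)>0$, the first-order condition is
\[
v_y(\theta_\bfh,y(\bfh),t)\,\pi_t(\bfh)+w_y(\theta_\bfh,y(\bfh))\Big[\Big(1+\sum_s\lambda_s\Big)\phi(t)\pi_t(\bfh)-\sum_{s}\lambda_s\phi(s)\pi_s(\bfh)\Big]=0 .
\]
Here $\pi_s(\bfh)$ is the probability of $\bfh$ jointly with $x_s=1$, and the sum runs over $s\le T$. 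The histories $\bfh\in\{\bfb_{\hat t+1},\ldots,\bfb_T,\bfa\}$ satisfy $\pi_s(\bfh)=\pi_t(\bfh)$ for all $s\le\hat t$.

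\textbf{Step 1: $y\ge\ynull$ on late histories.} We will show that $y(\bfh)\geq \ynull(\bfh)$ for every $\bfh$ in $\{\bfb_{\hat t+1},\ldots,\bfb_T,\bfa\}$. Suppose instead $y(\bfh)<\ynull(\bfh)$ for such an $\bfh$, and raise $y(\bfh)$ slightly toward $\ynull(\bfh)$. This strictly helps the principal by strict concavity and \cref{ass:rent}. Because $\pi_s(\bfh)\le\pi_t(\bfh)$ and $\phi(s)\le\phi(t)$, it raises $U_t$ by at least as much as it raises any $U_s$ with $s>t$, so every constraint stays satisfied. This contradicts optimality.

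\textbf{Step 2: the deviation to $\tau_{\hat t}$.} Since $U_{\hat t}$ depends only on the histories in Step 1, and $w$ is strictly increasing in $y$, Step 1 gives $U_{\hat t}(y)\geq U_{\hat t}(\ynull)=\Unull$. Equality holds only if $y=\ynull$ on all of $\{\bfb_{\hat t+1},\ldots,\bfb_T,\bfa\}$. Combined with feasibility, $U_t(y)\ge U_{\hat t}(y)$, the hypothesis $U_t(y)=\Unull$ forces $y=\ynull$ on these histories.

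\textbf{Step 3: the multipliers vanish.} By (iii) there is such an $\bfh$ with $y(\bfh)=\ynull(\bfh)>0$. This point is interior, so $v_y=0$ there. Since $w_y>0$, the first-order condition above reduces to
\[
\sum_s\lambda_s\big(\phi(t)\pi_t(\bfh)-\phi(s)\pi_s(\bfh)\big)+\phi(t)\pi_t(\bfh)=0 .
\]
For $s\le \hat t$ the coefficient equals $\pi_t(\bfh)(\phi(t)-\phi(s))$, which is $>0$ by (i). For larger $s$ it is $\geq 0$. The left-hand side is therefore strictly positive, which is a contradiction. (If the bookkeeping of the constant term works differently, the alternative is to read this as all $\lambda_s=0$ for $t<s\le\hat t$.)

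\textbf{Step 4: closing the argument.} If the conclusion of Step 3 is only that those multipliers vanish, the remaining constraints are slack. Then $y$ is ex post optimal at every history where the agent stops, i.e.\ $y=\ynull$ there, so $U_t(y)=U_t(\ynull)$. By (ii), $t\notin\argmax_sU_s(\ynull)$, hence $U_t(\ynull)<\Unull=U_{\hat t}(\ynull)$. This violates the constraint $U_t\ge U_{\hat t}$ and gives the contradiction.

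I expect the main obstacle to be the KKT bookkeeping in Steps 3 and 4. It requires verifying that the inner problem satisfies a constraint qualification, so that multipliers exist, and computing $\pi_s(\bfh)$ correctly, in particular for the history $\bfa$ versus the $\bfb$'s. It also requires handling corner cases where $y(\bfh)=0$ on some histories, for which (iii) supplies the needed interior history.
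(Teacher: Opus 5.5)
Your architecture is right, and it is the paper's argument run as a contradiction. But the decisive Step 3 rests on a misstated first-order condition, and Step 4 does not close the resulting hole.

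\textbf{The first-order condition.} The Lagrangian of $\inner$ is $V_t+\sum_s\lambda_s(U_t-U_s)$, so no ``$1+$'' multiplies $\phi(t)\pi_t(\bfh)$. At a history $\bfh$ after $t$ with $y(\bfh)>0$, the condition is
\[
v_y(\theta_{\bfh},y(\bfh),t)\,\pi_t(\bfh)+w_y(\theta_{\bfh},y(\bfh))\sum_{s}\lambda_s\bigl[\phi(t)\pi_t(\bfh)-\phi(s)\pi_s(\bfh)\bigr]=0,
\]
with $\pi_\infty\equiv 0$ for the constraint $U_t\geq 0$. With your extra constant term, Step 3 reaches ``strictly positive, contradiction'' even when every $\lambda_s=0$, and without using (i). The same reasoning would then apply verbatim to constant $\phi$, where Proposition~\ref{prop:no-rent} says there is no rent. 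So the main line of Step 3 establishes nothing.

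\textbf{Step 4.} Your fallback gives only $\lambda_s=0$ for $t<s\leq\hat t$, and the claim that ``the remaining constraints are slack'' is asserted, not shown. Under your contradiction hypothesis, the constraints $U_t\geq U_s$ with $s>\hat t$, and $U_t\geq 0$, can hold with equality. After Step 2 we have $U_s(y)=U_s(\ynull)$ for $s\geq\hat t$. So equality occurs whenever $s\in\argmax_{s'}U_{s'}(\ynull)$, or whenever $\Unull=0$. Complementary slackness therefore tells you nothing about those multipliers.

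\textbf{The missing observation.} Once $v_y=0$ at the interior late history supplied by (iii), every coefficient $\phi(t)\pi_t(\bfh)-\phi(s)\pi_s(\bfh)$, for $s\in\{t+1,\dots,T,\infty\}$, is strictly positive, not just those with $s\leq\hat t$:
\begin{itemize}
\item if $\pi_s(\bfh)=\pi_t(\bfh)$, it equals $\pi_t(\bfh)(\phi(t)-\phi(s))>0$ by (i);
\item if $\pi_s(\bfh)=0$ (breakdown at or before $s$, or $s=\infty$), it equals $\phi(t)\pi_t(\bfh)>0$.
\end{itemize}
Hence all multipliers vanish at once. The first-order conditions and strict concavity of $v$ then give $y=\ynull$ on every history after $t$. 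So $U_t(y)=U_t(\ynull)<\Unull$ by (ii), which is the desired contradiction.

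\textbf{Comparison with the paper.} This sign computation is exactly the paper's key step, but the paper runs it directly. Some $\lambda_r>0$, since otherwise $y=\ynull$ after $t$, contradicting (ii). The right-hand side of the first-order condition is therefore strictly negative at the interior late history, forcing $y(\bfh)>\ynull(\bfh)$. This yields $U_t(y)\geq U_{\hat t}(y)>U_{\hat t}(\ynull)=\Unull$. Your perturbation argument in Step 1 is a valid substitute for the paper's first-order-condition derivation of $y\geq\ynull$. The constraint qualification you worry about is also taken for granted in the paper.
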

\begin{proof}
    See \cref{proof:rent-section}.
\end{proof}
Condition (ii) says that the optimal policy rule makes the agent stop earlier than he would under the null policy rule. This is necessary for the agent to obtain positive rent, as otherwise the null policy rule will be optimal. To understand the key role played by decreasing $\phi$ (condition (i)), suppose that, under the null policy rule $\ynull$, the agent adopts the stopping time $\tau_{\hat t}$ for some $\hat t > t$. 
Since $\phi$ is strictly decreasing, raising $y(\bfh)$ above $\ynull(\bfh)$ for $\bfh \in \{\bfb_{\hat t+1},\bfb_{\hat t+2},\ldots,\bfb_T,\bfa\}$ leads to a first-order gain in incentivizing the agent to adopt $\tau_t$ rather than $\tau_{\hat t}$.
On the other hand, if $y(\bfh)$ is interior (condition (iii)), and we have $y(\bfh) = \ynull(\bfh)$, then the cost to the principal of marginally increasing $y(\bfh)$ is of second order. So, the optimal $y$ must satisfy $y(\bfh) > \ynull(\bfh)$. 
Since the agent's payoff is increasing in policy, this implies $U_{\hat t}(y) > U_{\hat t}(\ynull) = \Unull$. However, $\tau_t$ is the agent's best response against $y$, so we conclude $U_t(y) \geq U_{\hat t}(y) > \Unull$.


Condition (iii) is not necessary for rent. For instance, it may be that $\hat t = \infty$, so we have $U_{\hat t}(y) = \Unull = 0$, and yet $U_t(y) > 0$. We provide such an example in \cref{OA:rent}.

\cref{prop:no-rent,prop:rent} can be used to compare the agent's first- and second-best payoffs. Since the first-best problem $\auxiliaryfb$ is a concave maximization program, the agent's first-best equilibrium payoff is equal to either $\max_s U_s(\mathbf{0})$ (if the incentive constraint binds) or $U_t(\ynull)$ for $t$ such that $\tau_t$ solves $\auxiliaryfb$ (if the incentive constraint is slack).\footnote{When the incentive constraint is slack, the first-best policy must coincide with $\ynull$ on histories $\bfh \in \{\bfb_{t+1},\ldots,\bfa\}$. Moreover, the policies under histories $\bfh \in \{\bfb_1,\ldots,\bfb_t\}$ do not affect the agent's payoff from $\tau_t$. So, the agent's payoff is equal to $U_t(\ynull)$.}
Hence, whenever \cref{prop:rent} applies, the agent's equilibrium payoff is strictly higher under second-best than under first-best.
This explains why the first-best solution does not in general Pareto-dominate the second-best solution. On the other hand, suppose that the first-best incentive constraint binds, that $\Unull = \max_s U_s(\mathbf{0})$ (for instance, because $v$ is decreasing in $y$), and that $\phi$ is constant. Then, \cref{prop:no-rent} implies that the agent's first- and second-best payoffs are equal.

Next, we apply \cref{prop:rent,prop:no-rent} to our two examples.

\begin{customex}{1}[Renewing Subsidies]
    In \cref{ex:renewing-subsidies}, $\phi$ is constant because the agent's benefit from subsidies does not depend on when the agent invests. Hence, \cref{prop:no-rent} applies. In particular, the principal's utility $v$ is strictly decreasing in subsidy $y$, so the null policy rule is simply $\ynull \equiv 0$, and the agent is held to his maxmin payoff. That is, if $(y^*,\tau_t)$ solves the principal's problem, we must have
    $U_t(y^*) = \Unull = \max_s U_s(\mathbf{0})$; see \Cref{fig:subs-nl+el}. \qed
\end{customex}

\begin{figure}[!hbt]
    \centering
    \includegraphics[width=0.8\linewidth]{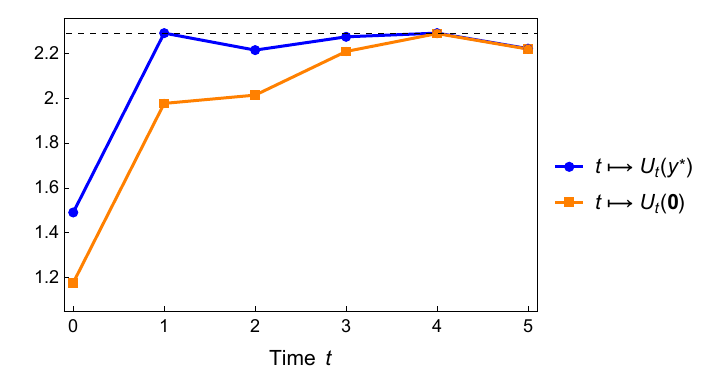}
    \caption{Equilibrium payoff in \cref{ex:renewing-subsidies} (constant $\phi(t)$, $T=5$)}
    \medskip
    \small
    \justifying
    \noindent Under the optimal policy $y^*$, the agent chooses $\tau_1$ and obtains $U_1(y^*)$. As depicted by the dashed horizontal line, this payoff is equal to the agent's maxmin payoff $U_4(\mathbf{0})$, which is what he gets if the principal sets $y\equiv 0$, and the agent best responds by choosing $\tau_4$.
    (Parameters: $(p_t)_{t=1}^5=(0.45,0.8,0.7,0.8,0.98)$, $\delta=0.95, I=5.5, s=0.8,  V=0.002$, $c(y)=4y^3$.)
    \label{fig:subs-nl+el}
\end{figure}

\begin{customex}{2}[Procurement]
In \cref{ex:procurement}, the function $\phi(t)=1-(1-q)^{T-t+1}$ is strictly decreasing -- the earlier the agent invests, the more likely he is to succeed in R\&D by period $T$. Under $\ynull$, the principal always purchases the quantity that maximizes her ex post payoff $(k\t + 1)h(y) - y$. Suppose that, even under the null policy rule, the agent eventually becomes willing to invest, conditional on no breakdown; let $\hat t$ denote the earliest time at which the agent becomes willing to invest. Moreover, suppose the principal's optimal solution hastens investment, inducing $\tau_t$ with $t < \hat t$. Then, conditions (i) to (iii) in \cref{prop:rent} are satisfied, so the agent obtains positive rent. In particular, the interiority condition $y(\bfh)>0$ holds automatically because the principal's ex post optimal quantity is always positive, due to the assumption that $h'(0)>1$. See \cref{fig:proc-nl+el} for an illustration. \qed

\end{customex}
    

\begin{figure}[!htb]
    \centering
    \includegraphics[width=0.8\linewidth]{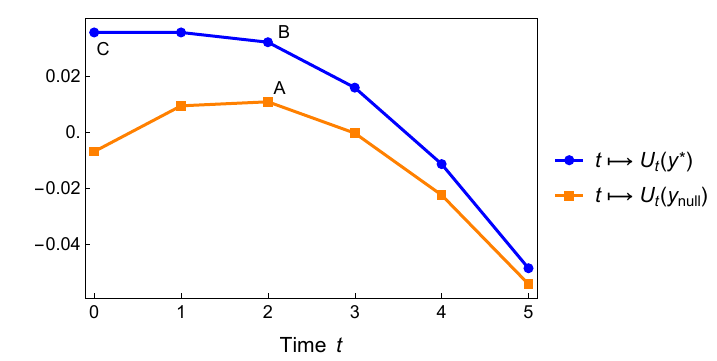}
    \caption{Equilibrium payoff in  \Cref{ex:procurement} (decreasing $\phi(t)$, $T=5$)}
    \medskip
    \small
    \justifying
    \noindent 
    The agent waits until period 2 under $\ynull$ (point A), but the optimal policy $y^*$ makes him invest earlier, in period 0 (point C). Hence, the agent obtains rent (C is above A). 
    The intuition is that, in order to prevent the agent's deviation away from C, it is optimal for the principal to increase $y(\bfa)$, which makes $U_2(y^*)$ (point B) strictly higher than $U_2(\ynull)$ (point A). However, by the agent's incentive constraints, C must be weakly above B.
    (Parameters: $p_t=0.9$ for all $t$, $\delta=0.89, I=0.36, q=0.09, k=1.2$ and $h(y)=\sqrt{y}$.)
    \label{fig:proc-nl+el}
\end{figure}

While we impose a lower bound of zero on the principal's policy, our result that the agent can obtain rent does not rely on the agent's limited liability. In fact, even if $v$ is decreasing in $y$, so that the principal's best response to  any given strategy of the agent is to set $y=0$, it can be that $y>0$ under every history, and the agent's rent is positive. We provide such an example in \cref{OA:rent}.\footnote{Limited liability also does not bind in the parametric version of \cref{ex:procurement} in \cref{fig:proc-nl+el}, since the optimal policy satisfies $y^*(\bfh)>0$ for all $\bfh$. However, since $v$ is increasing in $y$ for low values of $y$, the principal does not even prefer to lower the policy to $y=0$, so limited liability becomes slack in a rather mechanical manner. Hence, \cref{fig:proc-nl+el} may not be entirely satisfactory for illustrating our point that limited liability is not required for rent, and we provide a more compelling example in \cref{OA:rent}.} This contrasts with standard moral hazard models following \cite{holmstrom1979moral}. There, an agent without limited liability must get zero rent because the principal can lower the agent's wage across all output levels without affecting the agent's incentive constraints. An analogous argument does not apply to our model when $\phi$ is strictly decreasing, since for any history $\bfh$ under which the agent would have stopped, lowering $y(\bfh)$ necessarily tightens the agent's incentive constraint.

\section{State-Measurable Policy Rule}
\label{sec:ota}

In \cref{sec:nested,sec:ota}, we focus on the case of constant $\phi$ and argue that a quantity we call the agent's \textit{waiting premium} plays a key role in determining the structure of the optimal policy rule.
When the waiting premium is increasing in time, the principal can implement the optimal policy rule by announcing a state-measurable policy rule at a time of her choice (\cref{sec:ota}). On the other hand, when the waiting premium is decreasing in time, the optimal policy rule often has a \textit{nested} structure, which in turn implies that moral hazard always delays investment (\cref{sec:nested}).

\medskip

Assuming $\phi$ is constant, we may normalize it to $\phi\equiv 1$. Then, for $t \in \{1,2,\ldots,T\}$, the difference in the agent's expected payoff from adopting the stopping time $\tau_{t-1}$ rather than $\tau_t$ can be written as
\begin{align}
U_{t-1}-U_t 
= &\, \left(\prod_{i=1}^{t-1} p_i\right) \left[(1-p_t)w(0,y(\bfb_{t}))-\tilde A(t)\right], \label{eq:payoff-dif}
\end{align}
where $\tilde A(t)\:= - g(0,t-1) + p_t g(0,t) - (g(1,t-1) - g(0,t-1) + g(0,t) - g(1,t)) \prod_{i=t}^T p_i.$\footnote{In settings where $g$ is independent of $\t$, as in \cref{ex:renewing-subsidies}, we may write $g(0,t) = g(1,t) = g(t)$ and $\tilde A(t) = p_t g(t) - g(t-1).$}
To interpret \eqref{eq:payoff-dif}, first note that $\tau_{t-1}$ and $\tau_t$ only lead to different ex post payoffs for the agent if a breakdown does not arrive by period $t-1$, which happens with probability $\prod_{i=1}^{t-1} p_i$. Then, if a breakdown occurs in period $t$, the agent receives $w(0,y(\bfb_t))$ under $\tau_{t-1}$ but not under $\tau_t$. The term $\tilde A(t)$ captures the portion of the payoff difference between $\tau_{t-1}$ and $\tau_t$ that does not depend on the policy $y$.

Importantly, the payoff difference \eqref{eq:payoff-dif} depends on the policy rule only via $y(\bfb_t)$. This is because when $\phi$ is constant, the agent's payoff from the policy only depends on whether he stops, and not on when he stops; and because $\bfb_t$ is the only history under which the agent stops under $\tau_{t-1}$ but not under $\tau_t$.


It will be useful to define the agent's \textit{waiting premium} to be
$$ A(t) \:= \frac{\tilde A(t)}{1-p_t}.$$
Intuitively, the waiting premium $A(t)$ captures the policy-independent benefit that the agent gets by waiting one more period at time $t-1$.
One can see from \eqref{eq:payoff-dif} that the agent prefers $\tau_{t-1}$ to $\tau_t$ if and only if the extra exposure to policy, $w(0,y(\bfb_{t}))$, is greater than the waiting premium $A(t)$. The waiting premium may be positive or negative.

We have the following result.
\begin{prop}[State-Measurability]\label{prop:state-msrble}
    Suppose $\phi$ is constant, and $A$ is strictly increasing. Fix some $t \leq T-2$. If $y$ solves $\inner$, then $U_t>\max\{U_{t+1},\ldots, U_{T-1}\}$, and thus $y(\bfb_{t+1}) = y(\bfb_{t+2}) = \cdots = y(\bfb_{T})$.
\end{prop}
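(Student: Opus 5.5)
The plan is to write the relevant incentive constraints as partial sums of a strictly decreasing sequence, show that none of the constraints $U_t \geq U_{t+k}$ for $k \leq T-t-1$ can hold with equality, and then read off the equal-policy conclusion from \cref{thm:inner}(b).

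\textbf{Step 1: telescoping.} Normalize $\phi\equiv 1$ and telescope \eqref{eq:payoff-dif}. For $k\in\{1,\ldots,T-t\}$,
\begin{align*}
U_t - U_{t+k} = \sum_{r=t+1}^{t+k} c_r\, d_r, \qquad c_r := \Big(\prod_{i=1}^{r-1} p_i\Big)(1-p_r) > 0, \qquad d_r := w(0,y(\bfb_r)) - A(r).
\end{align*}
Write $S_k := U_t - U_{t+k}$. Since $y$ solves $\inner$, feasibility gives $S_k \geq 0$ for every $k=1,\ldots,T-t$.

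\textbf{Step 2: $d_r$ is strictly decreasing.} By \cref{thm:inner}(b), $y(\bfb_{t+1}) \geq \cdots \geq y(\bfb_T)$. Since $w_y>0$, the term $w(0,y(\bfb_r))$ is non-increasing in $r$ for $r \geq t+1$. Because $A$ is strictly increasing, $d_r$ is strictly decreasing in $r$ on $\{t+1,\ldots,T\}$.

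\textbf{Step 3: no equality for $k\leq T-t-1$.} Suppose toward a contradiction that $S_k = 0$ for some $k \leq T-t-1$.
\begin{itemize}
\item The terms $d_{t+1},\ldots,d_{t+k}$ are strictly decreasing, so they cannot all be zero when $k\geq 2$.
\item A positively weighted sum of a strictly decreasing sequence that equals zero must have its last term strictly negative. For $k\geq 2$ this gives $d_{t+k}<0$.
\item For $k=1$ we only get $d_{t+1}=0$, but then $d_{t+2}<0$ still follows.
\end{itemize}
In either case $d_{t+k+1} < 0$. This index is at most $T$ because $k+1 \leq T-t$. Hence $S_{k+1} = S_k + c_{t+k+1} d_{t+k+1} < 0$, which contradicts feasibility of $U_t \geq U_{t+k+1}$. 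Therefore $U_t > U_{t+k}$ for all $k \leq T-t-1$, that is, $U_t > \max\{U_{t+1},\ldots,U_{T-1}\}$.

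\textbf{Step 4: conclude via complementary slackness.} Since each constraint $U_t \geq U_{t+s}$ with $s \leq T-t-1$ holds strictly, its Lagrange multiplier is zero, so it is slack in the sense used in \cref{thm:inner}. \cref{thm:inner}(b) says that $y(\bfb_{t+s}) > y(\bfb_{t+s+1})$ would force that constraint to bind. Hence $y(\bfb_{t+s}) = y(\bfb_{t+s+1})$ for every $s=1,\ldots,T-t-1$, which chains to $y(\bfb_{t+1}) = \cdots = y(\bfb_T)$.

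\textbf{Main obstacle.} The delicate point is Step 4. We must apply \cref{thm:inner}(b) in contrapositive form, and this requires that a strictly satisfied constraint has a zero multiplier. That holds by complementary slackness in the convex program $\inner$, whose KKT structure underlies \cref{thm:inner}. A smaller detail is that the monotonicity of $y$ from \cref{thm:inner}(b) must be available on all of $\{t+1,\ldots,T\}$, which it is since $t<T$.
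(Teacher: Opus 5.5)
Your proof is correct and follows essentially the same route as the paper. Both arguments suppose $U_t = U_{t+s}$ for some $s \le T-t-1$. Both then combine \eqref{eq:payoff-dif}, the monotonicity $y(\bfb_{t+s}) \ge y(\bfb_{t+s+1})$ from \cref{thm:inner}(b), and the strict increase of $A$ to obtain $U_{t+s+1} > U_t$, a contradiction. Equality of the policies then follows from the slackness part of \cref{thm:inner}(b).

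The only difference is in how $w(0,y(\bfb_{t+s})) \le A(t+s)$ is obtained. The paper reads it off in one line from the single constraint $U_t \ge U_{t+s-1}$; in your notation, $S_{k-1}\ge 0$ and $S_k=0$ give $c_{t+k}d_{t+k}\le 0$ directly. You instead derive it from the whole telescoped sum and the strict monotonicity of $d_r$, which is slightly longer but equally valid.
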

\begin{proof}
See \Cref{proof: sufficient condition for OTA}.
\end{proof}
\cref{prop:state-msrble} says that, in $[\ms P(t)]$, when the waiting premium is increasing, the optimal policy rule is measurable in the payoff-relevant state $\t$, conditional on a breakdown not occurring by time $t$.\footnote{A standard solution has $y(\bfb_s) = 0$ for $s\leq t$, so state-measurability only holds conditional on no breakdown until time $t$.} 
Intuitively, when the waiting premium increases over time, if the agent finds it optimal to wait at $t$, he should wait at least until the last period $T$. Hence, the agent's incentive constraints do not bind except possibly for the last two constraints, $U_t \geq U_T$ and $U_t \geq 0$.
Complementary slackness (\cref{thm:inner}(b)) then implies that the principal does not front-load certainty -- as long as a breakdown occurs after the agent stops, the policy does not depend on the timing of the breakdown. 
Note that by \cref{thm:aux=original}(b) and \cref{thm:inner=aux}(a), \cref{prop:state-msrble} also applies to any $(y,\tau_t)$ which solves $\original$.

\cref{prop:state-msrble} has a stark implication for how the optimal policy rule can be implemented.
Suppose we weaken the principal's commitment power as follows: the principal can only announce a state-measurable policy rule $y(\t)$, but she can make this announcement at a time of her choice. That is, in each period $t$, after the public signal $x_t$ is realized but before the agent acts, she can commit irreversibly to some rule $y(\t)$, if she has not done so already.\footnote{In the final period $T$, if the principal has not committed yet, she must choose a policy $y$.} 
This defines a dynamic game between the principal and the agent.
It can be verified that, as long as the conclusion of \cref{prop:state-msrble} holds, the solution to $\original$ can be implemented as a PBE of this dynamic game.\footnote{Suppose $(y,\tau_t)$ solves $\original$ and satisfies $y(\bfb_{t+1}) = \cdots = y(\bfb_T)$. Then, the dynamic game must have a PBE in which the principal announces the zero policy if a breakdown occurs by time $t$, and otherwise announces, in period $t$, the state-measurable rule $\hat y$ given by $\hat y(0) = y(\bfb_{t+1})$ and $\hat y(1) = y(\bfa)$. Clearly, the agent has no reason to deviate. If the principal has a profitable deviation, then the principal could have also profitably deviated in a similar manner from $(y,\tau_t)$ under $\original$, contradicting the optimality of $(y,\tau_t)$.
} 
Hence, when the waiting premium is increasing, the principal can implement the optimal policy rule even if she can only announce state-measurable policy rules.

\begin{customex}{1}[Renewing Subsidies]
Since $\phi$ is constant, as long as the waiting premium $A(t)$ is strictly increasing, \cref{prop:state-msrble} implies that it is enough for the principal to announce, at some point in time, a state-measurable subsidy rule (see \cref{fig:subs-ota}). 
To obtain conditions for the waiting premium to be increasing, write
\begin{align}\label{eq:A(t)-ex1}
    A(t+1) - A(t) =  \d^{t-1}(s-(1-\d)I)\left[ \frac{1}{1-p_t} - \frac{\d p_{t+1}}{1-p_{t+1}} \right].
\end{align}
Given our maintained assumption that $s \geq (1-\d)I$, we see that $A(t)$ is strictly increasing if and only if $s > (1-\d) I$, and
\begin{align}\label{eq:A(t)-ex1-suff}
    \frac{1-p_{t+1}}{1-p_t} > \d p_{t+1} \qquad \forall t\in \{1,2,\ldots,T-1\}.    
\end{align}
Inequality \eqref{eq:A(t)-ex1-suff} says that the ratio of successive breakdown probabilities (the LHS) is greater than the ``effective discount rate'' of $\d p_{t+1}$, which takes into account both the innate time discount $\d$ and the probability of no breakdown, $p_{t+1}$. A sufficient condition for \eqref{eq:A(t)-ex1-suff} is that $p_t$ is non-increasing. Intuitively, when the breakdown probability $1-p_t$ increases over time, the waiting premium also increases over time. See \cref{fig:subs-ota} for an illustration.
\begin{figure}[!h]
    \centering
    \includegraphics[width=0.6\linewidth]{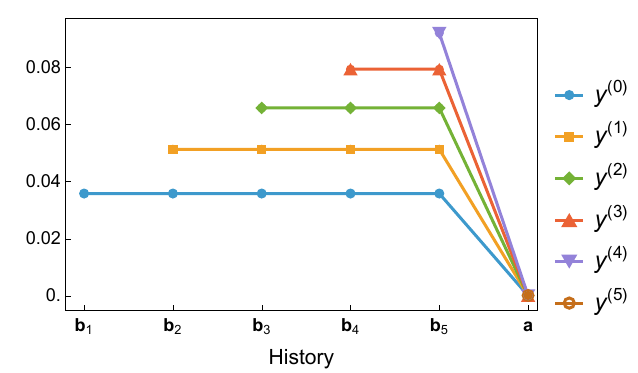}
    \caption{State-measurable policy rule (\cref{ex:renewing-subsidies} with increasing $A(t)$, $T=5$)}
    \medskip
    \small
    \justifying
    \noindent 
    The principal can implement the optimal policy rule by waiting until some period and then announcing a state-measurable policy rule. 
    (Parameters: $p_t=0.9$ for all $t$, $\delta=0.95, I=5, s=0.55, V=2$ and $c(y)=y^2$. Since $s > (1-\d)I$ and $p_t$ is constant, \cref{prop:state-msrble} applies.)
    
    \label{fig:subs-ota}
\end{figure}

To further interpret \eqref{eq:A(t)-ex1-suff}, suppose the transition probabilities $\{p_t\}$ arise from a discrete version of the Poisson bad news process. That is, suppose there is an underlying true state of the world $\omega \in \{0,1\}$ with prior $\pi = \Probability(\om=1) \in (0,1)$; conclusive bad news arrives in each period with probability $\lambda \in (0,1)$ conditional on $\omega = 0$; the state $\t=1$ represents the event in which no bad news has arrived by time $T$; and the ex post payoffs $u(1,y,t)$ and $v(1,y,t)$ are reinterpreted as expected payoffs across $\om$, conditional on no bad news up to time $T$. Then, the transition probability is given by 
$p_t = \frac{\pi + (1-\pi)(1-\l)^t}{\pi + (1-\pi)(1-\l)^{t-1}},$
and \eqref{eq:A(t)-ex1-suff} is equivalent to
\begin{align}\label{eq:A(t)-ex1-suff-Poisson}
    1-\l > \d p_t p_{t+1}.
\end{align}
In particular, \eqref{eq:A(t)-ex1-suff} and \eqref{eq:A(t)-ex1-suff-Poisson} hold whenever $1-\l \geq \d$. Intuitively, the waiting premium is increasing when learning happens slowly (low $\l$), so that the informational benefit from waiting remains high as time passes, and when discounting is severe (low $\d$), so that the net flow benefit in each period, which is foregone when delaying investment, depreciates quickly over time. \qed
\end{customex}

\section{Nested Policies and Investment Timing}\label{sec:nested}

In this section, we continue to assume a constant $\phi$ and show that when the waiting premium $A(t)$ is decreasing, the optimal solutions to inner problems have a nested structure (\cref{subsec:nested}). In turn, the nested structure will imply that moral hazard delays investment (\cref{subsec:investment-timing}). 

\subsection{Nested Policies} \label{subsec:nested}

We start with a definition.
\begin{defn}[Nestedness]
    Fix $\original$, and let $\{y^{(t)}\}_{t=0}^T$ be a sequence of policy rules such that each $y^{(t)}$ is a standard solution to $\inner$. The sequence $\{y^{(t)}\}_{t=0}^T$ is \textit{nested} if, for each $t \in \{0,1,\ldots,T-1\}$, we have $y^{(t)}(\bfh)=y^{(t+1)}(\bfh)$ for all $\bfh \in \{\bfb_{t+2}, \bfb_{t+3}, \ldots, \bfb_T, \bfa\}$.
\end{defn}
The solutions to inner problems are nested if, for each $t$, the solution to $[\ms P(t)]$ is identical to the solution to $[\ms P(t+1)]$, except possibly for $y(\bfb_{t+1})$ which equals 0 under any standard solution to $[\ms P(t+1)]$.
When solutions are nested, it is possible to solve the sequence of inner problems inductively -- first solve $[\ms P(T)]$ to obtain $y(\bfa)$, then solve $[\ms P(T-1)]$ while taking $y(\bfa)$ as given to obtain $y(\bfb_{T})$, and so on.
Our motivation for defining nestedness is mainly instrumental; we will show in \cref{subsec:investment-timing} that nestedness leads moral hazard to delay investment.

The following proposition provides a sufficient condition for solutions to be nested.

\begin{prop}[Nestedness]\label{prop: sufficient condition for nestedness}
    Suppose the following conditions hold:
    \begin{enumerate}[(i)]
        \item $\phi$ is constant.
        \item $A$ is strictly decreasing.
        \item For all $\t \in \{0,1\}$ and $t,t' \in T$,
        we have $\argmax_y v(\t,y,t) = \argmax_y v(\t,y,t') =: \uy(\t).$
        \item $u(1,\uy(1),T)>0$.
        \item $A(t) \neq w(0,\underline y(0))$  for every $t \in\{ 1, \ldots, T\}$.
    \end{enumerate}
    Then, the following statements are true:
    \begin{enumerate}
        \item The mapping $s\mapsto U_s(\ynull)$ 
        is strictly quasi-concave and has a unique maximizer $s = \hat t < \infty$.
        \item If $t\leq \hat t$, and $y$ solves $\inner$, then $U_t(y)=U_{t+1}(y)=\cdots =U_{\hat t}(y)$.
        \item $\inner$ has a unique standard solution $y^{(t)}$ for each $t \in T$, and $\{y^{(t)}\}_{t=0}^T$ is nested.
    \end{enumerate}
\end{prop}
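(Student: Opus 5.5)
The plan is to work entirely through the difference formula \eqref{eq:payoff-dif}, normalizing $\phi\equiv1$. For part (a), under $\ynull$ we have $y(\bfb_s)=\uy(0)$ for every $s$. Hence
\[
U_{s-1}(\ynull)-U_s(\ynull)=\Big(\prod_{i=1}^{s-1}p_i\Big)(1-p_s)\big[w(0,\uy(0))-A(s)\big].
\]
Since $A$ is strictly decreasing and $A(s)\neq w(0,\uy(0))$ by (v), the bracket changes sign at most once, from negative to positive. So $s\mapsto U_s(\ynull)$ is strictly increasing up to some $\hat t$ and strictly decreasing afterward on $\{0,\dots,T\}$, with no ties. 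This gives strict quasi-concavity on $T$ and a unique maximizer there. To compare with $U_\infty=0$, I will use (iv): $U_T(\ynull)=\big(\prod_i p_i\big)u(1,\uy(1),T)>0$. Therefore $\hat t<\infty$, and the maximizer over $T\cup\{\infty\}$ is unique.

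For part (b), fix $t\le\hat t$ and a solution $y$ of $\inner$. I will argue by contradiction: suppose some constraint $U_t\ge U_{s}$ with $t<s\le\hat t$ is slack, and take $s$ to be the smallest such index. By \cref{thm:inner}(b) and complementary slackness, slackness allows $y(\bfb_s)$ to be lowered toward the principal's preferred level $\uy(0)$ without violating any constraint. The reason is that, with $\phi$ constant, $y(\bfb_s)$ enters only $U_t,\dots,U_{s-1}$, and it enters all of them equally. Lowering it therefore only matters for $U_t\ge U_{s'}$ with $s'\ge s$, which tightens the slack constraints, while the constraints with $s'<s$ are unaffected. The principal must gain from this, so any interior optimum has $y(\bfb_s)=\uy(0)$. (Here I use that $y(\bfb_s)\ge \uy(0)$, which I expect to follow from raising being the only direction that helps the constraints, together with concavity of $v$.)

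Once $y(\bfb_s)=\uy(0)$, we get $w(0,y(\bfb_s))=w(0,\uy(0))>A(s)$ because $s\le\hat t$. By \eqref{eq:payoff-dif} this gives $U_{s-1}>U_s$. Iterating the sign comparison, and using that $A$ is decreasing, I expect to show that the constraints $U_t\ge U_{s'}$ for $s'$ up to $\hat t$ cannot bind unless $U_t=\dots=U_{\hat t}$. The cleanest route is to show that for each $s\le\hat t$ we have $U_{s-1}\ge U_s$ whenever $y(\bfb_s)\ge\uy(0)$, and that optimality forces equality. If $U_{s-1}>U_s$ held strictly, lowering $y(\bfb_s)$ slightly would be feasible and profitable unless $y(\bfb_s)=\uy(0)$. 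But $y(\bfb_s)=\uy(0)$ combined with the front-loading monotonicity of \cref{thm:inner}(b) would force $y(\bfb_{s'})=\uy(0)$ for all later $s'$. That would make $U_t$ coincide with the unconstrained value, and I will then show this contradicts $t<\hat t$, since $U_{\hat t}$ would exceed $U_t$. Getting this chain of equalities rigorous, especially handling the boundary cases where the zero lower bound on $y$ binds, is the main obstacle.

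For part (c), uniqueness of the standard solution follows from strict concavity of $v$ in $y$ and convexity of the feasible set of $\inner$ (the constraints are linear in the $w$-values). For nestedness, part (b) shows that the binding structure of $[\ms P(t)]$ is $U_t=U_{t+1}=\dots=U_{\hat t}$. By \eqref{eq:payoff-dif}, this pins down $w(0,y(\bfb_{s}))=A(s)$ for $t<s\le\hat t$, independently of $t$. The remaining variables $y(\bfb_s)$ for $s>\hat t$ and $y(\bfa)$ solve the same reduced problem, namely maximizing the principal's continuation payoff subject to $U_{\hat t}\ge U_{s'}$ for $s'>\hat t$. This reduced problem does not depend on $t$, because $\phi$ is constant and the earlier variables drop out of those differences. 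Hence the $[\ms P(t)]$ and $[\ms P(t+1)]$ solutions agree on $\{\bfb_{t+2},\dots,\bfa\}$. For $t>\hat t$, the constraints are slack by part (a)'s monotonicity, and the solution equals $\ynull$ on the relevant histories, which is again nested.
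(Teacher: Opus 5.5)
Your part (a) matches the paper. Part (c) is fine once (b) is in hand; getting uniqueness from strict concavity of the objective on the non-fixed coordinates over a convex feasible set is a valid alternative to the paper's explicit pinning-down.

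The gap is in part (b). You take $s$ to be the smallest index with $U_t>U_s$ and assert that $y(\bfb_s)$ can be lowered toward $\uy(0)$ without violating any constraint. But with $\phi$ constant, lowering $y(\bfb_s)$ lowers $U_t,\dots,U_{s-1}$ by the same amount and leaves every $U_{s'}$ with $s'\ge s$ unchanged. So it tightens \emph{every} constraint $U_t\ge U_{s'}$ with $s'\ge s$, and you have only shown that the one at $s'=s$ is slack. Equivalently, the first-order condition for $y(\bfb_s)$ carries the multipliers of all constraints indexed $s'\ge s$, so slackness at $s$ alone does not give $y(\bfb_s)=\uy(0)$.

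What must be ruled out is exactly a slack constraint followed by a binding one, and that is where hypothesis (ii) has to do its work; your proposal only gestures at it. The paper's argument runs as follows. Let $s+q$ be the first index after $s$ with $U_t=U_{s+q}$; it exists because \cref{lem:utility-at-null-peak} gives $U_t(y)=U_{\hat t}(y)$. From $U_{s-1}=U_t>U_s$, \eqref{eq:payoff-dif} gives $w(0,y(\bfb_s))>A(s)$. From $U_{s+q-1}<U_t=U_{s+q}$ it gives $w(0,y(\bfb_{s+q}))<A(s+q)$. Complementary slackness together with \cref{thm:inner}(b) forces $y(\bfb_s)=\cdots=y(\bfb_{s+q})$. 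Since $A(s)>A(s+q)$, this is impossible.

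Your remaining chain also has a sign error. By your own part (a), $s\le\hat t$ means $w(0,\uy(0))<A(s)$ (that is why $U_s(\ynull)$ rises up to $\hat t$), not $w(0,\uy(0))>A(s)$. So the claim that $y(\bfb_s)\ge\uy(0)$ implies $U_{s-1}\ge U_s$ is false. In any case, ``$U_{s-1}>U_s$'' is not the right feasibility test for lowering $y(\bfb_s)$; what matters is $U_t$ against each $U_{s'}$ with $s'\ge s$. With the correct sign, $y(\bfb_s)=\uy(0)$ would give $U_t=U_{s-1}<U_s$ immediately, so the front-loading detour is unnecessary and the substantive step is the dip argument above.

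If you want to avoid the lemma, you can proceed as follows:
\begin{itemize}
\item Use the dip argument to show that $\{s'\le T: U_t=U_{s'}\}$ is an initial segment $\{t+1,\dots,m\}$.
\item Note that $U_t\ge U_T(y)\ge U_T(\ynull)>0$, using $y\ge\ynull$, so the constraint against $\tau_\infty$ is slack.
\item Conclude from the first-order conditions that $y(\bfb_{s'})=\uy(0)$ for all $s'>m$.
\item Then $m<\hat t$ would give $U_t=U_m<U_{\hat t}$, a contradiction.
\end{itemize}
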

\begin{proof}
See \Cref{proof: sufficient condition for nestedness}.
\end{proof}

Whereas \cref{prop:state-msrble} says that \textit{none} of the incentive constraints bind until the very last period if the waiting premium is increasing,  \cref{prop: sufficient condition for nestedness} says that \textit{every} incentive constraint binds until the time at which the agent would have stopped anyway (part (b)), and the solutions to inner problems are nested (part (c)), if the waiting premium is decreasing and additional conditions (iii) and (iv) hold.
Condition (iii) coincides with the first part of \cref{ass:rent} and is satisfied, for instance, if $v$ is decreasing in $y$ as in \cref{ex:renewing-subsidies}.
Condition (iv) says that even if the principal commits to the ex post optimal policy $\uy$, when the agent reaches the final period and observes that the state is good, he prefers to invest. Condition (v) holds generically and rules out the possibility that $U_s(\ynull)$, despite being strictly quasi-concave, is maximized at two adjacent values of $s$.

Part (a) follows from equation \eqref{eq:payoff-dif} --  $\ynull(\bfb_t)$ is by definition constant in $t$, and the waiting premium $A$ is decreasing in time, so the agent's expected payoff under $\ynull$ must be quasi-concave in $t$. 
To see why part (b) holds, suppose towards a contradiction that under a policy rule $y$ that solves $\inner$, the agent's expected utilities from simple stopping times have a ``dip''; for example, let us say $U_t(y) >U_{t+1}(y)$ and $U_{t+1}(y) < U_{t+2}(y)$. By complementary slackness (\cref{thm:inner}(b)), it must be that $y(\bfb_{t+1}) = y(\bfb_{t+2})$. Then, since $A(t+1) > A(t+2)$ by assumption, equation \eqref{eq:payoff-dif} implies that $U_t(y) - U_{t+1}(y) <0$ whenever $U_{t+1}(y) - U_{t+2}(y) < 0$, a contradiction. Part (c) follows because part (b) and equation \eqref{eq:payoff-dif} together imply $w(0,y^{(t)}(\bfb_s)) = A(s)$ for all $s \in \{t+1, t+2, \ldots, \hat t\}$, which uniquely pins down  $y^{(t)}(\bfb_s)$, independently of $t$.

\begin{customex}{1}[Renewing Subsidies] 
Among the conditions in \cref{prop: sufficient condition for nestedness}, (i) and (iii) hold by assumption. 
Condition (iv) holds if $1 > (1-\d)I$. We can see from equation \eqref{eq:A(t)-ex1} that (ii) holds if and only if $s > (1-\d)I$ and $\frac{1-p_{t+1}}{1-p_t}<\delta p_{t+1}$. The latter inequality says that, over time, the breakdown probability $1-p_t$ decreases quickly enough, so that the agent's waiting premium also decreases over time. Under these conditions, \cref{prop: sufficient condition for nestedness} tells us that the solutions to inner problems must be nested (see \cref{fig:subs-nest}). \qed



\begin{figure}[!htb]
    \centering
    \includegraphics[width=0.6\linewidth]{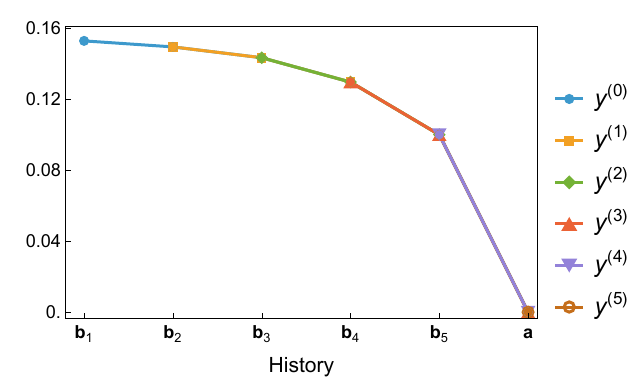}
    \caption{Solutions to inner problems (\cref{ex:renewing-subsidies}, decreasing $A(t)$, $T=5$)}
    \medskip
    \small
    \justifying
    \noindent 
    The solutions to inner problems are nested. For example, $y^{(2)}$ coincides with  $y^{(3)}$ at every history except $\bfb_{3}$. 
    (Parameters: $(p_t)_{t=1}^5=(\frac15,\frac35,\frac34,\frac56,\frac89)$, $\delta=0.9, I=3, s=0.5, V=0.1$ and $c(y)=y^2$. The waiting premium is decreasing: $A(1) = 0.90,\; A(2) = 0.88,\; A(3) = 0.85,\; A(4) = 0.77,$ and $A(5) = 0.59$.)
    \label{fig:subs-nest}
\end{figure}
\end{customex}

\subsection{Investment Timing}\label{subsec:investment-timing}

We now provide a sufficient condition for moral hazard to delay investment, meaning the second-best investment timing is (weakly) later than the first-best investment timing. 

\begin{prop}[Delayed Investment]\label{prop: nestedness implies delays}
    Suppose the following conditions hold:
    \begin{enumerate}[(i)]
        \item $\phi$ is constant.
        \item The principal's utility function can be written as $v(\t,y,t) = c(\t,y) + f(\t,t)$ for some functions $c$ and $f$.\footnote{This condition is satisfied by \cref{ex:renewing-subsidies}.}
        \item There exists a nested sequence $\{y^{(t)}\}_{t=0}^T$ of second-best standard solutions to inner problems. 
    \end{enumerate} 
    Then, if $(y^{SB},\tau_{t^{SB}})$ solves $\original$, there exists a solution $(y^{FB},\tau_{t^{FB}})$ to $[\ms P^{FB}]$ with $t^{FB} \leq t^{SB}$.
\end{prop}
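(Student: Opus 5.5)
We argue by a monotone comparative statics (single-crossing) comparison between the second-best value function $W(t)$ and the first-best value function $W^{FB}(t)$, the latter being the value of $[\ms P^{FB}]$ restricted to stopping time $\tau_t$. By \cref{thm:aux=original}(b) and \cref{thm:inner=aux}(a), if $(y^{SB},\tau_{t^{SB}})$ solves $\original$ then $t^{SB}\in\argmax_s W(s)$. It therefore suffices to show the following. For every $s<t$ with $t\le T$, if $W(t)\ge W(s)$, then there is some $s'\le t$ with $W^{FB}(s')\ge W^{FB}(s)$. We will use this with $t=t^{SB}$ to conclude that some $t^{FB}\le t^{SB}$ maximizes $W^{FB}$. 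Concretely, we aim to show that the increment $W(t)-W(s)$ is no larger than the corresponding first-best increment, i.e. $W^{FB}(t)-W^{FB}(s)\ge W(t)-W(s)$ for $s<t$. Then $t^{SB}$ maximizing $W$ implies that, for every $s<t^{SB}$, $W^{FB}(t^{SB})\ge W^{FB}(s)$. The first-best maximizer can then be taken in $\{0,\dots,t^{SB}\}$, since any later maximizer $t'$ only matters through the weak inequality direction. Hence we actually need the reverse comparison for $s>t$. We will state it as: for $t<s$, $W(t)-W(s)\le W^{FB}(t)-W^{FB}(s)$, i.e. moving earlier is relatively more costly under second best.

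To compute $W(t)-W(t+1)$, we use nestedness (iii). By (iii), $y^{(t)}$ and $y^{(t+1)}$ agree on $\{\bfb_{t+2},\dots,\bfa\}$, while $y^{(t)}(\bfb_s)=y^{(t+1)}(\bfb_s)=0$ for $s\le t$. They differ only at $\bfb_{t+1}$, where $y^{(t+1)}(\bfb_{t+1})=0$. By the separability (ii), the principal's payoff is $\E[c(\t,y(\bfh))\mathbf 1\{\tau\ne\infty\}]+\E[f(\t,\tau)]$, with $c(\t,y)$ interpreted through $v(\t,y,\infty)$ on non-stopping histories. Then
\[
W(t)-W(t+1)=F(t)-F(t+1)+\Big(\textstyle\prod_{i\le t}p_i\Big)(1-p_{t+1})\big[c(0,y^{(t)}(\bfb_{t+1}))-c(0,0)\big],
\]
where $F(t):=\E[f(\t,\tau_t)]+(\text{policy terms common to both})$. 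Telescoping gives $W(t)-W(s)$ as $F(t)-F(s)$ plus a sum of costs $c(0,y^{(t)}(\bfb_r))-c(0,0)$ over $r=t+1,\dots,s$. These costs are non-positive relative to the no-stop benchmark once we use \cref{ass:P-payoff-main}(c), since the policies here exceed the principal's ex post optimum. Because $\phi$ is constant, the relevant incentive differences across stopping times depend only on $y(\bfb_r)$ via \eqref{eq:payoff-dif}.

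For the first-best side, the comparison uses $[\ms P^{FB}]$. Because $\phi$ is constant, $U_t(y)$ for the first-best problem needs only the IR constraint $U_t(y)\ge \max_s U_s(\mathbf 0)$, which is a single constraint whose tightness is the same across $t$ up to the $g$-terms. We will show that $W^{FB}(t)-W^{FB}(s)\ge W(t)-W(s)$ by exhibiting, from $y^{(t)}$, a feasible first-best policy at $t$ and comparing it with the first-best optimum at $s$. More precisely, take the first-best solution $y^{FB,s}$ at $s$. Modify it on $\bfb_{t+1},\dots,\bfb_s$ by copying $y^{(t)}$ there. This yields a feasible first-best policy at $t$ whose value gain equals exactly the second-best increment. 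Feasibility follows from the binding incentive equalities in the second-best solution together with \eqref{eq:payoff-dif}, which give $U_t(y)-U_s(y)\ge0$ on the copied components. The main obstacle is this feasibility/value-accounting step: we must check that the copied components raise the agent's payoff at $t$ by at least the IR shortfall while affecting the principal's payoff identically in both problems. The separability in (ii) is exactly what makes the principal's accounting identical.
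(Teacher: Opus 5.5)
Your proposal is correct and is essentially the paper's argument. The paper also takes a first-best solution at a later time $\hat t$, overwrites it on $\bfb_{t+1},\dots,\bfb_{\hat t}$ with $y^{(t)}$, and uses constant $\phi$ for feasibility (via $U_t(y^{(t)})\ge U_{\hat t}(y^{(t)})=U_{\hat t}(y^{(\hat t)})$) and separability plus nestedness for the exact value accounting; it just phrases this as a contradiction at $t=t^{SB}$ rather than as your increasing-differences inequality $W^{FB}(t)-W^{FB}(s)\ge W(t)-W(s)$, and that feasibility step needs only the inequality $U_t(y^{(t)})\ge U_s(y^{(t)})$ from $[\ms P(t)]$, not any binding equalities, while your side remark that the costs are non-positive via Assumption 2(c) is unneeded.
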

\begin{proof}
See \Cref{proof: nestedness implies delays}.
\end{proof}
Intuitively, when the second-best solutions are nested, 
the marginal cost to the principal of inducing the agent to invest one period earlier is always higher under second-best than under first-best, so investment must be delayed under second-best. The delay is sometimes strict, as illustrated for the motivating example in \cref{OA:motivating-example}.

Perhaps surprisingly, when second-best solutions are not nested, it is possible for moral hazard to \textit{hasten}, rather than delay, investment. We provide such an example in \cref{OA:hastening}.

\section{Related Literature}
\label{sec:lit}


There is a long-standing literature on investment under uncertainty \citep{bernanke1983irreversibility, dixit1994investment, friedman1968role,rodrik1991policy}.\footnote{The \textit{bad news principle} by \cite{bernanke1983irreversibility} states that, in a single-person decision problem of whether to invest early, what matters is the severity of bad future outcomes. Our model studies the strategic implications of this principle: we ask how a principal should optimally control an agent's future outcomes in order to influence the agent's investment timing.}
A growing empirical literature has made progress in quantifying the \textit{benefits} of policy certainty, showing that greater certainty about fiscal, trade, or environmental policies encourages investment \citep{fernandez2015fiscal,gulen2016policy, handley2017policy, gowrisankaran2024policy, chen2024dynamic}.\footnote{See also \cite{baker2016measuring}, which develops a measure of policy uncertainty.} However, much less attention has been paid to the \textit{cost} of certainty: policy becomes less responsive to future states. The current paper complements this literature by developing a theoretical framework to understand how a policymaker should endogenously provide certainty, trading off its benefits against its costs.

A literature on dynamic commitment under uncertainty \citep{athey2005optimal,halac2014fiscal,halac2022instrument, choi2024i'll} has  studied settings where future information arrives privately to the principal. In our language, 
\cite{choi2024i'll} considers a principal who delays an agent's stopping decision by promising today to provide certainty tomorrow. This literature has focused on how the principal incentivizes her future self to be truthful after receiving private information. 
In contrast, we study an environment where the principal and the agent observe the same public signal. This is highly relevant in many settings, such as in designing environmental or macroeconomic policies, but remains understudied. 
Enriching the agent's moral hazard constraint while dropping the principal's incentive constraints allows us to tractably analyze the optimal dynamic policy rule.

The literature has studied how to incentivize an agent to acquire -- or refrain from acquiring -- private information before signing a contract \citep{cremer1992gathering,cremer_contracts_1998}, exerting effort \citep{lewis_information_1997}, or buying a security \citep{boot_security_1993,fulghieri2001information,inderst2006informed,yang2019financing,yang_optimality_2020,li2023security}. In contrast, we consider a real options model, where information is public and arrives exogenously over time, and the agent makes an irreversible investment decision. This allows us to abstract away from adverse selection associated with private learning and identify the intrinsic trade-off that the principal faces, between adapting her policy to uncertain future states and providing certainty to discourage the agent from waiting. Also, the dynamic nature of our model allows us to characterize how the optimal policy responds to information arrival and how the agent's rent depends on the way delay affects his payoff.\footnote{In static models of learning, the cost of information acquisition is typically assumed to be additively separable from the agent's payoff from the underlying decision. In contrast, in our model, the cost of delaying investment interacts with policy and state through $\phi(t)$.}

More broadly, various strands of literature have studied how option value makes an agent wait for information. \cite{mcclellan2024dynamic} shows that, in a dynamic negotiation problem, the principal should incentivize the agent to continue negotiating by giving the agent more time to explore his time-varying outside option. In models of social learning, agents wait until others act in order to learn from their actions \citep{chamley_information_1994, murto_learning_2011,frick_innovation_2024,fonseca_staggered_nodate, bobtcheff_information_2025, laiho_gradual_2025}.
A sender can disclose information about an exogenous state of the world to influence a receiver's stopping decision \citep{ely2017beeps,ely2020moving,orlov2020persuading,saeedi2024getting,koh2024persuasion}.\footnote{There are also principal-agent models where the agent chooses a stopping time, but this choice is observed by the principal \citep{grenadier2005investment,board2007selling,kruse2015optimal}. In these papers, a key source of the agency problem is that the agents have hidden information.}

\cite{kremer2004strong} and \cite{kremer2020advance} have argued that a government can incentivize vaccine development by making an advance market commitment to purchase successfully developed vaccines. The application of our model to the procurement problem speaks to the optimal design of the advance market commitment when the government faces uncertainty about its future demand for vaccines.




\section{Conclusion}

This paper identifies a fundamental agency friction: to encourage an agent to invest, a principal must sacrifice her ability to adapt to future information. We believe this trade-off is relevant in many economic environments and constitutes a useful building block for future theoretical and empirical work.

Whether real-world policymakers provide too much or too little certainty remains an important empirical question. In recent years, many commentators have raised concerns that uncertainty about future policy is hindering investment. A report commissioned by the International Chamber of Commerce argues that policy uncertainty reduced real business investment by \$202 billion in 2025.\footnote{See \cite{icc2026policy}.}  Nevertheless, highly uncertain policies may be optimal when there are sufficiently large benefits to flexibly adapting policy to future states. This paper suggests how empirical evidence on the value of policy flexibility could inform this debate. To the extent that real-world policies do not provide certainty optimally, our paper offers normative guidance for their design.

\bibliographystyle{economet}
\bibliography{reference}

\appendix

\section{Appendix}\label{sec:appendix}

\subsection{Additional Notation} 
For any $t\in T$, and any $s\leq t$, define the \textit{partial histories}
\[\mathbf{a}_t:=(\overbrace{1,\dots,1}^{t \text{ entries}}),\qquad \mathbf{b}_{s,t}:=(\overbrace{1,\dots,1}^{s-1 \text{ entries}},\overbrace{0,\dots,0}^{t-s+1\text{ entries}}).\]
The partial history $\mathbf{a}_t$ indicates that it is period $t$ and there has been no breakdown. The partial history $\mathbf{b}_{s,t}$ indicates that it is period $t$ and there was a breakdown at period $s$. 
The partial history $\mathbf{a}_0$ indicates that it is period 0. 
Let 
$\ms{G}:=\{\mathbf{a}_0\}\cup(\cup_t \{\mathbf{a}_t\})\cup (\cup_{t,s\leq t}\{\mathbf{b}_{s,t}\})$
denote the set of possible partial histories. Notice that a history is simply a partial history of length $T$, and that, accordingly, $\ms{H}\subset \ms{G}.$

Say that a partial history $\mathbf{g}'$ of length $t$ is a \textit{sub-history} of a strictly longer partial history $\mathbf{g}$, denoted $\mathbf{g}'\prec_{sh}\mathbf{g}$, if $\mathbf{g}'$ coincides with the first $t$ entries of $\mathbf{g}$.
The partial history $\mathbf{a}_0$ is defined to be a sub-history of any other partial history. 

A map $d: \ms G \to \{\text{\qcr{wait}},\text{\qcr{stop}},\varnothing\}$ is a \textit{stopping rule} if
\[d(\bfg)=\varnothing \iff \exists\, \bfg'\prec_{sh}\bfg \text{ with }d(\bfg')=\text{\qcr{stop}}.\]
Given a signal process, there is a one-to-one relationship between stopping rules $d$ and stopping times $\tau$. We will use $d_{\tau}$ to denote the stopping rule induced by stopping time $\tau$, and $\tau_d$ to denote the stopping time induced by stopping rule $d$.

For a partial history $\bfg$, define
\[p(\bfg):=\begin{cases}
\left(\prod_{i=1}^{s-1}p_i\right)(1-p_s)  \caseif \bfg=\bfb_{s,t}\\
\prod_{i=1}^{t}p_i  \caseif \bfg=\bfa_t
\end{cases}\]
to be the probability that the partial history $\bfg$ will be reached at some point. Notice that $p$ is also defined for histories, via the relationships $\bfa=\bfa_T$ and $\bfb_t=\bfb_{t,T}$. 

Next, define
$U_{\tau}(y)\:=\E[u(\theta,y(\bfh),\tau)]$ and $V_\tau(y)\:=\E[v(\theta,y(\bfh),\tau)]$
to be the players' expected utilities if the agent adopts $\tau$ and the policy rule is $y$.
When it is clear from context, we will omit the argument $y$ and simply write $U_{\tau}$.

Let us present a useful formula for $U_{\tau}$ and $V_\tau$. In general, a stopping time $\tau$ will specify, first, that the agent stops at at most one partial history at which there has been no breakdown. Call this partial history $\bfa_t$ (with $t=\infty$ if there is no such partial history). Second, it will specify some collection of partial histories $\ms R:=\{(r,s):d(\bfb_{r,s})=\ttstop\}$ where the agent stops, following a breakdown. We then have
\begin{align}
    U_{\tau}(y)&\,=\sum_{(r,s)\in\ms R} p(\bfb_{r}) u(0,y(\bfb_r),s)+U_t(y),\label{eq:u_formula} \\
    V_\tau(y)&\,=\sum_{(r,s)\in\ms R} p(\bfb_{r}) v(0,y(\bfb_{r}),s)+V_t(y) + \sum_{r:\, (r,s) \notin \ms R \ \forall s\geq r} p(\bfb_r)v(0,y(\bfb_r),\infty),\label{eq:v_formula}
\end{align}
where $V_t(y):=\sum_{i=t+1}^T p(\bfb_i)v(0,y(\bfb_i),t)+p(\bfa)v(1,y(\bfa),t)$. If the policy $y$ satisfies $y(\bfh) = 0$ for $\bfh \in \{\bfb_1,\ldots,\bfb_t\}$, then $V_t(y)$ represents the principal's expected utility under $(y,\tau_t)$. Note that the second summation in \eqref{eq:v_formula} is non-positive by assumption.

It will also be useful to define, for $s\in T$ and $t \in \Tui$ with $s \leq t$,
\begin{align}
    U_{s|t}(y)\:= \begin{dcases}
        \sum_{i=t+1}^{T}p(\bfb_i)u(0,y(\bfb_i),s)+p(\bfa)u(1,y(\bfa),s) \caseif t<\infty \\
        p(\bfa)u(1,y(\bfa),s) \caseif t=\infty,
        \end{dcases} \label{eq:A_payoff_contrib}\\
    V_{s|t}(y)\:= \begin{dcases}
        \sum_{i=t+1}^{T}p(\bfb_i)v(0,y(\bfb_i),s)+p(\bfa)v(1,y(\bfa),s) \caseif t<\infty \\
        p(\bfa)v(1,y(\bfa),s) \caseif t=\infty.\label{eq:P_payoff_contrib}
    \end{dcases}
\end{align}
That is, $U_{s|t}(y)$ ($V_{s|t}(y)$) represents the contribution to the agent's (principal's) utility from breakdowns later than $t$ if the agent adopts $\tau_s$ and the policy is $y$. Again, we sometimes omit the argument $y$.

\subsection{\texorpdfstring{Proof of \cref{thm:inner}}{Proof of Theorem 1}}\label{proof:thm:inner}

Since $U_t - U_{t'}$ is concave in $y$ for all $t' \geq t$, and since we may without loss set $y(\bfb_s) = 0$ for all $s\leq t$, $\inner$ is a concave maximization program with a finite number of constraints, satisfying strong duality. 
For $s \in \{1,2,\ldots , T-t\}$, let $\lambda_s$ denote the Lagrange multiplier corresponding to the constraint $U_t\geq U_{t+s}$, and let $\l_\infty$ be the multiplier for $U_t \geq U_\infty$. The first-order condition of the Lagrangian $\ms L$ with respect to $y(\bfa)$ is
    \begin{equation}
    \label{eq_foc_ya}
        \frac{\partial \ms L} { \partial  y(\bfa)} = v_y(1,y(\bfa),t)+ \sum_{r=1}^{T-t} \lambda_r\left[\phi(t)w_y\left(1,y(\bfa)\right) - \phi(t+r)w_y\left(1,y(\bfa)\right) \right] + \lambda_\infty \phi(t)w_y\left(1,y(\bfa)\right) \leq 0,
    \end{equation}
    where the inequality is strict only if $y(\bfa)$ is equal to its lower bound of 0.
    The first-order condition with respect to $y(\bfb_{t+s})$ is
    \begin{align}
    \begin{split}
        \label{eq_foc_yb}
        \frac{\partial \ms L} { \partial  y(\bfb_{t+s})} = &\, v_y(0,y(\bfb_{t+s}),t)+\sum_{r=1}^{s-1} \lambda_r\left[\phi(t)w_y\left(0,y(\bfb_{t+s})\right) - \phi(t+r)w_y\left(0,y(\bfb_{t+s})\right)\right] \cr
        &\, + \sum_{r=s}^{T-t} \lambda_r\phi(t)w_y\left(0,y(\bfb_{t+s})\right) +\lambda_\infty \phi(t)w_y\left(0,y(\bfb_{t+s})\right) 
        \leq 0,
        \end{split}
    \end{align}
where the inequality is strict only if $y(\bfb_{t+s})=0$.

We will prove part (b) first, and then prove parts (a) and (c).

\paragraph*{Proof of (b).} 
To show \eqref{eq:thm1-b}, we will argue that $y(\bfb_{t+s}) \geq y(\bfb_{t+s+1})$ for each $s$. This holds trivially if $y(\bfb_{t+s+1}) = 0$, so let us assume that $y(\bfb_{t+s+1}) > 0$, which implies $\partial \ms L / \partial y(\bfb_{t+s+1}) = 0.$
Suppose towards a contradiction that $y(\bfb_{t+s})<y(\bfb_{t+s+1})$. Then, by using the assumptions that $v$ is strictly concave, $u$ is strictly increasing and concave, and $\phi$ is positive and non-increasing, one can verify that $\partial \ms L / \partial y(\bfb_{t+s}) > \partial \ms L / \partial y(\bfb_{t+s+1}) $. Since $\partial \ms L / \partial y(\bfb_{t+s}) \leq 0$, we must have $\partial \ms L / \partial y(\bfb_{t+s+1}) < 0$, a contradiction.

To prove the remainder of (b), first suppose $y(\bfb_{t+s}) = y(\bfb_{t+s+1}) > 0$. We then have 
$\partial \ms L / \partial y(\bfb_{t+s}) = \partial \ms L / \partial y(\bfb_{t+s+1}) = 0.$
After canceling out all terms except the ones involving $\l_s$, the equality $\partial \ms L / \partial y(\bfb_{t+s}) = \partial \ms L / \partial y(\bfb_{t+s+1})$ reduces to
$ 
\lambda_s \phi(t) = \lambda_s (\phi(t)-\phi(t+s)).
$
It must then be that $\lambda_{s}=0$, i.e., the constraint $U_t\geq U_{t+s}$ is slack. 

Next, suppose $y(\bfb_{t+s}) > y(\bfb_{t+s+1})$. We must then have $y(\bfb_{t+s})>0$ and $\partial \ms L / \partial y(\bfb_{t+s}) = 0$, implying
$\partial \ms L / \partial y(\bfb_{t+s}) \geq \partial \ms L / \partial y(\bfb_{t+s+1})$. 
Since $v$ is strictly concave, $u$ is strictly increasing and concave, and $\phi$ is positive and non-increasing, the inequality $\partial \ms L / \partial y(\bfb_{t+s}) \geq \partial \ms L / \partial y(\bfb_{t+s+1})$ holds only if
$\lambda_s \phi(t) > \lambda_s(\phi(t)-\phi(t+s)).$ 
This inequality fails if $\l_s=0$, so it must be that $\lambda_s>0$. That is, $U_t\geq U_{t+s}$ must bind. 

\paragraph*{Proof of (a).}


To prove \eqref{eq:thm1-a}, first consider the case $\partial \ms L / \partial y(\bfa) < 0$, implying $y(\bfa)=0$. Note that we have
\begin{align*}
    \frac{u_y(0,y(\bfb_T),t)}{u_y(1,0,t)} \leq \frac{u_y(0,0,t)}{u_y(1,0,t)} \leq 1 \leq \frac{v_y(0,0,t)}{v_y(1,0,t)} \leq \frac{v_y(0,y(\bfb_T),t)}{v_y(1,0,t)},
\end{align*}
where the first inequality follows from the concavity of $u$, the second inequality from $u_y > 0$ and \cref{ass:A-payoff-main}(e), the third inequality from $v_y(1,0,t) < 0$ (which is necessary for $\partial \ms L / \partial y(\bfa) < 0$) and \cref{ass:P-payoff-main}(f), and the last inequality from the concavity of $v$. Comparing the first and the last fractions and rearranging gives
\begin{align*}
    \frac{v_y(0,y(\bfb_T),t)}{u_y(0,y(\bfb_T),t)} \leq \frac{v_y(1,0,t)}{u_y(1,0,t)}  = \frac{v_y(1,y(\bfa),t)}{u_y(1,y(\bfa),t)},
\end{align*}
giving us \eqref{eq:thm1-a}.
Next, consider the case $\partial \ms L / \partial y(\bfa) = 0$.
The first-order conditions with respect to $y(\bfb_T)$ and $y(\bfa)$ can be rearranged to
    \begin{align}
    \frac{v_y(0,y(\bfb_T),t)}{u_y(0,y(\bfb_T),t)} &\,\leq
    -\sum_{r=1}^{T-t-1}\lambda_r\left(\frac{\phi(t)-\phi(t+r)}{\phi(t)}\right)
    -\lambda_{T-t}-\lambda_\infty \label{eq:thm1-proof-4}\\
    \frac{v_y(1,y(\bfa),t)}{u_y(1,y(\bfa),t)} &\,=
    -\sum_{r=1}^{T-t}\lambda_r\left(\frac{\phi(t)-\phi(t+r)}{\phi(t)}\right)-\lambda_\infty. \label{eq:thm1-proof-5}
    \end{align}
Suppose towards a contradiction that
\eqref{eq:thm1-a} fails. Conditions \eqref{eq:thm1-proof-4} and \eqref{eq:thm1-proof-5} then imply $\lambda_{T-t} \phi(t)<\lambda_{T-t}\left(\phi(t)-\phi(T)\right).$
This is a contradiction because $\lambda_{T-t}\geq 0$ and $\phi(T) >0$. We have thus proved \eqref{eq:thm1-a}.

To prove the remainder of (a), first suppose $y(\bfa)>0$, $y(\bfb_T)>0$, and that \eqref{eq:thm1-a} holds as an equality. Since \eqref{eq:thm1-proof-4} and \eqref{eq:thm1-proof-5} hold with equalities, equating their right-hand sides gives
$\lambda_{T-t} \phi(t) = \lambda_{T-t}\left(\phi(t)-\phi(T)\right),$
which implies $\lambda_{T-t}=0$.
Next, suppose $y(\bfa)>0$, $y(\bfb_T)>0$, and \eqref{eq:thm1-a} holds as a stricty inequality.
We then have
$\lambda_{T-t} \phi(t) > \lambda_{T-t}\left(\phi(t)-\phi(T)\right),$
implying $\lambda_{T-t}>0$.

\paragraph*{Proof of (c).}
We consider two cases.
First, suppose the constraint $U_t\geq U_{t+s}$ holds as an equality for some $s\in \{1,2,\ldots,T-t\}$. Without loss, let $s$ be the smallest such number.
By part (b), we must have $y(\bfb_{t+1})=\cdots = y(\bfb_{t+s})$.

Using \eqref{eq:A_payoff_contrib}, we may write
\begin{align*}
    U_{t+s} = U_t= u(0,y(\bfb_{t+1}),t)\sum_{i=1}^{s} p(\bfb_{t+i}) 
    + U_{t|t+s}.
\end{align*}
We have $U_{t+s}\geq 0$, so by APM, we have $U_{t|t+s} \geq U_{t+s}$. Hence, it must be that  $u(0,y(\bfb_{t+1}),t)\leq 0$. By part (b), we then have $u(0,y(\bfb_{t+r}),t)\leq 0$ for all $r\in \{1,\ldots,T-t\}$. But since $U_t\geq 0$, we must have $u(1,y(\bfa),t)\geq 0$. We thus have $u(1,y(\bfa),t)\geq u(0,y(\bfb_{t+1}),t)$, as desired.

Second, suppose that the constraint $U_t\geq U_{t+s}$ holds as a strict inequality for every $s\in \{1,2,\ldots,T-t\}$. Then, the last statement in part (b) implies $y(\bfb_{t+1})=\cdots = y(\bfb_{T})=:y(\bfb)$. Suppose towards a contradiction that $u(0,y(\bfb),t)>u(1,y(\bfa),t)$, which implies $y(\bfb)>y(\bfa)$ by \Cref{ass:A-payoff-main}(e). Then, condition \eqref{eq_foc_ya}, \Cref{ass:P-payoff-main}(b), and \Cref{ass:P-payoff-main}(f)  imply
$0 \geq v_y(1,y(\bfa),t) > v_y(0,y(\bfb),t)$,
while \cref{ass:A-payoff-main}(a) and \cref{ass:A-payoff-main}(e) imply $u_y(1,y(\bfa),t) \geq u_y(0,y(\bfb),t) > 0$. We therefore have
\begin{align}\label{eq:thm1-proof-8}
            \dfrac{v_y(0,y(\bfb_T),t)}{u_y(0,y(\bfb_T),t)}< \dfrac{v_y(1,y(\bfa),t)}{u_y(1,y(\bfa),t)}. 
\end{align}
On the other hand, since $U_t > U_{t+s}$ for every $s$, by complementary slackness, we have $\l_1 = \ldots = \l_{T-t} = 0$. So, the first-order conditions \eqref{eq_foc_ya} and \eqref{eq_foc_yb} can be rearranged to
\begin{align*}
    \dfrac{v_y(0,y(\bfb_T),t)}{u_y(0,y(\bfb_T),t)} = -\l_{\infty} \geq \dfrac{v_y(1,y(\bfa),t)}{u_y(1,y(\bfa),t)},
\end{align*}
contradicting \eqref{eq:thm1-proof-8}. This proves the result. \qed



\subsection{\texorpdfstring{Proof of \cref{lem:standard-solutions-exist}}{Proof of Lemma \ref{lem:standard-solutions-exist}}}\label{proof:lem:standard-solutions-exist}

\paragraph*{Proof of (a).}


Suppose the feasible set of $\inner$ is nonempty. We claim that it is without loss of optimality to impose the constraint $y \leq \bar y$ for some large enough $\bar y >0$. To see this, first consider a history $\bfh \in \{\bfb_{t+1},\ldots,\bfb_T,\bfa\}$, for which we have $\tau_t(\bfh)  = t < \infty$. It cannot be optimal for the principal to set $y(\bfh)$ arbitrarily high; this is because a feasible solution exists, $v(\t,y,t) \to -\infty$ as $y \to \infty$ (\cref{ass:P-payoff-main}(c)), and $\bfh$ occurs with positive probability. Next, consider a history $\bfh \in \{\bfb_1,\ldots,\bfb_t\}$, for which it must be that $\tau_t(\bfh) = \infty$. It is without loss to set $y(\bfh) = 0$, since this weakly increases the principal's payoff (\cref{ass:P-payoff-main}(c)) and does not affect the agent's incentive constraints. Hence, we may impose $y \leq \bar y$. Then, by the Weierstrass theorem, there exists an optimal solution to $\inner$.


Now, take any optimal solution $y$. Consider the policy rule $\tilde{y}$ that satisfies $\tilde{y}(\bfb_s)=0$ for all $s \leq t$ and is otherwise identical to $y$. Clearly, $\tilde y$ is a standard solution to $[\ms P(t)]$. \qed

The proof of (b) is analogous to the proof of (a) and is omitted.

\subsection{\texorpdfstring{Proof of \cref{thm:inner=aux}}{Proof of Theorem \ref{thm:inner=aux}}}\label{proof:thm:inner=aux}

We prove part (b) first, and then part (a).

\paragraph*{Proof of (b).} 

It will be enough to argue that, whenever $(y, \tau_t)$ itself does not solve $[\ms P]$, one can find $t' < t$ such that $t'\in \argmax_s W(s)$, $y$ solves $[\ms P(t')]$, and $(y, \tau_{t'})$ solves $[\ms P]$.

Suppose $(y, \tau_t)$ does not solve $[\ms P]$. Then, $(y,\tau_t)$ must be infeasible under $[\ms P]$ because the agent wants to deviate to an earlier stopping time. Take any $t' \in \argmax_{s \in \{0,\ldots,T,\infty\}} U_s (y)$. Clearly, we have $t' < t$.
Fixing the policy rule at $y$, we will argue that $V_{t'}(y) \geq V_t(y)$; that is, the principal's expected payoff is higher if the agent plays $\tau_{t'}$ than $\tau_{t}$. If $t=\infty$, then $y$, being a standard solution, must satisfy $y\equiv 0$. Then, \cref{ass:P-payoff-main}(d) implies $V_{t'}(y) \geq V_t(y)$. Hence, let us suppose $t<\infty$.

Using \eqref{eq:P_payoff_contrib} and the fact that $y$ is a standard solution to $\inner$, we can write
$$ V_{t'}(y) = \sum_{i=t'+1}^t p(\bfb_i) v(0,0,t') + V_{t'|t}(y). $$
First, \cref{ass:P-payoff-main}(a) and \cref{ass:P-payoff-main}(d) imply $v(0,0,t') \geq 0$. Next, due to \cref{ass:P-payoff-main}(a) and (c), $[\ms P(\infty)]$ has value $W(\infty) = 0$. Since $t$ maximizes $W(t)$, and $y$ solves $\inner$, we must have $V_t(y) \geq 0$. Then, APM implies $V_{t'|t}(y) \geq V_t(y)$. We have thus shown that $V_{t'}(y) \geq V_t(y)$.

We must then have $W(t') \geq W(t)$, since $y$ is feasible for $[\ms P(t')]$. However, we assumed $t \in \argmax_t W(t)$, so it  must be that $W(t') = W(t)$. Therefore, $y$ solves $[\ms P(t')]$, and $t'\in \argmax_s W(s)$. Since the agent's best response to $y$ among all simple stopping times is $\tau_{t'}$, the pair $(y, \tau_{t'})$ is feasible under, and therefore solves, $[\ms P]$. This proves the claim.

\paragraph*{Proof of (a).}

Suppose $(y,\tau_{t})$ solves $[\ms P]$. Then, $y$ must be feasible under $[\ms P(t)]$. Moreover, \cref{lem:standard-solutions-exist}(a) and \cref{thm:inner=aux}(b) imply that the value of $[\ms P]$ is equal to $\max_s W(s)$, which cannot be less than $W(t)$. 
Therefore, $y$ solves $[\ms P(t)]$, and $t \in \argmax_{s} W(s)$. \qed

\subsection{\texorpdfstring{Proof of \cref{thm:aux=original}}{Proof of Theorem \ref{thm:aux=original}}}\label{app:thm-2-proof}

We first prove 
\cref{lem:p_sol_is_feas}, which says that any standard solution to $[\ms P]$ is feasible for $[\ms P_0]$ (\ref{proof:lem:p_sol_is_feas}). We then prove \cref{lem:principal_wants_simple}, which says that there exists a simple stopping time that solves $[\ms P_0]$ (\ref{proof:lem:principal_wants_simple}). Finally, we combine \cref{lem:p_sol_is_feas,lem:principal_wants_simple} to establish an equivalence between solutions to $[\ms P]$ and to $[\ms P_0],$ proving \cref{thm:aux=original} (\ref{proof:thm:aux=original}).

\subsubsection{\texorpdfstring{Proof of \cref{lem:p_sol_is_feas}}{Proof of Lemma \ref{lem:p_sol_is_feas}}}\label{proof:lem:p_sol_is_feas}

The following useful property of $u$ is an immediate consequence of APM.
\begin{lem}[Positive monotonicity]
\label{lem:pos-mon}
Let $f$ be a payoff function satisfying APM. For any $\theta\in\{0,1\}$, $y\in \R_+$ and $t\in T$ such that $f(\theta,y,t)\geq 0,$ and any $s<t$, we have:
\[f(\theta,y,s)\geq f(\theta,y,t).\]
\end{lem}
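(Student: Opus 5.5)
The plan is to apply the definition of APM directly, taking the finite sequence to consist of a single term. I will fix $\theta\in\{0,1\}$, $y\in\R_+$ and $t\in T$ with $f(\theta,y,t)\geq 0$, and set $(\alpha_1,\theta_1,y_1):=(1,\theta,y)$. The premise of APM is then $\sum_i \alpha_i f(\theta_i,y_i,t)=f(\theta,y,t)\geq 0$, which holds by hypothesis, and $\alpha_1=1>0$, so the sequence is admissible.

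APM then gives $\sum_i \alpha_i f(\theta_i,y_i,t')\geq \sum_i\alpha_i f(\theta_i,y_i,t)$ for every $t'<t$. With the one-term sequence this is exactly $f(\theta,y,s)\geq f(\theta,y,t)$ for any $s<t$, which is the claim.

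I do not expect a real obstacle. The only point to check is that degenerate, one-element sequences are allowed in the definition of APM. The definition quantifies over an arbitrary finite sequence with positive weights, so a single term is included. It is also worth noting that $s<t$ with $t\in T$ forces $s\in T$, so $f(\theta,y,s)$ is defined and no case involving $s=\infty$ arises.
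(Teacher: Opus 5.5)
Your proposal is correct and is exactly the paper's argument: the paper proves the lemma by instantiating APM with the single-element sequence $(1,\theta,y)$, just as you do. Your extra checks, that one-term sequences are admissible and that $s<t$ with $t\in T$ keeps $s$ in $T$, are fine but not needed beyond what the definition already covers.
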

\begin{proof}
    Use the single-element sequence $(1,\theta,y)$ in the definition of APM.
\end{proof}

We will first prove \cref{lem:2_part_a}, which represents the special case of \cref{lem:p_sol_is_feas} when at least one of the agent's incentive constraints binds.

\begin{lem}
\label{lem:2_part_a}
    Fix $t^*\in T$, and let $y$ be a standard solution to $[\ms P(t^*)]$, such that the constraint $U_{t^*}\geq U_{t}$ binds for some $t\in \{t^*+1,\dots,\infty\}$. Then, we must have $u(0,y(\bfb_{\hat{s}}),\hat{t})\leq 0,$ for all $\hat{t}\in T$ and all $\hat{s}\leq \hat{t}$. 
\end{lem}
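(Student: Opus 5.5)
The plan is to split the breakdown histories $\bfb_{\hat s}$ according to whether $\hat s\le t^*$ or $\hat s>t^*$. In each case I first show $u(0,y(\bfb_{\hat s}),t^*)\le 0$ and then transport this to every relevant $\hat t$. Because the statement only concerns pairs with $\hat s\le\hat t$, a history $\bfb_{\hat s}$ with $\hat s>t^*$ is only ever evaluated at some $\hat t>t^*$.

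\emph{Case $\hat s\le t^*$.} Since $y$ is standard, $y(\bfb_{\hat s})=0$, and \cref{ass:A-payoff-main}(f) gives $u(0,0,\hat t)<0$ for every $\hat t\in T$. This case is then done.

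\emph{Structure of $y$ after $t^*$.} Let $t$ be the \emph{smallest} index in $\{t^*+1,\dots,T,\infty\}$ at which $U_{t^*}\ge U_t$ binds. By \cref{thm:inner}(b), $y(\bfb_{t^*+1})\ge\cdots\ge y(\bfb_T)$. Moreover, a strict drop $y(\bfb_{t^*+s})>y(\bfb_{t^*+s+1})$ forces the constraint $U_{t^*}\ge U_{t^*+s}$ to bind. Since every constraint with index below $t$ is slack, we obtain $y(\bfb_{t^*+1})=\cdots=y(\bfb_{t})=:\bar y$. If $t=\infty$, this common value holds for all $s\in\{t^*+1,\dots,T\}$. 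The remaining goal of this step is to show $u(0,\bar y,t^*)\le 0$, which I would do exactly as in the proof of \cref{thm:inner}(c).

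If $t<\infty$, write
\[U_t=U_{t^*}=u(0,\bar y,t^*)\textstyle\sum_{i=t^*+1}^{t}p(\bfb_i)+U_{t^*|t}.\]
Feasibility gives $U_t=U_{t^*}\ge U_\infty=0$. Then APM, applied to the lottery defining $U_t$, yields $U_{t^*|t}\ge U_t$. This forces $u(0,\bar y,t^*)\le0$.

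If $t=\infty$, the binding constraint reads
\[0=U_{t^*}=u(0,\bar y,t^*)\textstyle\sum_{i>t^*}p(\bfb_i)+p(\bfa)u(1,y(\bfa),t^*).\]
By \cref{thm:inner}(c), $u(1,y(\bfa),t^*)\ge u(0,\bar y,t^*)$. So if $u(0,\bar y,t^*)>0$, both terms would be strictly positive, which is a contradiction.

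In either case, any $\hat s>t^*$ satisfies $y(\bfb_{\hat s})\le\bar y$ by the monotonicity in \cref{thm:inner}(b). Since $u$ is increasing in $y$ ($\phi>0$, $w_y>0$), it follows that $u(0,y(\bfb_{\hat s}),t^*)\le u(0,\bar y,t^*)\le 0$.

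\emph{Transport to $\hat t$.} Take $\hat s>t^*$ and $\hat t\ge\hat s$, so $\hat t>t^*$. Suppose toward a contradiction that $u(0,y(\bfb_{\hat s}),\hat t)>0$. Then \cref{lem:pos-mon} gives
\[u(0,y(\bfb_{\hat s}),t^*)\ge u(0,y(\bfb_{\hat s}),\hat t)>0,\]
which contradicts the previous step.

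The main point needing care is the second step: choosing $t$ as the smallest binding index so that complementary slackness in \cref{thm:inner}(b) delivers the constant block $\bar y$, and treating $t=\infty$ separately via \cref{thm:inner}(c). The observation that $\hat s\le\hat t$ rules out evaluating post-$t^*$ histories at times before $t^*$ is what makes the final transport step go through with \cref{lem:pos-mon} alone.
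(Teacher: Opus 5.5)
Your proof is correct. It uses the same ingredients as the paper but organizes them more economically.

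The paper argues by contradiction. It takes $s$ to be the \emph{largest} breakdown period with $u(0,y(\bfb_s),s)>0$ and spreads positivity back to all of $(t^*,s]$ (its Claim 1). It then splits on the earliest binding index $t$ into three cases:
\begin{itemize}
\item $t\le s$: it shows $U_{t-1}>U_t$ via APM and positive monotonicity, contradicting $U_{t^*}\ge U_{t-1}$.
\item $s<t<\infty$: complementary slackness plus APM, which is essentially your finite-$t$ computation.
\item $t=\infty$: an appeal to \cref{thm:inner}(c), as in your argument.
\end{itemize}

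You argue directly instead. You reduce everything to the sign of $u(0,y(\bfb_{t^*+1}),t^*)$, the largest post-$t^*$ breakdown payoff evaluated at $t^*$. You show it is nonpositive by the first-case argument in the proof of \cref{thm:inner}(c) when $t<\infty$, and by the statement of \cref{thm:inner}(c) when $t=\infty$. Monotonicity of $y$ and \cref{lem:pos-mon} then handle every remaining pair $(\hat s,\hat t)$. Your bound on $\bar y$ does not depend on where a hypothetical positive breakdown sits relative to $t$. So the paper's first case, and the auxiliary index $s$, become unnecessary.

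The tools are identical in both routes: the earliest binding $t$, the contrapositive of the complementary-slackness part of \cref{thm:inner}(b), APM, \cref{thm:inner}(c), and positive monotonicity. Neither route is more general; yours is simply shorter.

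One small point: \cref{thm:inner} is stated only for $t^*<T$, so your structure step should be restricted to that case. When $t^*=T$ there are no histories $\bfb_{\hat s}$ with $\hat s>t^*$, and the standardness case already finishes the proof.
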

\begin{proof}

    Suppose not -- i.e., there exists some such $\hat{s},\hat{t}$ with $u(0,y(\bfb_{\hat{s}}),\hat{t})> 0$. Then, by positive monotonicity, it must be that $u(0,y(\bfb_{\hat{s}}),\hat{s})> 0.$ Let $s$ be the largest value of $\hat{s}$ satisfying $u(0,y(\bfb_{\hat{s}}),\hat{s})> 0.$ Since $(y,\tau_{t^*})$ is a standard solution, by \cref{ass:A-payoff-main}(f), we have $s>t^*$.
    \medskip

    \noindent \textit{Claim 1.} For any $s'\in(t^*,s],$ we have $u(0,y(\bfb_{s'}),s')>0.$ 
    
    \noindent\textit{Proof.} By \cref{thm:inner}, we have that $y(\bfb_{s'})\geq y(\bfb_{s}).$ Then, we have:
\begin{align*}u(0,y(\bfb_{s'}),s')&\geq u(0,y(\bfb_s),s') &[\text{Monotonicity in }y]\\
&\geq u(0,y(\bfb_s),s)>0.&[\text{Positive monotonicity}]
\end{align*}
\qed

Without loss, take $t$ to be the earliest period at which $U_{t^*}\geq U_t$ binds. We will argue case-wise that such a $t$ cannot exist.
\medskip

\underline{Case 1:} $t\leq s.$

Observe that $U_t\geq 0$ (since $U_t=U_{t^*}\geq U_{\infty}=0$), and that $u(0,y(\bfb_t),t)> 0$ (by \textit{Claim 1}). Then, write:
\begin{align*}U_{t-1}-U_{t}= &\, p(\bfb_{t})u(0,y(\bfb_{t}),t-1)\\
    &\, +p(\bfa)u(1,y(\bfa),t-1)+\sum_{i=t+1}^T p(\bfb_i)u(0,y(\bfb_{i}),t-1)\\
    &\, -p(\bfa)u(1,y(\bfa),t)-\sum_{i=t+1}^T p(\bfb_i)u(0,y(\bfb_{i}),t),
    \end{align*}
    where the third line is $- U_{t}.$ Then, APM implies that the sum of the second and third lines is non-negative, while positive monotonicity implies that the first line is strictly positive. Thus, $U_{t-1}>U_{t}$. But, since $y$ is feasible for $[\ms P(t^*)]$, we must have $U_{t^*}\geq U_{t-1},$ so that $U_t=U_{t^*}$ cannot hold.
\medskip

\underline{Case 2:} $s<t<\infty.$

Since $t$ is the earliest binding period, the contrapositive of the second statement in \cref{thm:inner}(b) implies that
\begin{equation}
y(\bfb_{t^*+1})=\dots =y(\bfb_t). \label{eq_comp_slack_lem_5}
\end{equation}
Now, let $m:=u(0,y(\bfb_{s}),t^*).$ By positive monotonicity, we have $m\geq u(0,y(\bfb_{s}),s)> 0$. By  \eqref{eq_comp_slack_lem_5}, if the agent stops at $t^*,$ then he is guaranteed at least $m$ from any breakdown occurring between $t^*$ and $t$. We therefore have:
\[U_{t^*}\geq P_{(t^*,t]}m+U_t,\]
where $P_{(t^*,t]} \:= \sum_{i=t^*+1}^t p(\bfb_i)$ denotes the ex ante probability of a breakdown occurring in the interval $(t^*,t]$. This contradicts the assertion that $U_t=U_{t^*},$ since $P_{(t^*,t]}$ and $m$ are both strictly positive.
\medskip

\underline{Case 3:} $t=\infty.$

Since $t$ is the earliest binding period, the contrapositive of the second statement in \cref{thm:inner}(b) implies that
\begin{equation*}
y(\bfb_{t^*+1})=\dots =y(\bfb_T) := y(\bfb).
\end{equation*}
Defining $m\:= u(0,y(\bfb),t^*)$, we have $m \geq u(0,y(\bfb),s) > 0$. Then,
$$U_{t^*} = P_{(t^*,T]}m + p(\bfa)u(1,y(\bfa),t^*), $$
where $P_{(t^*,T]} \:= \sum_{i=t^*+1}^T p(\bfb_i)$. However, by \cref{thm:inner}(c), we have $u(1,y(\bfa),t^*) \geq u(0,y(\bfb),t^*) >0$, implying $U_{t^*} > 0 = U_\infty$, a contradiction.
\end{proof}

Next, \cref{lem:p_sol_is_feas} uses \cref{lem:2_part_a} to conclude that the agent cannot profitably deviate to some non-simple stopping strategy against the principal's solution to $[\ms P].$
\medskip

\noindent \textbf{\cref{lem:p_sol_is_feas}}. \textit{Let $(y,\tau_{t^*})$ be a standard solution to $[\ms P].$ Then, $\tau_{t^*}$ solves the agent's problem given policy rule $y$:
    \[\tau_{t^*} \in \argmax_{\tau\in\ms T} \E[u(\t,y(\bfh),\tau)].\]}
\begin{proof}
    By \cref{thm:inner=aux}, we have that $y$ is a standard solution to $[\ms P(t^*)].$ We will consider separately the case where all constraints in $[\ms P(t^*)]$ are slack and where at least one binds.
    \medskip

    \underline{Case 1:} $U_{t^*}\geq U_{t'}$ binds, for some $t'\in \{t^*+1,\dots,\infty\}.$

    The result follows by \cref{lem:2_part_a}. Consider any stopping rule, $d$. Let $t$ be the (unique, if it exists) period at which $d(\bfa_t)=\ttstop,$ or $t=\infty$ if there is no such period. Let $\ms R:=\{(r,s):d(\bfb_{r,s})=\ttstop\}$ denote the set of other partial histories at which $d$ stops. Then, using formula \eqref{eq:u_formula}, we have:
    \[U_{\tau_d}=\sum_{(r,s)\in\ms R} p(\bfb_{r})u(0,y(\bfb_{r}),s)+U_t. \]
    By \cref{lem:2_part_a} we have that the first term is weakly negative, and therefore $U_{\tau_d}\leq U_t.$ But since $y$ is feasible for $[\ms P],$ we have $U_t\leq U_{t^*}.$ Thus, $U_{\tau_d}\leq U_{t^*},$ and so $\tau_{t^*}$ is optimal for the agent.
    \medskip

    \underline{Case 2:} $U_{t^*}\geq U_{t'}$ does not bind, for any $t'\in \{t^*+1,\dots,\infty\}.$

    Since no constraint binds, by \cref{thm:inner}(b), we find that $y(\bfb_{t'})$ must be constant for all $t' > t^*$; let this value be $y_0$, and let $y_1:=y(\bfa)$.
    We will examine two cases.

    \underline{Case 2.1:} $u(0,y_0,t')\leq 0,$ for all $t'>t^*.$

    In this case, the argument from \underline{Case 1} applies. Consider any stopping rule, $d$, and let $t$ be the period where $d(\bfa_t)=\ttstop$ (or $t=\infty$ if the agent never stops). Let $\ms R:=\{(r,s):d(\bfb_{r,s})=\ttstop\}$ denote the set of other partial histories at which $d$ stops. Then, we have:
    \[U_{\tau_d}=\sum_{(r,s)\in\ms R:r>t^*} p(\bfb_r)u(0,y_0,s)+\sum_{(r,s)\in\ms R:r\leq t^*}p(\bfb_{r})u(0,0,s)+U_t\]
    (where the second term arises since $y$ is a standard solution). Then, by the case assumption, the first term is weakly negative, while by \cref{ass:A-payoff-main}(f), the second term is weakly negative. Thus, as before, we obtain $U_{\tau_d}\leq U_{t}\leq U_{t^*},$ so that $\tau_{t^*}$ is optimal for the agent. 

    \underline{Case 2.2:} $u(0,y_0,t')> 0,$ for some $t'>t^*.$

    In this case, by positive monotonicity, $u(0,y_0,t^*)>0.$ Consider any stopping rule, $d$, and let $t$ be the period where $d(\bfa_t)=\ttstop$ (or $t=\infty$ if the agent never stops). We consider two cases, depending on the value of $t.$

    \underline{Case 2.2.1:} $t\leq t^*.$ 
    
    Let $\ms R:=\{(r,s):d(\bfb_{r,s})=\ttstop\}$ denote the set of other partial histories at which $d$ stops; note that for all $(r,s)\in\ms R$, we must have $r\leq t^*.$ Then, we have:
    
    \[U_{\tau_d}=\sum_{(r,s)\in\ms R}p(\bfb_{r})u(0,0,s)+U_t,\]

    where we are using the fact that $y$ is a standard solution. As before, we obtain $U_{\tau_d}\leq U_t\leq U_{t^*}.$

    \underline{Case 2.2.2:} $t > t^*.$

In this case, we claim that, under any history $\bfh\in\mathcal{H}$, we must have $u(\theta,y(\bfh),\tau_d)\leq u(\theta,y(\bfh),\tau_{t^*}).$ We divide histories into three types.

    \textit{Type a:} $\bfh=\bfb_s$, for some $s\leq t^*.$

    In this case, $\tau_{t^*}=\infty,$ so the agent's utility under $\tau_{t^*}$ is 0. If the agent instead stops at time $t'\neq \infty$ under $\tau_d$, he receives $u(0,0,t')<0$, since $y$ is a standard solution.

    \textit{Type b:} $\bfh=\bfb_s$, for some $s> t^*.$

    Under $\tau_{t^*}$, the agent receives $u(0,y_0,t^*)$, which is positive. By the case assumption ($t>t^*$), the agent cannot stop earlier than $t^*$, and so must receive $u(0,y_0,t')$, for some $t'\geq t^*.$ By positive monotonicity, this must be weakly lower than $u(0,y_0,t^*)$.

    \textit{Type c:} $\bfh=\bfa.$

    Under $\tau_{t^*}$, the agent receives $u(1,y_1,t^*)$, which we claim is positive. To see this, examine the FOCs for $y_1$ and $y_0$, \eqref{eq_foc_ya} and \eqref{eq_foc_yb}. We obtain:
    \begin{equation}
        \label{eq_2.2.2c1}
        v_y(1,y_1,t^*)\leq 0
    \end{equation}
    and $v_y(0,y_0,t^*)\leq0,$ where the latter inequality can be strict only if $y_0=0;$ since we need $y_0>0$ for $u(0,y_0,t^*)>0$ to hold, we conclude that:
    \begin{equation}
        \label{eq_2.2.2c2}
        v_y(0,y_0,t^*)=0.
    \end{equation}
    Now, comparing equations \eqref{eq_2.2.2c1} and \eqref{eq_2.2.2c2}, and using \cref{ass:P-payoff-main}(b) and (f) (strict concavity of $v$ in $y$ and monotonicity of $v_y$ in $\theta$), we conclude that $y_1\geq y_0.$ Finally, using \cref{ass:A-payoff-main}(a) and (e) (monotonicity of $u$ in $y$ and $\theta$), we conclude that, since $u(0,y_0,t^*)>0$ and $y_1\geq y_0$, we must have $u(1,y_1,t^*)>0.$
    
     Now, under $\tau_d$, the agent stops at $t$, and so receives $u(1,y_1,t)$, which must be weakly lower than $u(1,y_1,t^*)$ by positive monotonicity.

    Thus, we must have $U_{\tau_d}\leq U_{t^*}.$ Since we concluded this in both cases (2.2.1 and 2.2.2), this holds for all stopping rules, $d$, and so $\tau_{t^*}$ must be optimal for the agent.     
\end{proof}


\subsubsection{\texorpdfstring{Proof of \cref{lem:principal_wants_simple}}{Proof of Lemma \ref{lem:principal_wants_simple}}}\label{proof:lem:principal_wants_simple}



\noindent \textbf{\cref{lem:principal_wants_simple}}. 
\textit{There exists a policy $y^*$ and a simple stopping time $\tau_{t^*}$ such that $(y^*,\tau_{t^*})$ solves $[\ms P_0]$.}
\begin{proof}
Given any $(y,\tau)$ which is optimal for $[\ms P_0]$, and such that $\tau$ is not a simple stopping time, we will produce a $\tilde{y}$ where $(\tilde{y},\tau_0)$ is feasible for $[\ms P_0]$; we will then show that this $(\tilde{y},\tau_0)$ weakly improves on $(y,\tau)$. Since a solution to $[\ms P_0]$ is guaranteed to exist by the Weierstrass theorem,\footnote{The argument is analogous to that of the proof of \cref{lem:standard-solutions-exist}(a).} this will prove the lemma.
\medskip

    Let $d$ be the decision rule corresponding to $\tau$, and let $t$ be the period at which $d(\bfa_t)=\ttstop,$ or $t=\infty$ if there is no such period. Let $\ms R:=\{(r,s):d(\bfb_{r,s})=\ttstop\}$ denote the (non-empty, since $\tau$ is not a simple stopping time) set of other partial histories at which $d$ stops. Let $r\leq t,s\geq r$ be defined by:
    \[(r,s)\in\argmax_{(\hat{r},\hat{s})\in\ms R} v(0,y(\bfb_{\hat{r}}),\hat{s});\]
    i.e., $(r,s)$ gives the principal's favorite instance of the agent stopping after a breakdown under $(y,\tau).$ We show in \cref{OA:simple_claim} that, because $(y,\tau)$ solves $[\ms P_0],$ we must have $v(0,y(\bfb_r),s)\geq0$ and $V_t(y)\geq 0.$

    Now, define the policy rule $\tilde{y}$ as follows. If $t<\infty$, let $\tilde{y}(\bfb_i)=y(\bfb_r),$ for all $i\leq t$ and $\tilde{y}=y$ for all other histories. If $t=\infty$, let $\tilde{y}(\bfb_i)=\tilde{y}(\bfa)=y(\bfb_r)$, for all $i\in T$. As indicated above, we will argue that $(\tilde{y},\tau_0)$ improves on $(y,\tau).$

    First, we must show that $\tau_0$ is optimal for the agent against $\tilde{y},$ so that $(\tilde{y},\tau_0)$ is feasible for the principal. 
    \medskip

    \noindent \textit{Claim.} $\tau_0$ is optimal for the agent against $\tilde{y}.$
    \begin{proof}
        Consider any other stopping rule $\hat{d},$ and let $\hat{t}$ be the period at which $\hat{d}(\bfa_{\hat{t}})=\ttstop,$ or $\hat{t}=\infty$ if there is no such period. Let $\hat{\ms R}:=\{(\hat{r},\hat{s}):\hat{d}(\bfb_{\hat{r},\hat{s}})=\ttstop\}$ denote the set of other partial histories at which $\hat{d}$ stops. Let $\hat{\tau}$ be the corresponding stopping time. 

        It is without loss of optimality to assume that $U_{\hat{t}}(\tilde{y})\geq0$ (if not, $\hat{\tau}$ can be improved by the agent never stopping in any $\bfa_i$) and that $\hat{t}\leq t$ (since stopping at $\bfa_t$ inherits superiority over stopping at any $\bfa_i$, $i>t$, from the optimality of $d$ against $y$).
        
         Using \eqref{eq:u_formula} and \eqref{eq:A_payoff_contrib}, we may write
        \begin{align*}
            U_{0}(\tilde{y})&=\sum_{i=1}^{\hat{t}} p(\bfb_i)u(0,y(\bfb_r),0)+U_{0|\hat{t}}(\tilde y) \\
            U_{\hat{\tau}}(\tilde{y})&=\sum_{(\hat{r},\hat{s})\in \hat{\ms R}} p(\bfb_{\hat{r}})u(0,y(\bfb_r),\hat{s})+U_{\hat{t}}(\tilde{y}),
        \end{align*}
        where we are using the fact that $\hat{t}\leq t$ to deduce that, if there is a breakdown before $\hat{t},$ the policy will be $y(\bfb_r).$
        
        We argue that $U_0(\tilde y) \geq U_{\hat \tau}(\tilde y)$. First, let us match each value of $i$ in the top sum with the same value of $\hat{r}$ in the bottom sum, if it exists, and with 0 otherwise. By optimality of $\tau$ against $y$, we must have that $u(0,y(\bfb_r),s)\geq0$, and so by positive monotonicity, $u(0,y(\bfb_r),0)\geq0.$ Thus, when a term in the top sum is matched with 0, it is larger. Similarly, by positive monotonicity, we have that $u(0,y(\bfb_r),0)\geq u(0,y(\bfb_r),\hat{s})$, for any $\hat{s}$ (if $u(0,y(\bfb_r),\hat{s})<0$, then $u(0,y(\bfb_r),0)$ is clearly greater), and so when a term in the top sum is matched with a term in the bottom sum, it is also larger. 
        
        Moreover, it must be that $U_{0|\hat{t}}(\tilde y) \geq U_{\hat{t}}(\tilde{y})$. If $\hat{t}<\infty$, this follows by APM. If $\hat{t}=\infty$, then $U_{0|\hat{t}}=p(\bfa)u(1,y(\bfb_r),0)\geq 0= U_{\hat{t}}(\tilde{y})$, where the inequality follows because $u(0,y(\bfb_r),s)\geq 0$ since the agent was willing to stop at $\bfb_{r,s}$ under $y$, and using positive monotonicity and \cref{ass:A-payoff-main}(e). This proves the claim.
    \end{proof}

It remains to argue that the principal prefers $(\tilde y, \tau_0)$ to $(y,\tau)$. Using \eqref{eq:v_formula} and \eqref{eq:P_payoff_contrib}, we may write
    \begin{align*}
        V_{0}(\tilde y)&=\sum_{i=1}^t p(\bfb_i)v(0,y(\bfb_r),0)+V_{0|t}(y)\nonumber\\
        V_\tau(y)&=\sum_{(\hat{r},\hat{s})\in\ms R}p(\bfb_{\hat{r}})v(0,y(\bfb_{\hat{r}}),\hat{s})+V_t(y) + \sum_{(\hat r, \hat s) \notin \ms R \text{ for any } \hat s\geq \hat r} p(\bfb_{\hat r})v(0,y(\bfb_{\hat r}),\infty).\label{eq_princ_simp_v}        
    \end{align*}
First, let us compare the top sum with the two sums in the bottom, term by term. Because $v(0,y(\bfb_r),s)\geq 0$, positive monotonicity implies that $v(0,y(\bfb_r),0)\geq 0$, whereas $v(0,y(\bfb_{\hat r}),\infty) \leq 0$ by \cref{ass:P-payoff-main}. Additionally, because we chose $(r,s)$ to be the principal's favorite instance of stopping after a breakdown, and because of positive monotonicity, we must have $v(0,y(\bfb_r),0)\geq v(0,y(\bfb_r),s)\geq v(0,y(\bfb_{\hat{r}}),\hat{s})$ for any $(\hat{r},\hat{s}) \in \ms R$. Hence, the top sum is greater than the sum of the two sums in the bottom. 


Moreover, it must be that $V_{0|t}(y) \geq V_t(y)$. If $t<\infty,$ this follows from APM. If $t=\infty,$ then we have $V_{0|t}(y) = p(\bfa) v(1,y(\bfb_r),0) \geq p(\bfa)v(0,y(\bfb_r),0) \geq 0$, where we are using both statements in \cref{ass:P-payoff-main}(f) for the first inequality, and the second inequality has been observed earlier in this proof. In this case, since $V_{\infty}(y)=0$, we again conclude that $V_{0|t}(y) \geq V_t(y).$
Thus, $(\tilde{y},\tau_0)$ improves on $(y,\tau)$, proving the lemma. \end{proof}

\subsubsection{\texorpdfstring{Proof of \cref{thm:aux=original}}{Proof of Theorem \ref{thm:aux=original}}}\label{proof:thm:aux=original}

\cref{thm:aux=original} follows as a corollary to \cref{lem:p_sol_is_feas,lem:principal_wants_simple}.
\medskip

\noindent \textbf{\cref{thm:aux=original}}. 
\textit{The following statements are true:
\begin{enumerate}
    \item If $(y,\tau_t)$ is a standard solution to $[\ms P]$, then it solves $[\ms P_0].$ 
    \item If $(y,\tau_t)$ solves $[\ms P_0]$, then $(y,\tau_t)$ solves $[\ms P].$ 
\end{enumerate}}

\begin{proof}
To prove (a), suppose $(y,\tau_t)$ is a standard solution to $[\ms P].$ Then, by \cref{lem:p_sol_is_feas}, $(y,\tau_t)$ must be feasible for $[\ms P_0]$.

By \cref{lem:principal_wants_simple}, there must exist a policy and a simple stopping time, $(y^*,\tau_{t^*})$, which solves $[\ms P_0]$. Since $[\ms P]$ relaxes the agent's incentive constraints relative to $[\ms P_0],$ $(y^*,\tau_{t^*})$ must be feasible for $[\ms P]$. Thus, $V_{t}(y)\geq V_{t^*}(y^*)$, and so $(y,\tau_t)$ must solve $[\ms P_0].$
\medskip

To prove (b), note that since $\tau_t$ is a simple stopping time and $[\ms P]$ relaxes the agent's incentive constraints relative to $[\ms P_0]$, $(y,\tau_t)$ must be feasible for $[\ms P].$

    By \cref{lem:standard-solutions-exist}, there must exist a standard solution to $[\ms P];$ call it $(y^*,\tau_{t^*}).$ By \cref{lem:p_sol_is_feas}, $(y^*,\tau_{t^*})$ is feasible for $[\ms P_0].$ Thus, $V_{t}(y)\geq V_{t^*}(y^*)$, and so $(y,\tau_t)$ must solve $[\ms P].$
\end{proof}

\subsection{\texorpdfstring{Discussion of \cref{thm:aux=original} Assumptions}{Discussion of Theorem \ref{thm:aux=original} Assumptions}}\label{app:non-inclusion}

Let $\ms S_0$ and $\ms S$ denote the set of solutions to $[\ms P_0]$ and $[\ms P],$ respectively. We will sketch an argument that $\ms S_0 \not\subseteq \ms S$ and $\ms S \not\subseteq \ms S_0.$

 We have $\ms S_0 \not\subseteq \ms S$ because of potential indifferences, which APM does not rule out. For instance, consider a two-period example where neither the principal nor the agent care about the state; furthermore, neither cares about the agent's investment timing, though the principal substantially prefers investing to never investing. Though inducing the simple stopping time $\tau_0$ is optimal for the principal, it is also optimal to induce the non-simple stopping time in which the agent stops at time 1, no matter what. This is not a solution to $[\ms P]$ since it is non-simple.

Conversely, we also have that $\ms S \not\subseteq \ms S_0.$ The issue is that a non-standard solution to $[\ms P],$ $(y,\tau_t),$ could offer the agent a positive payoff if he stops following a breakdown at some period $s<t$. In $[\ms P],$ the only way for the agent to exploit this would be to stop following no breakdown, prior to $s$; this may not be desirable for him. But, in $[\ms P_0],$ he could stop following the breakdown at $s$ \textit{and} following no breakdown at $t$. 

\subsection{\texorpdfstring{Proofs of Results in \cref{sec:rent}}{Proofs of Results in Section \ref{sec:rent}}}\label{proof:rent-section}

In this subsection, we first state and prove \cref{prop-prime:null-is-lower-bound,prop-prime:no-rent,prop-prime:rent}, which are generalized versions of \cref{prop:null-is-lower-bound,prop:no-rent,prop:rent} that do not rely on \cref{ass:rent}. We then argue that under \cref{ass:rent}, \cref{prop-prime:null-is-lower-bound,prop-prime:no-rent,prop-prime:rent} imply \cref{prop:null-is-lower-bound,prop:no-rent,prop:rent}.

Without \cref{ass:rent}, both the null policy rule and the null payoff must be redefined to depend on the period $t$.
Specifically, the \textit{null policy rule for period $t \in T$} is\footnote{For $t=\infty$, we define  $y^{(\infty)}_{\text{null}}(\bfh) \:= 0$ for all $\bfh$.}
\begin{align*}
    \ynullt (\bfh) \:= \begin{cases}
        0 \caseif \bfh \in \{\bfb_1,\bfb_2,\ldots,\bfb_t\} \\
        \arg\max_y v(0,y,t) \caseif \bfh \in \{\bfb_{t+1},\bfb_{t+2},\ldots,\bfb_T\} \\
        \arg\max_y v(1,y,t) \caseif \bfh = \bfa.
    \end{cases} 
\end{align*}
The \textit{null payoff for period $t$} is 
$\Unullt \:= \max_{s} U_s (\ynullt).$
Note that the maximizing $s$ need not equal $t$.

The following result is a generalization of \cref{prop:null-is-lower-bound} and establishes a lower bound on the agent's equilibrium payoff.
\begin{customprop}{2$'$}\label{prop-prime:null-is-lower-bound}
    If $(y,\tau_t)$ solves $\original$, then $U_t(y) \geq \Unullt$.
\end{customprop}
\begin{proof}
    By \cref{thm:inner=aux,thm:aux=original}, $y$ must solve $\inner$. Then, since $w_y>0$, $v_{yy}< 0$, and $\phi$ is non-increasing, the first-order conditions \eqref{eq_foc_ya} and \eqref{eq_foc_yb} imply $y(\bfh)\geq \ynullt(\bfh)$ for all $\bfh \in \{\bfb_{t+1},\cdots,\bfb_T,\bfa\}$. For $\bfh \in \{\bfb_1,\cdots,\bfb_t\}$, we have $y(\bfh) \geq \ynullt(\bfh) = 0$. Since $u_y>0$, we must then have $U_t(y) \geq \Unullt$.    
\end{proof}

The following proposition generalizes \cref{prop:no-rent}. It states that the lower bound in \cref{prop-prime:null-is-lower-bound} is tight if $\phi$ is constant.

\begin{customprop}{3$'$}[No Rent Without Complementarity]
\label{prop-prime:no-rent}
    Suppose $\phi$ is constant. If $(y,\tau_t)$ solves $[\ms P_0]$, then $U_t(y) = \Unullt$.
\end{customprop}
\begin{proof}
We first prove the following lemma, which is an analogue of \cref{prop-prime:no-rent} for the inner problem $\inner$.


\begin{lem}\label{lem:utility-at-null-peak}
    Suppose $\phi$ is constant, and suppose $y$ is a solution to $[\ms P(t)]$, for some $t \in T \cup \{\infty\}$. Then, we have
    $U_t(y)=\max_{s \geq t} U_s (\ynullt).$
\end{lem}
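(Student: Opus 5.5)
Normalize $\phi\equiv 1$ and write $\ynullt$ for the period-$t$ null rule. The plan is to prove the two inequalities $U_t(y)\ge \max_{s\ge t}U_s(\ynullt)$ and $U_t(y)\le \max_{s\ge t}U_s(\ynullt)$ separately. The upper bound comes from a complementary-slackness argument on the Lagrangian of $\inner$. It uses the fact that, with constant $\phi$, most multipliers drop out of the first-order conditions \eqref{eq_foc_ya} and \eqref{eq_foc_yb}.

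\emph{Lower bound.} Exactly as in the proof of \cref{prop-prime:null-is-lower-bound}, the first-order conditions give $y(\bfh)\ge \ynullt(\bfh)$ for all $\bfh\in\{\bfb_{t+1},\ldots,\bfb_T,\bfa\}$, because every multiplier term enters with a nonnegative sign and $v$ is strictly concave. For $s\ge t$, the payoff $U_s$ does not depend on $y(\bfb_1),\ldots,y(\bfb_t)$, since the agent never stops at those histories under $\tau_s$. Hence monotonicity of $u$ in $y$ gives $U_s(y)\ge U_s(\ynullt)$ for every $s\ge t$. Feasibility of $y$ in $\inner$ then yields $U_t(y)\ge U_s(y)\ge U_s(\ynullt)$, and taking the maximum over $s\ge t$ gives the lower bound.

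\emph{Upper bound.} With $\phi$ constant, every bracket $\phi(t)-\phi(t+r)$ in \eqref{eq_foc_ya} and \eqref{eq_foc_yb} vanishes. The conditions reduce to
\[
v_y(1,y(\bfa),t)+\lambda_\infty w_y(1,y(\bfa))\le 0,
\]
\[
v_y(0,y(\bfb_{t+s}),t)+\Big(\textstyle\sum_{r\ge s}\lambda_r+\lambda_\infty\Big)w_y(0,y(\bfb_{t+s}))\le 0,
\]
each holding with equality unless the policy is $0$. I would then split into two cases.
\begin{itemize}
\item \textbf{Case $\lambda_\infty>0$.} The constraint $U_t\ge U_\infty$ binds, so $U_t(y)=0=U_\infty(\ynullt)\le\max_{s\ge t}U_s(\ynullt)$.
\item \textbf{Case $\lambda_\infty=0$.} Let $t+s^*$ be the largest index with $\lambda_{s^*}>0$, and set $s^*=0$ if no multiplier is positive. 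For every history $\bfb_i$ with $i>t+s^*$, and for $\bfa$, the multiplier sum in the first-order condition is zero. The condition therefore says that $y(\bfh)$ is a KKT point of the strictly concave problem $\max_{y\ge 0}v(\theta,y,t)$, so $y(\bfh)=\ynullt(\bfh)$.
\end{itemize}

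In the second case, $U_{t+s^*}(y)$ depends only on the histories $\bfb_{t+s^*+1},\ldots,\bfb_T,\bfa$, so $U_{t+s^*}(y)=U_{t+s^*}(\ynullt)$. If $s^*>0$, the constraint $U_t\ge U_{t+s^*}$ binds, which gives $U_t(y)=U_{t+s^*}(y)$. If $s^*=0$, the equality $U_t(y)=U_{t+s^*}(y)$ holds trivially. Either way,
\[
U_t(y)=U_{t+s^*}(\ynullt)\le \max_{s\ge t}U_s(\ynullt).
\]
The case $t=\infty$ is trivial, since then $U_\infty=0$ on both sides.

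The main obstacle I expect is making the identification $y(\bfh)=\ynullt(\bfh)$ rigorous at corners. When $y(\bfh)=0$ the first-order condition is only an inequality, and I would use strict concavity of $v$ together with $y\ge0$ to conclude that $0$ is the constrained argmax. A second point needing care is that the multiplier on $U_t\ge U_{t+s}$ enters the condition for $y(\bfb_{t+s'})$ only when $s'\le s$. This is what makes every history after the last binding constraint free of multipliers, and it is the place where constancy of $\phi$ is used.
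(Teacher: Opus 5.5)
Your proof is correct, but it takes a different route from the paper's.

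\emph{The paper's route.} The paper lets $\hat t$ be the smallest maximizer of $s\mapsto U_s(\ynullt)$ over $s\ge t$ and argues by direct improvement in two steps.
\begin{enumerate}
\item Replacing $y$ by $\ynullt$ on every history after $\hat t$ keeps all constraints. Those with $s\le\hat t$ are unchanged because $\phi$ is constant, and those with $s>\hat t$ follow from $U_t\ge U_{\hat t}=U_{\hat t}(\ynullt)\ge U_s(\ynullt)$. The replacement also raises the principal's payoff, so $y$ must already be null there.
\item If $U_t(y)>U_{\hat t}(\ynullt)$, take the largest $t'\in[t,\hat t)$ with $U_{t'}(y)>U_{\hat t}(\ynullt)$. \Cref{thm:inner}(a),(b) shows that $y(\bfb_{t'+1})$ lies above the null level, and lowering it by $\epsilon$ violates no constraint, which contradicts optimality.
\end{enumerate}

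\emph{Your route.} You work entirely through the KKT conditions. Constancy of $\phi$ means $\lambda_r$ enters only the first-order conditions for $\bfb_{t+1},\ldots,\bfb_{t+r}$. Hence, beyond the last positive multiplier, the policy is ex post optimal, and the binding constraint at $t+s^*$ transfers $U_{t+s^*}(\ynullt)$ to $U_t(y)$. Your lower bound then identifies this value as the maximum.

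\emph{What each buys.}
\begin{itemize}
\item Your argument is shorter and shows exactly where constancy of $\phi$ is used.
\item Your argument rests entirely on the existence of Lagrange multipliers. The paper asserts this via strong duality in the proof of \cref{thm:inner} but does not verify it. The paper's improvement arguments use multipliers only for the auxiliary facts $y\ge\ynullt$ and \cref{thm:inner}(a),(b).
\item The paper's route yields a stronger by-product, which is reused in the proof of \cref{prop: sufficient condition for nestedness}: $y=\ynullt$ on every history after the smallest maximizer $\hat t$. Your argument gives null policies only beyond $t+s^*$, which is some maximizer with $t+s^*\ge\hat t$. It gives no such conclusion at all in the case $\lambda_\infty>0$.
\end{itemize}
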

\begin{proof}
    Let $\hat t$ denote the smallest solution to $\max_{s \geq t} U_s (\ynullt)$. We write the proof assuming $\hat t < \infty$, but the same proof goes through for $\hat t = \infty$ if one adopts the convention $T+1\:=\infty$ and $\bfb_{T+1} \:= \bfa$.
    
    We first argue that $y(\bfh)= \ynullt(\bfh)$ for all histories $\bfh \in \{\bfb_{\hat t+1},\bfb_{\hat t+2},\ldots, \bfb_T, \bfa\}$. We established in the proof of \cref{prop-prime:null-is-lower-bound} that $y(\bfh)\geq \ynullt(\bfh)$ for all $\bfh$, so suppose toward a contradiction that $y(\bfh) > \ynullt(\bfh)$ for some $\bfh \in \{\bfb_{\hat t+1},\bfb_{\hat t+2},\ldots, \bfb_T, \bfa\}$.  Define a new policy rule $\tilde y$ by $\tilde y(\bfh) = \ynullt(\bfh)$ for $\bfh \in \{\bfb_{\hat t+1},\bfb_{\hat t+2},\ldots, \bfb_T, \bfa\}$, and $\tilde y(\bfh) = y(\bfh)$ for all other $\bfh$. 
    We argue that $\tilde y$ still satisfies the constraints of $\inner$, namely, $U_{t}(\tilde y)\geq U_{s}(\tilde y)$ for all $s \geq t$.
    Because $\phi$ is constant, the payoff difference $U_t(\tilde y) - U_s(\tilde y)$ only depends on the policy components $\tilde y(\bfb_{t+1}), \tilde y(\bfb_{t+2}), \ldots, \tilde y(\bfb_{s})$.\footnote{This follows from equation \eqref{eq:payoff-dif}.} As long as $s \leq \hat t$, these components are unchanged from $y$ to $\tilde y$. Hence, the constraints for $s \leq \hat t$ are unaffected and still satisfied under $\tilde y$. For any $s > \hat t$, we have
    $$ U_t(\tilde y) \geq U_{\hat t}(\tilde y) =  U_{\hat t}(\ynullt) \geq U_{s}(\ynullt) = U_s(\tilde y), $$
    where the two equalities hold by the definition of $\tilde y$, and the second inequality holds by the definition of $\hat t$. Therefore, the constraints for $s > \hat t$ also remain satisfied.
    However, the principal's expected payoff is clearly higher under $\tilde y$ than under $y$.
    Thus, for all histories after $\hat t$, the policy must be null.

    We are now ready to show  $U_t(y) = U_{\hat t}(\ynullt)$. If $t=\hat t$, the equality follows from the result shown in the above paragraph, so let us assume $t < \hat t$. Note that we have 
    $$U_t(y)\geq U_{\hat t}(y)\geq U_{\hat t}(\ynullt),$$
    where the first inequality holds because $y$ is feasible under $\inner$, and the second inequality holds because $y(\bfh)\geq \ynullt(\bfh)$ for all $\bfh$.
    Towards a contradiction, suppose $U_t(y)>U_{\hat t}(\ynullt)$.

    Let $t'$ be the largest integer such that $t \leq t' < \hat t$ and $U_{t'}(y)>U_{\hat t}(\ynullt)$. Clearly, $t'$ exists. Also, we must have $y(\bfb_{t'+1})> \ynullt (\bfb_{t'+1}) = \argmax_y v(0,y,t) $, as otherwise \cref{thm:inner}(a) and (b) would imply $y(\bfh) = \ynullt(\bfh)$ for all $\bfh \in \{\bfb_{t'+1},\bfb_{t'+2},\ldots, \bfb_T, \bfa\}$, 
    contradicting $U_{t'}(y) > U_{\hat t}(\ynullt)$.
    Define a new policy rule $y'$ by $y'(\bfb_{t'+1}) = y(\bfb_{t'+1}) - \epsilon$ and $y'(\bfh) = y(\bfh)$ for all $\bfh \neq \bfb_{t'+1}$, where $\epsilon>0$ is small enough that $U_{t}(y')>U_{\hat t}(\ynullt)$ still holds. 
    We argue that $y'$ satisfies all of the constraints, $U_{t}(y')\geq U_{s}(y')$ for $s \geq t$.
    First, the constraints for $t \leq s \leq t'$ are unaffected because $\phi$ is constant.
    Next, the constraints for $t' < s < \hat t$ (if any) are still satisfied because, for these values of $s$, we have 
    $$ U_t(y') > U_{\hat t}(\ynullt) \geq U_{s}(y) = U_s(y'), $$ 
    where the second inequality follows from the definition of $t'$. 
    Finally, for $s \geq \hat t$, we have 
    $$ U_{t}(y')>U_{\hat t}(\ynullt) \geq U_s(\ynullt) = U_s(y'), $$
    where the equality holds because $y'(\bfh) = y(\bfh) = \ynullt(\bfh)$ for $\bfh \in \{\bfb_{\hat t+1},\bfb_{\hat t+2},\ldots, \bfb_T, \bfa$\}.
    Hence, $y'$ still satisfies all constraints. However, the principal's expected payoff is higher under $y'$ than under $y$. This contradicts the optimality of $y$ for $\inner$. Thus, we must have $U_t(y) = U_{\hat t}(\ynullt)$.
\end{proof}


The proof of \cref{prop-prime:no-rent} is now straightforward. 
Since $(y,\tau_t)$ solves $\original$, \cref{thm:aux=original}(b) implies that $(y,\tau_t)$ also solves $\auxiliary$. Then, by \cref{thm:inner=aux}(a), $y$ solves $\inner$, so we have $U_t(y)=\max_{s \geq t} U_s (\ynullt)$ by \cref{lem:utility-at-null-peak}. We may thus write
$$U_t(y) \geq \max_{s} U_s (y) \geq \max_{s} U_s (\ynullt) \geq \max_{s \geq t} U_s (\ynullt) = U_t(y),$$
where the first inequality comes from the feasibility of $(y,\tau_t)$ under $[\ms P_0]$, and the second inequality from the fact that $y(\bfh) \geq \ynullt(\bfh)$ for all $\bfh$. Therefore, all of the inequalities hold as equalities, and we have $U_t(y)=\max_{s} U_s (\ynullt)$. 
\end{proof}


The following proposition generalizes \cref{prop:rent}, providing a sufficient condition for the agent to obtain rent. 
\begin{customprop}{4$'$}[Rent from Complementarity]\label{prop-prime:rent}
Let $(y,\tau_t)$ be a standard solution to  $\original$.
Suppose the following statements hold:
\begin{enumerate}[(i)]
    \item $\phi$ is strictly decreasing.
    \item If $\hat t \in \argmax_s U_s (\ynullt)$, then $\hat t > t$.
    \item  There exists $\hat t \in \argmax_s U_s(\ynullt)$ such that $\hat t \leq T$, and for this $\hat t$, there exists $\bfh \in \{\bfb_{\hat t+1},\bfb_{\hat t+2},\ldots,\bfb_T,\bfa\}$ such that $y(\bfh)>0$.
\end{enumerate}
Then, $U_t(y) > \Unullt$. 
\end{customprop}
\begin{proof}
As argued in the proof of \cref{prop-prime:null-is-lower-bound}, it must be that $y$ solves $\inner$, and $y \geq \ynull^{(t)}$ pointwise. 
Note that condition (ii) implies $y \neq \ynullt$.
Take $\hat t$ and $\bfh$ satisfying condition (iii). By condition
(ii), $\hat t > t$.

\medskip

\noindent \textit{Claim.} $y(\bfh) > \ynullt(\bfh)$.
\begin{proof}
We will prove the claim assuming $\bfh = \bfb_{t'}$ for some $t' \in \{\hat t+1, \hat t+2, \ldots, T\}$; an analogous argument can be made for the remaining case of $\bfh = \bfa$. Consider the Lagrangian for $\inner$. Since $y(\bfb_{t'}) > 0$ , the FOC of the Lagrangian with respect to $y(\bfb_{t'})$ must hold with equality and may be written as
 \begin{equation}
    \label{eq:prop4_1}
    v_y(0,y(\bfb_{t'}),t) = -\left( \sum_{r=1}^{t'-t-1} \lambda_r\left[\phi(t) - \phi(t+r) \right]+\sum_{r=t'-t}^{T-t}\lambda_r\phi(t) + \lambda_\infty \phi(t) \right) w_y(0,y(\bfb_{t'})).
    \end{equation}
We claim that there exists $r  \in \{1,2, \ldots, T-t,\infty\}$  such that $\l_r>0$. If instead $\l_r = 0$ for all $r$, then since $y$ is standard, it must be that $y = \ynullt$, contradicting condition (ii). 
Therefore, at least one $\l_r$ is strictly positive.

Then, the condition that $\phi$ is strictly decreasing, together with \cref{ass:A-payoff-main}(a) and (b), implies that the RHS of \eqref{eq:prop4_1} is strictly negative.
Since $v$ is concave in $y$, and $v_y(0,\ynullt(\bfh),t) = 0$ as long as $\ynullt(\bfh) >0$,\footnote{If $\ynullt(\bfh) = 0$, the claim holds trivially because $y(\bfh)>0$ by (iii).} it must be that $y(\bfh) > \ynullt(\bfh)$.
\end{proof}
We then have
\begin{align*}
    U_t (y) \geq U_{\hat t} (y) > U_{\hat t}(\ynullt) = \max_s U_{s}(\ynull^{(t)}) = \Unullt.
\end{align*}
The first inequality holds because $(y,\tau_t)$ is feasible under $\original$. The second inequality holds because $y \geq \ynull^{(t)}$ pointwise, and because $y(\bfh) > \ynull^{(t)}(\bfh)$ and $\bfh \in \{\bfb_{\hat t+1},\bfb_{\hat t+2},\ldots,\bfb_T,\bfa\}$ by the claim above. We have thus proved $U_t(y) > \Unullt$, as desired.\footnote{As an aside, we can also show at this point that the multipliers used to prove the claim satisfy $\l_r = 0$ for $r \in \{\hat t-t+1,\ldots,T-t,\infty\}$. To see why, note that $\l_r>0$ implies $U_t(y)= U_{t+r}(y)$ by complementary slackness. However, we must have $U_{\hat t}(y) > U_{t+r}(y)$ because $U_{\hat t}(\ynullt) \geq U_{t+r}(\ynullt)$, $\phi$ is strictly decreasing, and $y(\bfh)>\ynullt(\bfh)$ for some $\bfh \in \{\bfb_{\hat t+1}, \ldots, \bfb_{T},\bfa\}$ by the claim above. Since $U_t(y) \geq U_{\hat t}(y)$, we have a contradiction.
}    
\end{proof}

Finally, we argue that when \cref{ass:rent} is satisfied, \cref{prop-prime:null-is-lower-bound,prop-prime:no-rent,prop-prime:rent} can be simplified to \cref{prop:null-is-lower-bound,prop:no-rent,prop:rent}. The argument is a direct consequence of the following lemma, which states that the weak version of condition (ii) in \cref{prop:rent} always holds under \cref{ass:rent}.

\begin{lem}\label{lem:0-null-enough}
    Suppose \cref{ass:rent} holds. If $(y,\tau_t)$ is a standard solution to $\original$, and  $\hat t \in \argmax_s U_s(\ynull)$, then $\hat t \geq t$.
\end{lem}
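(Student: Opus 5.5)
We argue by contradiction: suppose $\hat t\in\argmax_s U_s(\ynull)$ but $\hat t<t$. The plan is to build a policy rule that is feasible for $[\ms P(\hat t)]$ and gives the principal strictly more than $(y,\tau_t)$. That contradicts the optimality of $t$ among inner problems. By \cref{thm:aux=original}(b) and \cref{thm:inner=aux}(a), $(y,\tau_t)$ solves $\auxiliary$, so $y$ solves $\inner$ and $t\in\argmax_s W(s)$. Hence it suffices to show $W(\hat t)>W(t)=V_t(y)$. Write $\uy(0),\uy(1)$ for the ex post optimal policies from \cref{ass:rent}.

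\textbf{Construction.} Define $\tilde y$ by $\tilde y(\bfb_i)=0$ for $i\le \hat t$ and $\tilde y(\bfh)=\ynull(\bfh)$ otherwise.
\begin{itemize}
\item \emph{Agent's side.} $U_{\hat t}(\tilde y)=U_{\hat t}(\ynull)$, since $\tau_{\hat t}$ never stops after a breakdown at or before $\hat t$. For $s>\hat t$, $U_s$ only involves histories after $s$, so $U_s(\tilde y)=U_s(\ynull)$. Therefore $U_{\hat t}(\tilde y)\ge U_s(\tilde y)$ for all $s\ge\hat t$, including $s=\infty$. So $\tilde y$ is feasible for $[\ms P(\hat t)]$.
\item \emph{Principal's payoff from $\tilde y$.} Breakdowns up to $\hat t$ give $v(0,0,\infty)=0$. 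Hence
\[W(\hat t)\ge V_{\hat t}(\tilde y)=\sum_{i=\hat t+1}^{t}p(\bfb_i)\,v(0,\uy(0),\hat t)+V_{\hat t|t}(\ynull).\]
\end{itemize}

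\textbf{Bounding $V_t(y)$.} Since $y$ is standard, breakdowns at $i\le t$ give $v(0,0,\infty)=0$. On histories after $t$, each term is at most the corresponding term evaluated at the ex post optimum, because $\uy(\theta)$ maximizes $v(\theta,\cdot,t)$ for every $t$ under \cref{ass:rent}. Thus $V_t(y)\le V_{t|t}(\ynull)$. Also $V_t(y)=W(t)\ge W(\infty)=0$, so $V_{t|t}(\ynull)\ge 0$. APM for $v$ (\cref{ass:P-payoff-main}(e)), applied to the lottery defining $V_{t|t}(\ynull)$, then gives
\[V_{\hat t|t}(\ynull)\ge V_{t|t}(\ynull)\ge V_t(y).\]

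\textbf{Strictness.} By \cref{ass:rent}, $v(0,\uy(0),\hat t)>0$. Because $\hat t<t$, the first sum above is nonempty and each $p(\bfb_i)>0$, so it is strictly positive. Combining, $W(\hat t)>V_t(y)=W(t)$, which contradicts $t\in\argmax_s W(s)$.

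\textbf{Case $t=\infty$.} Here the standard solution is $y\equiv 0$, so $V_t(y)=0$. The same $\tilde y$ yields a strictly positive payoff:
\begin{itemize}
\item each breakdown term after $\hat t$ contributes $v(0,\uy(0),\hat t)>0$;
\item the no-breakdown term satisfies $v(1,\uy(1),\hat t)\ge v(1,\uy(0),\hat t)\ge v(0,\uy(0),\hat t)>0$.
\end{itemize}
The second inequality in the last line follows by integrating \cref{ass:P-payoff-main}(f): start from $v(1,0,\hat t)\ge v(0,0,\hat t)$ and use $v_y(1,\cdot,\hat t)\ge v_y(0,\cdot,\hat t)$.

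The main point needing care is the APM step. It requires the nonnegativity $V_{t|t}(\ynull)\ge 0$, which I obtain by chaining through $V_t(y)\ge W(\infty)=0$ as above.
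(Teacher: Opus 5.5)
Your proof is correct and follows essentially the paper's argument. Your $\tilde y$ is exactly $\ynull^{(\hat t)}$, and your chain $V_{\hat t}(\tilde y) > V_{\hat t|t}(\ynull)\ge V_{t|t}(\ynull)\ge V_t(y)$, which uses $W(t)\ge W(\infty)=0$, APM, and $v(0,\uy(0),\hat t)>0$, is the ``argument analogous to \cref{thm:inner=aux}(b)'' that the paper cites. The one difference is in how you close the contradiction. You show $W(\hat t)>W(t)$, which needs only feasibility in $[\ms P(\hat t)]$ and the result $t\in\argmax_s W(s)$ from \cref{thm:aux=original}(b) and \cref{thm:inner=aux}(a). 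The paper instead checks that $\tau_{\hat t}$ remains a best response to $\ynull^{(\hat t)}$ in $\original$ and contradicts optimality there. Your route avoids handling non-simple deviations, and you also spell out the $t=\infty$ case, which the paper leaves implicit.
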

\begin{proof}
    Suppose towards a contradiction that $\hat t < t$. Then, the principal's expected payoffs satisfy
    \begin{align*}
        V_{\hat t} (\ynull^{(\hat t)}) > V_t(\ynullt) \geq V_t(y).
    \end{align*}
    The first inequality follows from an argument analogous to that in the proof of \cref{thm:inner=aux}(b); the inequality is strict because $v(0,\uy(0),\hat t)>0$ by \cref{ass:rent}. Moreover, since $\tau_{\hat t}$ is a best response for the agent under $\ynull$, it remains a best response under $\ynull^{(\hat t)}$. Therefore, $(\ynull^{(\hat t)}, \tau_{\hat t})$ is a feasible solution to $\original$ and gives the principal a strictly higher payoff than does $(y,\tau_t)$, a contradiction.
\end{proof}
Under \cref{ass:rent}, $\ynullt$ and $\ynull$ coincide on histories $\{\bfb_{t+1},\ldots,\bfa\}$, so \cref{lem:0-null-enough} implies $\argmax_s U_s (\ynull) = \argmax_s U_s (\ynullt)$ and $\Unull = \Unullt$. \cref{prop:null-is-lower-bound,prop:no-rent,prop:rent} follow immediately.

\subsection{\texorpdfstring{Proof of \Cref{prop:state-msrble}}{Proof of Proposition \ref{prop:state-msrble}}}\label{proof: sufficient condition for OTA}

Towards a contradiction, suppose $U_t=U_{t+s}$ for some $s \in \{1,2,\ldots, T-t-1\}$. We then have
\begin{align*}
    0 \geq U_{t+s-1} - U_t = U_{t+s-1} - U_{t+s} = \left(\prod_{i=1}^{t+s-1} p_i\right) (1-p_{t+s})\left[ w(0,y(\bfb_{t+s}))-A(t+s)\right].
\end{align*}
Since $p_i \in (0,1)$, it must be that $w(0,y(\bfb_{t+s}))-A(t+s) \leq 0$.
We then have
\begin{align}\label{eq:prop6-1}
    w(0,y(\bfb_{t+s+1}))\leq w(0,y(\bfb_{t+s}))\leq A(t+s)<A(t+s+1),
\end{align}
where the first inequality holds because $y(\bfb_{t+s})\geq y(\bfb_{t+s+1})$ (by \Cref{thm:inner}), and the last inequality follows because $A$ is strictly increasing. However, \eqref{eq:prop6-1} implies $U_{t+s+1} > U_{t+s} = U_t$, which violates the agent's incentive constraint. 

Then, $y(\bfb_{t+1}) = y(\bfb_{t+2}) = \cdots = y(\bfb_{T})$ follows from \cref{thm:inner}(b). \qed

\subsection{\texorpdfstring{Proof of \Cref{prop: sufficient condition for nestedness}}{Proof of Proposition \ref{prop: sufficient condition for nestedness}}}\label{proof: sufficient condition for nestedness}

\paragraph*{Proof of (a).} For any $s\in\{1,2,\ldots, T\}$ we have
\[ U_{s}(\ynull)-U_{s-1}(\ynull)=\left(\prod_{i=1}^{s-1}p_i\right) (1-p_s)[A(s)-w(0,\uy(0))],\]
so the sign of $U_{s}(\ynull)-U_{s-1}(\ynull)$ is the same as the sign of $A(s) - w(0,\uy(0))$. Since $A(s)$ is strictly decreasing, $U_s(\ynull)$ must be strictly quasi-concave in $s$ for $s \in \{1,2,\ldots,T\}$. Moreover, we have
\[U_T(\ynull) - U_\infty(\ynull)= \left(\prod_{i=1}^{T}p_i\right) [w(1,\uy(1))+g(1,T)]>0,\]
where the inequality follows from condition (iv). Therefore, $U_s(\ynull)$ is strictly quasi-concave on the entire set $\{0,1,\ldots,T,\infty\}$. Condition (v) rules out the possibility of two adjacent maximizers, so $U_s(\ynull)$ is uniquely maximized at some $s = \hat t < \infty$.

\paragraph*{Proof of (b).}

Towards a contradiction, suppose there exists a positive integer $s \leq \hat t - t$ such that the constraint $U_t(y)\geq U_{t+s}(y)$ holds with a strict inequality, and let $s$ be the smallest such positive integer.
Let $q \leq \hat t- t - s$ be the smallest positive integer such that the constraint $U_t(y)\geq U_{t+s+q}(y)$ binds. Such a $q$ exists because we have $U_t(y) = U_{\hat t}(y) = U_{\hat t}(\ynullt)$ from the proof of \cref{prop-prime:no-rent}.

We have
    \begin{align*}
        U_t(y)- U_{t+s}(y)
        = &\, \sum_{r=1}^s
        \left(U_{t+r-1}(y)- U_{t+r}(y)\right) \\
        = &\, \sum_{r=1}^s \left( \prod_{i=1}^{t+r-1}p_i \right) (1-p_{t+r})[w(0,y(\bfb_{t+r}))-A(t+r)] > 0,
    \end{align*}
where the second equality uses equation \eqref{eq:payoff-dif}.
By the definition of $s$, it must be that $U_{t+r-1}(y)- U_{t+r}(y)=0$ for all $r<s$, so we have $w(0,y(\bfb_{t+s}))>A(t+s)$. 
Similarly, by expanding $U_t(y)- U_{t+s+q}(y)$ and then using the definition of $q$, we may conclude $w(0,y(\bfb_{t+s+q}))<A(t+s+q)$. Then, since $A$ is strictly decreasing, we get
    \[w(0,y(\bfb_{t+s}))>A(t+s)>A(t+s+q)>w(0,y(\bfb_{t+s+q})).\]
Therefore, we must have $y(\bfb_{t+s})>y(\bfb_{t+s+q})$. 

However, since $U_t(y)>\max\{U_{t+s}(y),\ldots, U_{t+s+q-1}(y)\}$, \Cref{thm:inner} implies
    \[y(\bfb_{t+s})=y(\bfb_{t+s+1})=y(\bfb_{t+s+2})=\ldots = y(\bfb_{t+s+q}),\]
which is a contradiction.

\paragraph*{Proof of (c).}

Consider any solution $y^{(t)}$ to $\inner$. From condition (iii) and the proof of \cref{prop-prime:no-rent}, we know that $y^{(t)}$ must satisfy 
\begin{equation*}
    \begin{cases}
        y^{(t)}(\bfb_{s})=\uy(0) \quad \forall s\geq \max\{\hat t+1, t+1\}\\
        y^{(t)}(\bfa)=\uy(1).
    \end{cases}
\end{equation*}
Also, if $t< \hat t$, part (b) and equation \eqref{eq:payoff-dif} together imply that $w(0,y^{(t)}(\bfb_s)) = A(s)$ for all $s \in \{t+1, t+2, \ldots, \hat t\}$. This uniquely pins down $y^{(t)}(\bfb_s)$ independently of $t$, since $w_y>0$ by \cref{ass:A-payoff-main}(a). Hence, the solution to $\inner$ is unique for each $t$, and these solutions must be nested.


\subsection{\texorpdfstring{Proof of \Cref{prop: nestedness implies delays}}{Proof of Proposition \ref{prop: nestedness implies delays}}}\label{proof: nestedness implies delays}

Suppose towards a contradiction that $(y,\tau_{t})$ solves $\original$, but there does not exist a solution $(y^{FB},\tau_{t^{FB}})$ to $[\ms P^{FB}]$ with $t^{FB} \leq t$. Then, there must exist a solution $(\hat y,\tau_{\hat t})$ to $[\ms P^{FB}]$ such that $\hat t > t$.\footnote{This is because there must exist a policy rule and a simple stopping time which solves $[\ms P^{FB}]$. The existence follows from \cref{lem:fb-feas,thm:fb-equivalence} in \cref{OA:1st-best}.} 

By condition (iii), $y^{(t)}$ solves $\inner$, and $y^{(\hat t)}$ solves $[\ms P(\hat t)]$. 
Since $(y,\tau_t)$ solves $\original$, by \cref{thm:aux=original}(b) and \cref{thm:inner=aux}(a), we have $W(t) \geq W(\hat t)$, that is,
\begin{align}\label{eq:prop7-1}
    V_t(y^{(t)}) \geq V_{\hat t}(y^{(\hat t)}).
\end{align}
Using condition (ii) and the fact that $y^{(t)}$ nests $y^{(\hat t)}$, we may write
\begin{align*}
    V_t(y^{(t)}) &\, = K_t + p(\bfa) c(1,y^{(t)}(\bfa))  + \sum_{s=t+1}^{T} p(\bfb_s) c(0,y^{(t)}(\bfb_s)), \\
    V_t(y^{(t)}) - V_{\hat t}(y^{(\hat t)}) &\, = \begin{dcases}
        K_t - K_{\hat t}  + \sum_{s=t+1}^{\hat t} p(\bfb_s) c(0,y^{(t)}(\bfb_s)) \caseif  \hat t \leq T \\
         V_t(y^{(t)}) \caseif \hat t = \infty,
    \end{dcases} 
\end{align*}
where 
\begin{align*}
    K_s \:= \begin{cases}
        p(\bfa)f(1,s) + (p(\bfa_s) - p(\bfa))f(0,s) \caseif s \leq T \\
        0 \caseif s = \infty.
    \end{cases}
\end{align*}
Importantly, the term $K_s$ does not depend on the policy rule, so the difference in the principal's expected payoff, $V_t(y^{(t)}) - V_{\hat t}(y^{(\hat t)})$, only depends on the policy rule via the components $y^{(t)}(\bfh)$ for $\bfh \in \{\bfb_{t+1},\ldots, \bfb_{\hat t}\}$.

Define the policy rule $\tilde y$ by\footnote{If $\hat t = \infty$, then we let $\tilde y(\bfh) = y^{(t)}(\bfh)$ for all $\bfh$.}
\begin{equation*}
    \tilde y(\bfh)=
        \begin{cases}
            0 & \text{for } \bfh \in \{ \bfb_1,\bfb_2,\ldots,\bfb_{t}\} \\
            y^{(t)}(\bfh) & \text{for } \bfh \in \{ \bfb_{t+1}, \bfb_{t+2}, \ldots, \bfb_{\hat t}\} \\
            \hat y(\bfh) & \text{for } \bfh \in \{ \bfb_{\hat t+1}, \bfb_{\hat t+2}, \ldots, \bfb_{T}, \bfa\}.
        \end{cases}
\end{equation*}
We claim that $(\tilde y,\tau_t)$ solves $[\ms P^{FB}]$. 
Since $\tilde y$ coincides with $y^{(t)}$ for histories in $\{\bfb_{t+1},\ldots, \bfb_{\hat t}\}$, and coincides with $\hat y$ for the remaining histories, the difference in the principal's expected payoff must satisfy
\begin{align*}
    V_t(\tilde y) - V_{\hat t}(\hat y)  = V_t(y^{(t)}) - V_{\hat t}(y^{(\hat t)}) \geq 0,
\end{align*}
where the inequality follows from \eqref{eq:prop7-1}.
That is, the principal is no worse off under $(\tilde y,\tau_t)$ than under $(\hat y, \tau_{\hat t})$.\footnote{Since $(\hat y,\tau_{\hat t})$ solves $\auxiliaryfb$, the principal's expected payoff under $(\hat y,\tau_{\hat t})$ must be $V_{\hat t}(\hat y)$, even if $(\hat y,\tau_{\hat t})$ is not standard.}

It remains to show that the agent is willing to play $\tau_t$ under $\tilde y$. However, we have
$$U_t(\tilde y) - U_{\hat t}(\hat y) = U_t(y^{(t)}) - U_{\hat t}(y^{(\hat t)}) \geq 0,$$
where the equality follows from an argument analogous to the one above, since $\phi$ is constant so that the agent's payoff is of the form $w(\t,y) + g(\t,t)$; and the inequality holds because $y^{(t)}$ solves $\inner$ and nests $y^{(\hat t)}$. Since $(\hat y, \tau_{\hat t})$ solves $[\ms P^{FB}]$, we have $U_{\hat t}(\hat y) \geq \max_s U_s(0)$, so we must have $U_t(\tilde y) \geq \max_s U_s(0)$.
Therefore, $(\tilde y,\tau_t)$ solves $[\ms P^{FB}]$, which contradicts the assumption made in the beginning of the proof.\qed

\newpage

\section*{\centering  \fontsize{20pt}{24pt}\selectfont Online Appendix}
\addcontentsline{toc}{section}{Online Appendix}
\label[onlineapp]{sec:online-appendix}

\bigskip
\bigskip

\section{Motivating Example: Optimal Investment Timing}\label[onlineapp]{OA:motivating-example}

In discussing our motivating example (\cref{sec:motivating-example}), we only solved for the cost-minimizing policy rule that induces the agent to invest in period 0. This section considers the optimal investment timing -- should the principal induce the agent to invest early, or have him wait for more information? To answer this question, we must specify the principal's benefit from investment. Suppose the principal receives a benefit $V>0$ if the agent invests, regardless of the timing of the investment. For example, the principal may be a government that only cares that the investment occurs before the next election.

Then, the principal's expected payoff from inducing the agent to invest in period 0 is
\begin{align}
    &\, V - [p y^{SB}(1) + 2(1-p) y^{SB}(2)] \cr
    = &\, V - (2-p)I +(4-p)b, \label{eq:example-early}
\end{align}
whereas the principal's expected payoff from inducing the agent to wait until period 1 and invest if and only if state $c=1$ is
\begin{align}
    p(V-I+b). \label{eq:example-late}
\end{align}
The principal induces the agent to invest in period 0 whenever \eqref{eq:example-early} is positive and greater than \eqref{eq:example-late}, which is the case if, for example, the benefit $V$ is large enough.\footnote{We may ignore other strategies for the principal. Inducing investment if and only if $c=2$ is clearly dominated by inducing investment if and only if $c=1$. Also, it is impossible to induce the agent to invest in period 1 regardless of the state, since the agent would rather invest in period 0 to obtain an extra flow benefit $b$.}

Here, the principal has no intrinsic preference for \textit{early} investment, as she obtains the same benefit $V$ regardless of when the agent invests. Nevertheless, the principal provides costly certainty in order to induce early investment. This is because investing in period 0 means investment happens with probability 1, while if the agent waits, he will only invest if $c=1$.


Let us compare the optimal investment timing under first- and second-best. Under first-best, the principal's expected payoff from inducing the agent to invest in period 0 is $V - p y^{FB}(1) = V - I + 2b$. This is strictly greater than the corresponding second-best payoff, \eqref{eq:example-early}. On the other hand, her first-best expected payoff from inducing the agent to wait and invest only in the good state is $p(V - I + b)$, which equals the corresponding second-best payoff \eqref{eq:example-late}. Therefore, if the principal induces late investment under first-best, she also induces late investment under second-best; however, even if the principal induces early investment under first-best, she may induce late investment under second-best. Hence, in this example, moral hazard weakly delays investment, and sometimes strictly so. 

\section{Stronger Monitoring Technology}\label[onlineapp]{OA:monitoring}

This section relates to \cref{rem:monitoring}. We will first show that the principal does not gain from being able to condition the policy rule on whether or not the agent invested (\cref{prop:moni-1}). We will then argue that, once the principal has this stronger monitoring power, we may relax the assumption that the agent only benefits from policy if he invests (\cref{prop:moni-2}).

Let $i \in \{0,1\}$ indicate whether or not the agent invested, where $i=1$ means the agent invested at some $t\in T$, and $i=0$ if the agent never invested. Let $\tty(\bfh,i)$ denote a policy rule that conditions on both the history $\bfh \in \ms H$ and the indicator $i$. We first show that the principal does not benefit from conditioning policy on $i$.

\begin{prop}\label{prop:moni-1}
    Suppose the principal can commit to a policy rule $\tty(\bfh,i)$. There exists an optimal policy rule which satisfies $
    \tty(\bfh,0) = \tty(\bfh,1)$ for all $\bfh$.
\end{prop}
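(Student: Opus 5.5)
Take any optimal pair $(\tty,\tau)$ in the richer problem, in which the principal commits to $\tty(\bfh,i)$ and the agent best responds with a stopping time $\tau$. The plan is to replace it by a rule that ignores $i$ and does at least as well. As a preliminary I will establish that an optimum exists in the richer problem, using the bounding argument from the proof of \cref{lem:standard-solutions-exist}(a) and Weierstrass.

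The starting observation is that, under the maintained \cref{ass:A-payoff-main}(c), $u(\t,y,\infty)=0$. Hence the agent's payoff is unaffected by $\tty(\bfh,0)$ under every history. The component $\tty(\cdot,0)$ therefore enters the principal's problem only through her own payoff, and only on histories where $\tau(\bfh)=\infty$.

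The construction has two steps.
\begin{itemize}
\item[] \textbf{Step 1.} Replace $\tty(\bfh,0)$ by $\tty(\bfh,1)$ for every $\bfh$, and define $y(\bfh):=\tty(\bfh,1)$. Every stopping time $\tau'$ yields the agent exactly the same expected payoff under $y$ as under $\tty$, because on histories where $\tau'$ never stops his payoff is $0$ in both cases. So $\tau$ remains a best response to $y$, and $(y,\tau)$ is feasible in $[\ms P_0]$.
\item[] \textbf{Step 2.} Compare the principal's payoffs. On histories where $\tau$ stops, nothing changes. On histories where $\tau(\bfh)=\infty$, her payoff changes from $v(\t,\tty(\bfh,0),\infty)$ to $v(\t,\tty(\bfh,1),\infty)$, and the latter may be lower, because $v_y(\cdot,\cdot,\infty)\le 0$. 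So a direct substitution can lose value.
\end{itemize}

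To handle Step 2, I will not compare pointwise. Instead, I will compare values across problems:
\begin{enumerate}
\item The richer problem's value is at most the value of $[\ms P_0]$ with the non-investment payoff replaced by its best case, $\sup_y v(\t,y,\infty)=v(\t,0,\infty)=0$, using \cref{ass:P-payoff-main}(a),(c).
\item By \cref{lem:principal_wants_simple} together with \cref{thm:aux=original} and \cref{lem:standard-solutions-exist}, $[\ms P_0]$ has a standard solution $(y^*,\tau_{t^*})$.
\item A standard solution sets $y^*=0$ exactly on the histories $\bfb_1,\dots,\bfb_{t^*}$ where the agent does not invest (and everywhere if $t^*=\infty$). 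So it already earns $v(\t,0,\infty)=0$ on non-investment histories.
\end{enumerate}

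The step I expect to be the main obstacle is making item 1 rigorous, namely showing that the richer optimum's value does not exceed $V_{\tau_{t^*}}(y^*)$. Let $(\tty,\tau)$ be an arbitrary feasible pair in the richer problem and define $y(\bfh):=\tty(\bfh,1)$. I will bound its value by the value of $(y,\tau)$ computed with $v(\cdot,\cdot,\infty)$ replaced by $0$.

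The principal's payoff in that modified problem is $\sum_{\text{stopping histories}}p\,v$ plus zero. Re-running the arguments of \cref{lem:principal_wants_simple} and \cref{thm:inner=aux} with this modified payoff should go through, because those proofs use only that $v(\cdot,\cdot,\infty)\le 0$ enters as a nonpositive term, via \eqref{eq:v_formula}. They would then deliver a standard simple solution whose value equals $V_{\tau_{t^*}}(y^*)$.

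The policy rule $\tty(\bfh,0)=\tty(\bfh,1)=y^*(\bfh)$ is therefore optimal in the richer problem, which proves the proposition.
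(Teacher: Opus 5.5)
Your argument is correct, but it takes a much longer route than the paper. The paper works pointwise. Given any feasible $(\tty,\tau)$, it sets both components equal to $\tty(\bfh,1)$ on histories with $\tau(\bfh)<\infty$, and equal to $0$ on histories with $\tau(\bfh)=\infty$. Under this change the agent's on-path payoff is unchanged. Every deviation payoff weakly falls: a deviation that stops where $\tau$ does not now receives $u(\theta,0,t')\le u(\theta,\tty(\bfh,1),t')$, since $u$ is increasing in $y$. The principal weakly gains because $v(\theta,0,\infty)=0\ge v(\theta,y,\infty)$. So the loss you found in Step 2 disappears if, on non-investment histories, you set the policy to $0$ rather than to $\tty(\bfh,1)$; incentive compatibility survives because this only punishes deviations. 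You instead bound the richer problem by a modified $[\ms P_0]$ with $v(\cdot,\cdot,\infty)\equiv 0$, and close the gap using the existence of a standard solution. That step is valid, and you can drop the hedge ``should go through'': the modified payoff itself satisfies \cref{ass:P-payoff-main}, so \cref{lem:standard-solutions-exist} and \cref{thm:aux=original}(a) apply verbatim. A standard solution earns the same payoff under the original and the modified $v$, which closes your chain. (The same zeroing-out trick would also show directly that the modified problem's value cannot exceed that of $[\ms P_0]$.) The paper's route is self-contained and more general. It uses only monotonicity of $u$ in $y$ and $v(\theta,y,\infty)\le 0$, and it weakly improves every feasible pair while keeping the same, possibly non-simple, stopping time. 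It also avoids APM and the machinery behind \cref{thm:aux=original}. Your route's only extra output is that the optimal $i$-invariant rule can be taken to be standard, which the paper's argument plus \cref{thm:aux=original} also delivers. Your preliminary existence step for the richer problem is unnecessary, since you end up bounding all feasible pairs.
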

\begin{proof}
    Take any feasible solution $(\tty,\tau)$. We will produce another feasible solution $(\tilde \tty,\tau)$ such that $\tilde \tty$ is constant in $i$, and the principal's expected payoff is weakly higher under $(\tilde \tty,\tau)$ than under $(\tty,\tau)$. The result follows because there exists a solution to $\original$.

    We define $\tilde \tty$ as follows. For $\bfh$ such that $\tau(\bfh) < \infty$ (that is, for histories under which the agent is supposed to have stopped), set $\tilde \tty(\bfh,0) = \tilde \tty(\bfh,1) = \tty(\bfh,1)$. 
    For $\bfh$ such that $\tau(\bfh) = \infty$, set $\tilde \tty(\bfh,0) = \tilde \tty(\bfh,1) = 0$.

    Clearly, the principal's equilibrium payoff is weakly higher under $\tilde \tty$ than under $\tty$. The agent's equilibrium payoff is the same under $\tty$ and $\tilde \tty$. Moreover, the agent's deviation payoffs are weakly lower under $\tilde \tty$, since if he doesn't stop when he should, he still gets 0, whereas if he stops when he should not, he now gets the zero policy. Therefore, $(\tilde \tty, \tau)$ is feasible and gives a weakly higher expected payoff to the principal. 
\end{proof} 

Next, consider the following relaxation of \cref{ass:A-payoff-main}(c):
\[
    \text{Assumption 1(c$'$):  For all $\theta$ and $y$,  $u(\theta,y,\infty) \geq u(\theta,0,\infty)=0$.}
\]
Let us modify the principal's program $\original$ in two ways -- the principal can commit to $\tty(\bfh,i)$, and \cref{ass:A-payoff-main}(c) is replaced with Assumption 1(c$'$). Let $\modori$ denote such a modified program. 
We will make use of the following condition:
\begin{align}\label{eq:moni-1}
        \begin{cases}
            \tty(\bfh,0) = 0 & \text{for all } \bfh\\
            \tty(\bfh,1) = 0 \caseif \tau(\bfh) = \infty.
        \end{cases}
\end{align}
We have the following lemma.
\begin{lem}\label{lem:moni-1}
    In $\modori$, it is without loss of optimality to restrict attention to solutions satisfying \eqref{eq:moni-1}.
\end{lem}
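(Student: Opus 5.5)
The plan is to mimic the proof of \cref{prop:moni-1}. I take an arbitrary feasible pair $(\tty,\tau)$ for $\modori$ and replace it with a pair $(\tilde\tty,\tau)$ satisfying \eqref{eq:moni-1}. I will then show that $\tau$ is still a best response and that the principal is weakly better off. Since a solution to $\modori$ exists (by the Weierstrass argument used for \cref{lem:standard-solutions-exist}(a), which relies only on \cref{ass:P-payoff-main}(c)), applying this replacement to an optimal pair gives an optimal pair satisfying \eqref{eq:moni-1}.

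The construction is as follows. For histories $\bfh$ with $\tau(\bfh)<\infty$, set $\tilde\tty(\bfh,1)=\tty(\bfh,1)$. For histories with $\tau(\bfh)=\infty$, set $\tilde\tty(\bfh,1)=0$. Finally, set $\tilde\tty(\bfh,0)=0$ for all $\bfh$.

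On the equilibrium path, the realized policy changes only at histories where $\tau(\bfh)=\infty$. At those histories it moves from $\tty(\bfh,0)$ to $0$.
\begin{enumerate}
\item \textbf{Principal.} Her payoff there is $v(\theta,\cdot,\infty)$, which is non-increasing in $y$ by \cref{ass:P-payoff-main}(c). So her expected payoff weakly rises.
\item \textbf{Agent on path.} His equilibrium payoff weakly falls. By Assumption 1(c$'$), $u(\theta,y,\infty)\ge u(\theta,0,\infty)=0$, so at these histories his payoff goes from a nonnegative value to $0$. This is the one place where the argument differs from \cref{prop:moni-1}: lowering the agent's equilibrium payoff could in principle break incentive compatibility.
\end{enumerate}

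The main step, and the expected obstacle, is showing that no deviation $\tau'$ becomes profitable under $\tilde\tty$. I will compare history by history against the agent's payoff under $\tilde\tty$ when he plays $\tau$. There are four cases.
\begin{enumerate}
\item $\tau(\bfh)=\infty$ and $\tau'(\bfh)=\infty$. Both give $u(\theta,0,\infty)=0$.
\item $\tau(\bfh)=\infty$ and $\tau'(\bfh)<\infty$. The deviation gives $u(\theta,0,\tau'(\bfh))$.
\item $\tau(\bfh)<\infty$ and $\tau'(\bfh)<\infty$. The deviation gives $u(\theta,\tty(\bfh,1),\tau'(\bfh))$, the same as under $\tty$.
\item $\tau(\bfh)<\infty$ and $\tau'(\bfh)=\infty$. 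The deviation gives $u(\theta,0,\infty)=0$. Under $\tty$ it gave $u(\theta,\tty(\bfh,0),\infty)\ge 0$.
\end{enumerate}

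A pointwise comparison will not work directly, because in case 2 the term $u(\theta,0,\tau'(\bfh))$ could exceed $0$ for some $\theta=1$ history. I will therefore argue in expectation. Define $\hat\tau$ to agree with $\tau'$ on histories where $\tau<\infty$ and with $\tau$ elsewhere. I am unsure that $\hat\tau$ is adapted; if it is not, I will instead work with the stopping-time decomposition \eqref{eq:u_formula}.

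The key inequality to establish is
\[ U_{\tau'}(\tilde\tty)\le U_{\tau'}(\tty)-\E\big[\mathbf 1\{\tau=\infty\}\,u(\theta,\tty(\bfh,0),\infty)\big]\le U_{\tau}(\tty)-\E\big[\mathbf 1\{\tau=\infty\}\,u(\theta,\tty(\bfh,0),\infty)\big]=U_\tau(\tilde\tty). \]
The first inequality is the delicate one. It requires the agent's gain under $\tty$ from stopping at histories where $\tau=\infty$ to be at least his gain under $\tilde\tty$. Since $\tilde\tty$ pays policy $0$ there while $\tty$ paid at least $0$, and $u$ is increasing in $y$, stopping under $\tilde\tty$ yields $u(\theta,0,\cdot)\le u(\theta,\tty(\bfh,1),\cdot)$. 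The gap in the waiting payoff, $u(\theta,\tty(\bfh,0),\infty)$, cancels because it appears on both sides.

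Handling this cancellation carefully, including the extra loss in case 4 which only helps, is the step I expect to require the most care.
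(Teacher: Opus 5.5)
Your plan rests on the claim that $\tau$ remains a best response under $\tilde\tty$. That claim is false, and the first inequality in your key chain fails.

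\textbf{Where the inequality breaks.} Take a history with $\tau(\bfh)=\infty$ and $\tau'(\bfh)<\infty$ (your case 2). Your middle expression subtracts $\ttu(\theta,\tty(\bfh,0),\infty)$ at this history. But under $\tty$ the deviation $\tau'$ never collected that waiting payoff; it collected $\ttu(\theta,\tty(\bfh,1),\tau'(\bfh))$. So the subtracted term has nothing to cancel against. At this history you would need $\ttu(\theta,\tty(\bfh,1),\tau'(\bfh))-\ttu(\theta,0,\tau'(\bfh))\ge \ttu(\theta,\tty(\bfh,0),\infty)$. This fails, for example, whenever $\tty(\bfh,1)=0<\tty(\bfh,0)$ and $\ttu(\theta,\cdot,\infty)$ is strictly increasing.

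\textbf{The conclusion itself fails.} Take $T=1$ and $p_1=1/2$. Let $\ttu(\theta,y,t)=y+g(\theta,t)$ for $t\in\{0,1\}$ and $\ttu(\theta,y,\infty)=y$, with $g(1,0)=10$, $g(1,1)=0$ and $g(0,0)=g(0,1)=-1$. These payoffs satisfy \cref{ass:A-payoff-main} except (c), together with 1(c$'$). Set $\tty(\bfb_1,0)=10$ and all other policies to $0$.
\begin{itemize}
\item Under $\tty$: $U_{\tau_1}=5$, never investing gives $5$, $U_{\tau_0}=4.5$, and both non-simple plans give $-1/2$. So $\tau_1$ is a best response.
\item Under $\tilde\tty$: $U_{\tau_1}=0<4.5=U_{\tau_0}$.
\end{itemize}
The non-investment payment you delete is exactly what deterred early investment on the no-breakdown path. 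Your argument never uses optimality of $(\tty,\tau)$, so restricting to an optimal pair does not rescue it as written. Your principal-side step (her payoff weakly rises on histories with $\tau(\bfh)=\infty$) is fine.

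\textbf{What is missing: let the agent re-optimize instead of holding $\tau$ fixed.} The paper proceeds in three steps.
\begin{enumerate}
\item \emph{Nonnegativity reduction.} By an argument like the one in \cref{OA:simple_claim}, restrict to pairs where the principal's continuation payoff after every stopping node is nonnegative. That is, $v(0,\tty(\bfb_r,1),s)\ge 0$ for $(r,s)\in\ms R$, and the aggregate over $\{\bfb_{t+1},\ldots,\bfb_T,\bfa\}$ is nonnegative, where $t$ is the no-breakdown stopping period.
\item \emph{The new best response can only stop earlier on the no-breakdown path.} If $(\tilde\tty,\tau)$ is infeasible, any strictly better response $\tilde\tau$ differs from $\tau$ only by stopping at some $\bfa_k$ with $k<t$. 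After a breakdown where $\tau$ never stops, stopping now yields $u(0,0,\cdot)<0$. Where $\tau$ does stop after a breakdown, the stopping payoff is unchanged while waiting became weakly worse. Beyond $\bfa_t$, only deviation payoffs fell.
\item \emph{The principal weakly gains from $(\tilde\tty,\tilde\tau)$.}
\begin{itemize}
\item On breakdown histories where $\tau$ never stopped, she moves from $v(0,\tty(\bfb_r,0),\infty)\le 0$ to either $0$ or $v(0,0,k)\ge 0$, by \cref{ass:P-payoff-main}(a),(c),(d).
\item Everywhere else, investment is unchanged or moves earlier. This helps her by APM, precisely because of the nonnegativity reduction.
\end{itemize}
\end{enumerate}
Your proposal has no counterpart to the reduction in step 1, and without it APM cannot be invoked.
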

\begin{proof}
    Let us identify $\tau$ with the pair $(t,\ms R)$, where $t$ is the period at which $d(\bfa_t)=\ttstop$ (or $t=\infty$ if there is no such period), and $\ms R:=\{(r,s):d(\bfb_{r,s})=\ttstop\}$.
    Using an argument similar to that in \cref{OA:simple_claim}, one can show that it is without loss of optimality to only consider solutions $(\tty,\tau)$ such that the principal always obtains a non-negative payoff from the continuation histories after the agent stops -- that is, $v(0,\tty(\bfb_r,1),s) \geq 0$ for all $(r,s) \in \ms R$, and $V_t(\tty) \:= \sum_{k=t+1}^T p(\bfb_k)v(0,\tty(\bfb_k,1),t)+p(\bfa)v(1,\tty(\bfa,1),t) \geq 0$.\footnote{If the principal ever obtains a negative payoff, then consider $\hat \tty$ which is equal to $\tty$ except at the following histories: (1) for $\bfh \in \{\bfb_1,\ldots,\bfb_t\}$, 
    set $\hat\tty(\bfh,0) = \hat\tty(\bfh,1) = 0$ if $\tau(\bfh) = \infty$, or if $\tau(\bfh) < \infty$ and $v(0,\tty(\bfh,1),s) < 0$ for $s = \tau(\bfh)$; (2) for $\bfh \in \{\bfb_{t+1},\ldots,\bfb_T,\bfa\}$,
    set $\hat\tty(\bfh,0) = \hat\tty(\bfh,1) = 0$ if $V_t(\tty) < 0$. As in \cref{OA:simple_claim}, one can verify that the principal strictly prefers $\hat \tty$ over $\tty$.}
    Henceforth in this proof, we restrict attention to such solutions.

    Take any feasible solution $(\tty,\tau)$. Let $\tilde \tty$ be the policy rule which satisfies \eqref{eq:moni-1} and otherwise coincides with $\tty$. We will show that there exists a feasible solution $(\tilde \tty, \tilde \tau)$ which gives the principal a weakly higher expected payoff than $(\tty,\tau)$. We are done if $(\tilde \tty,\tau)$ is feasible, since the principal weakly prefers $(\tilde \tty,\tau)$ over $(\tty,\tau)$. So, suppose $(\tilde \tty,\tau)$ is not feasible. This means there exists some $\tilde \tau$ such that, under $\tilde \tty$, $\tilde \tau$ is the agent's best response and gives him a strictly higher expected payoff than $\tau$. 

    We claim that $\tilde\tau$ must have the agent stop at some $\bfa_k$ with $k<t$. This is because, under $\tilde \tty$, the agent cannot profitably deviate from $\tau$ at any other partial history. First, at a partial history $\bfb_{r,r}$ such that $(r,s) \in \ms R$ for some $s\geq r$, the agent should continue to follow $\tau$ and stop at $\bfb_{r,s}$, since the change from $\tty$ to $\tilde \tty$ does not affect his payoff from stopping at $\bfb_{r,s}$ and can only decrease his payoff from not stopping. At a partial history $\bfb_{r,r}$ such that $(r,s) \notin \ms R$ for any $s\geq r$, the agent should again follow $\tau$ and never stop, obtaining 0, since stopping will give him a negative payoff under $\tilde \tty$. At $\bfa_t$, the agent will continue to stop, since only his deviation payoffs have been decreased under $\tilde \tty$. Hence, it must be that $\tilde\tau$ has the agent stop at some $\bfa_k$ with $k<t$.
    
    We now argue that the principal weakly prefers $(\tilde \tty,\tilde\tau)$ over $(\tty,\tau).$ Since the principal weakly prefers $(\tilde \tty,\tau)$ over $(\tty,\tau)$, it is enough to show that she weakly prefers $(\tilde \tty,\tilde\tau)$ over $(\tilde \tty,\tau)$. For histories $\bfb_r$ such that there is no $s \geq r$ with $(r,s) \in \ms R$, the principal is weakly better off under $(\tilde \tty,\tilde\tau)$ by \cref{ass:P-payoff-main}(a) and (d). For histories $\bfb_r$ such that $(r,s) \in \ms R$ for some $s \geq r$, and for histories $\{\bfb_{t+1},\ldots,\bfb_T,\bfa\}$, the principal is weakly better off under $(\tilde \tty,\tilde\tau)$, due to APM and because we are assuming, without loss of optimality, that the principal obtains a non-negative expected payoff conditional on these histories under $(\tty,\tau)$ and thus also under $(\tilde\tty,\tau)$. This proves the lemma.
\end{proof}

Now, fix a payoff function $u(\t,y,t)$ which satisfies \cref{ass:A-payoff-main}, and let $\ttu(\t,y,t)$ be a payoff function that satisfies all the statements in \cref{ass:A-payoff-main} except possibly for (c), satisfies Assumption 1(c$'$), and coincides with $u$ except possibly when $y >0$ and $t=\infty$. We will compare the original program $\original$ with the modified program $\modori$, where $\modori$ has the same payoff functions and parameters as does $\original$, except that $u$ is replaced with $\ttu$ in $\modori$.

Define 
\begin{align*}
    K \:= &\, \{(\tty,\tau) \mid (\tty,\tau) \text{ satisfies } \eqref{eq:moni-1}\}\\
    L \:= &\, \{(y,\tau) \mid y(\bfh) = 0 \text{ if } \tau(\bfh) = \infty\}.
\end{align*}
We can define a bijection $\Psi: K \to L$ such that each $(\tty,\tau)$ is mapped to $\Psi(\tty,\tau) = (y,\tau)$, where $y$ is given by $y(\bfh) = \tty(\bfh,1)$ for all $\bfh$.

If $\Psi(\tty,\tau) = (y,\tau)$, then for every $\bfh$ and for every $t \in \Tui$, we have $\mathtt{u}(\t,\tty(\bfh,i),t) = u(\t,y(\bfh),t)$, where $\t$ is determined by $\bfh$ and $i$ by $t$. That is, the agent's ex post payoff $\ttu$ under $\tty$ always coincides with his ex post payoff $u$ under $y$, regardless of whether the agent follows $\tau$ or not. To see this, note that $\tty(\bfh,1) = y(\bfh)$, so the payoff equality holds whenever $t <\infty$. Moreover, if $t = \infty$, then $\tty(\bfh,0) = 0$, so the equality again holds because $\ttu(\t,0,\infty) = u(\t,y(\bfh),\infty) = 0$.

Next, if $\Psi(\tty,\tau) = (y,\tau)$, then for every $\bfh$, we have $v(\t,\tty(\bfh,i),\tau(\bfh)) = v(\t,y(\bfh),\tau(\bfh))$, where $\t$ is determined by $\bfh$, and $i$ by $\tau(\bfh)$. That is, the principal's \textit{on-path} ex post payoff always coincides under $(\tty,\tau)$ and $(y,\tau)$. To see this, note that because $\tty(\bfh,1) = y(\bfh)$, the equality holds whenever $\tau(\bfh) < \infty$. Moreover, if $\tau(\bfh) = \infty$, then $\tty(\bfh,0) = y(\bfh) = 0$, so the equality again holds.\footnote{The principal's payoff does not necessarily coincide off-path, since if $\tau(h) < \infty$ but $t = \infty$, then we have $\tty(\bfh,0) = 0$, but $y(\bfh)$ need not be 0. This does not matter for our analysis.}

From the above two paragraphs, we conclude that if $\Psi(\tty,\tau) = (y,\tau)$, then $(\tty,\tau)$ and $(y,\tau)$ induce the same expected payoff for the principal, and $(\tty,\tau)$ is feasible under $\modori$ if and only if $(y,\tau)$ is feasible under $\original$. Hence, $(\tty,\tau)$ is constrained-optimal for $\modori$ among solutions in $K$ (which by \cref{lem:moni-1} also means that it is globally optimal for $\modori$) if and only if $(y,\tau)$ is constrained-optimal for $\original$ among solutions in $L$.

As a result, if we find a constrained-optimal solution $(y,\tau)$ to $\original$ among solutions in $L$, then $\Psi^{-1}(y,\tau)$ is an optimal solution to $\modori$ and gives the same payoff to the principal. In particular, $\original$ admits an optimal solution that is standard, and a standard solution is in $L$, so we have the following result.
\begin{prop}\label{prop:moni-2}
    If a standard solution $(y,\tau_t)$ solves $\original$, then the solution $(\tty,\tau_t)$, where $\tty$ is given by $\tty(\bfh,0) = 0$ and $\tty(\bfh,1) = y(\bfh)$, solves $\modori$ and gives the same expected payoff to the principal.
\end{prop}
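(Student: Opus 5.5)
The argument will be assembled from the bijection $\Psi$ and \cref{lem:moni-1}, since the paragraphs preceding the statement already carry most of the content. Let $(y,\tau_t)$ be a standard solution to $\original$, and let $\tty$ be defined by $\tty(\bfh,0)=0$ and $\tty(\bfh,1)=y(\bfh)$.

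\emph{Step 1: $(y,\tau_t)\in L$ and $(\tty,\tau_t)\in K$.} By standardness, $y(\bfb_s)=0$ for all $s\le t$, and $y(\bfa)=0$ if $t=\infty$. These are exactly the histories with $\tau_t(\bfh)=\infty$, so $(y,\tau_t)\in L$. By construction $\tty$ satisfies \eqref{eq:moni-1}, and $\Psi(\tty,\tau_t)=(y,\tau_t)$.

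\emph{Step 2: transfer of payoffs.} The preceding discussion establishes two facts for pairs related by $\Psi$. First, the agent's ex post payoff $\ttu$ under $\tty$ coincides with $u$ under $y$ for every stopping time, not only $\tau_t$; this uses $\tty(\bfh,0)=0$ and $\ttu(\t,0,\infty)=0$. Second, the principal's on-path ex post payoffs coincide. Consequently $(\tty,\tau_t)$ is feasible for $\modori$ if and only if $(y,\tau_t)$ is feasible for $\original$, and the two pairs give the principal the same expected payoff. In particular, $(\tty,\tau_t)$ is feasible and its value equals the value of $\original$.

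\emph{Step 3: optimality.} Suppose some feasible $(\tty',\tau')$ gives a strictly higher value in $\modori$. By \cref{lem:moni-1}, we may take $(\tty',\tau')\in K$ with a weakly higher value still. Its image $\Psi(\tty',\tau')=(y',\tau')\in L$ is then feasible for $\original$ and yields the same strictly higher payoff, which contradicts the optimality of $(y,\tau_t)$ in $\original$. Hence $(\tty,\tau_t)$ solves $\modori$.

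The main obstacle is the reduction to $K$, that is, \cref{lem:moni-1}. Its proof relies on the nonnegativity of the principal's continuation payoffs after the agent stops, and on APM, to show that any agent deviation induced by zeroing out $\tty$ benefits the principal. I take that lemma as given, so the remaining work is the bookkeeping in Steps 1 through 3.
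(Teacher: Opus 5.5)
Your proposal is correct and follows essentially the same route as the paper. Both arguments rest on the bijection $\Psi:K\to L$, on the agent's payoffs coinciding for every stopping time and the principal's on-path payoffs coinciding, on \cref{lem:moni-1} to reduce the modified program to $K$, and on a standard solution lying in $L$. The only difference is cosmetic: you phrase the last step as a direct contradiction, whereas the paper states an equivalence between constrained optima over $K$ and over $L$.
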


\section{A Sufficient Condition for APM}
\label[onlineapp]{OA:bgd}

We present a sufficient condition for the agent's utility function $u$ to satisfy APM.

\begin{lem}
    \label{lem:u-is-apm} If $g(\theta,t)/\phi(t)$ is non-increasing in $t$, then the agent's payoff function, $u(\theta,y,t),$ satisfies APM.
\end{lem}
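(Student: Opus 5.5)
The plan is to factor out $\phi$ and reduce APM to a statement about the normalized payoff $u/\phi = w(\t,y) + g(\t,t)/\phi(t)$. Fix $t\in T$, a finite sequence $(\alpha_i,\theta_i,y_i)_i$ with $\alpha_i>0$, and $t'<t$, and suppose $\sum_i \alpha_i u(\theta_i,y_i,t)\ge 0$. Write $W:=\sum_i\alpha_i w(\theta_i,y_i)$ and $G(s):=\sum_i \alpha_i g(\theta_i,s)/\phi(s)$. Then
\[
\sum_i \alpha_i u(\theta_i,y_i,s)=\phi(s)\bigl[W+G(s)\bigr].
\]
This uses only that $\phi$ depends on $s$ alone, which holds by \cref{ass:A-payoff-main}.

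Since $\phi(t)>0$ by \cref{ass:A-payoff-main}(b), the premise gives $W+G(t)\ge 0$. The hypothesis that $g(\theta,\cdot)/\phi(\cdot)$ is non-increasing, applied to each $\theta_i$ and weighted by the positive $\alpha_i$, yields $G(t')\ge G(t)$. Hence $W+G(t')\ge W+G(t)\ge 0$.

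It remains to compare $\phi(t')[W+G(t')]$ with $\phi(t)[W+G(t)]$. Because $\phi$ is non-increasing and positive, $\phi(t')\ge\phi(t)>0$. We therefore chain
\[
\phi(t')\bigl[W+G(t')\bigr]\ \ge\ \phi(t)\bigl[W+G(t')\bigr]\ \ge\ \phi(t)\bigl[W+G(t)\bigr].
\]
The first inequality holds because the bracket is non-negative, and the second because $\phi(t)>0$ and $G(t')\ge G(t)$. This is exactly the APM conclusion.

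The only delicate point is the order of these two steps. Non-negativity of the bracket must be established at $t'$, via monotonicity of $g/\phi$, before multiplying by the larger $\phi(t')$. Otherwise a negative bracket scaled by a larger $\phi$ would reverse the inequality. This is precisely where the premise $\sum_i\alpha_i u\ge 0$ is used. No other earlier result is needed, and times are restricted to $t,t'\in T$, so $t=\infty$ never enters.
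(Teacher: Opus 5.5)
Your proof is correct and follows the paper's argument: the same normalization $W$ and $G(s)$, the same use of $\phi(t)>0$ to get $W+G(t)\ge 0$, and the same final comparison. Your explicit chain $\phi(t')[W+G(t')]\ge\phi(t)[W+G(t')]\ge\phi(t)[W+G(t)]$ just spells out the step that the paper compresses into ``$\phi, G$ decreasing.''
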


\begin{proof}
    Fix some finite sequence $(\alpha_i,\theta_i,y_i)_i$, with $\alpha_i>0$ for all $i$, and suppose that \begin{equation}
    \label{eq_t_prime_bdg}\sum_i \alpha_iu(\theta_i,y_i,t)\geq 0,\end{equation} for some $t\in T.$ We would like to show that \[\sum_i \alpha_iu(\theta_i,y_i,s)\geq \sum_i \alpha_iu(\theta_i,y_i,t),\]
    for any $s<t.$ 

    Define $W:=\sum_i\alpha_iw(\theta_i,y_i)$ and $G(\hat{t}):=\sum_i\alpha_i g(\theta_i,\hat{t})/\phi(\hat{t}).$ Observe that $G$ is non-increasing, since $g/\phi$ is non-increasing in $t$. Then, rewriting \eqref{eq_t_prime_bdg}, we find:
    \begin{equation*}
        \sum_i\alpha_i[\phi(t)w(\theta_i,y_i)+g(\theta_i,t)] =\phi(t)[W+G(t)]\geq 0.
    \end{equation*}
    Then, for any $s<t$, we have:
    \begin{align*}
		\phi(t)[W+G(t)]&\geq 0 \\ 
	   W+G(t) &\geq 0 &[\phi(t)>0] \cr
		\phi(s)(W+G(s)) &\geq	\phi(t)(W+G(t)) & [\phi,G \text{ decreasing}] 
	\end{align*}
and so: \[\sum_i \alpha_iu(\theta_i,y_i,s)\geq \sum_i \alpha_iu(\theta_i,y_i,t),\]
as desired. 
\end{proof}

\section{Analysis of First-Best Benchmark}\label[onlineapp]{OA:1st-best}
In the first-best problem, the principal commits to a \textit{first-best policy rule} $Y(\bfh,t)$,\footnote{In this appendix, we will always use $Y$ to denote a first-best policy rule ($Y:\ms H\times(T\cup\{\infty\})\to\R_+$) and $y$ to denote a policy rule ($y:\ms H\to\R_+$).} which depends on both the history $\bfh$ and the agent's investment time $t$: 
\begin{align}
    [\ms P_0^{FB}] \hspace{3em}\max_{Y,\tau} &\, \mathbb{E} [v(\t,Y(\bfh,\tau(\bfh)),\tau(\bfh))] \hspace{15em}  \notag\\
    \text{subject to } &\, \E[u(\t,Y(\bfh,\tau(\bfh)),\tau(\bfh))] \geq \E[u(\t,Y(\bfh,\tau'(\bfh)),\tau'(\bfh))] \qquad \forall \tau' \in \ms T.\notag
\end{align}
Note that this problem differs from the problem $[\ms P^{FB}]$ described in the main text; we will call $[\ms P^{FB}]$ the auxiliary first-best problem.

It is natural for the principal's policy to maximally punish the agent's deviations. Furthermore, as we will show, the principal should induce a simple stopping time. We call a pair $(Y,\tau_t)$ with these two properties a \textit{standard contract}.

\begin{defn}[Standard contract]
    A \textbf{standard contract} is a pair $(Y,\tau_t)$ such that $Y(\bfh,t')=0$ if $t'\neq\tau_t(\bfh).$
\end{defn}

Fixing a simple stopping time $\tau_t$, there is a bijection between standard contracts $(Y,\tau_t)$ and policy rules $y$, given by interpreting the policy rule as the policy in the standard contract, conditional on the agent's obedience. That is, the mapping from a standard contract to a policy rule, and its inverse, are given by: 
\[y^{(Y,\tau_t)}(\bfh):=Y(\bfh,\tau_t(\bfh)), \qquad Y^{(y,\tau_t)}(\bfh,t'):=\begin{cases} y(\bfh) & t'=\tau_t(\bfh) \\
0 & t'\neq\tau_t(\bfh).
\end{cases}\] Notice that the agent's expected payoff from playing $\tau_t$ under the standard contract $(Y,\tau_t)$ is equal to $U_t(y^{(Y,\tau_t)}).$

Our first lemma says that a standard contract is optimal for the principal. 
\begin{lem}
\label{lem:fb-standard}
    There exists a standard contract which solves $[\ms P_0^{FB}].$
\end{lem}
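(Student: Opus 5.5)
The plan is to start from an arbitrary solution $(Y,\tau)$ of $[\ms P_0^{FB}]$ and improve it in two steps. The first step turns it into a contract that maximally punishes deviations. The second step replaces a possibly non-simple $\tau$ by a simple stopping time, using the argument of \cref{lem:principal_wants_simple}. Existence of some solution to $[\ms P_0^{FB}]$ will follow from a Weierstrass argument analogous to \cref{lem:standard-solutions-exist}(a). Histories where the agent stops on path can be capped at some $\bar y$ because $v\to-\infty$. Off-path and non-stopping entries can be set to $0$ without loss, since this only lowers the agent's deviation payoffs and weakly raises the principal's payoff by \cref{ass:P-payoff-main}(c).

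\textbf{Step 1 (punishment).} Given a solution $(Y,\tau)$, define $\hat Y(\bfh,t')=Y(\bfh,\tau(\bfh))$ if $t'=\tau(\bfh)<\infty$, and $\hat Y(\bfh,t')=0$ otherwise. On path, the agent's payoff is unchanged. When $\tau(\bfh)=\infty$, \cref{ass:A-payoff-main}(c) makes the agent's payoff $0$ regardless of policy, and the principal weakly gains because $v_y(\t,\cdot,\infty)\le 0$. Any deviation $\tau'$ now receives policy $0$ at every history where it differs from $\tau$, so $u$ is weakly lower there by $w_y>0$. Hence $(\hat Y,\tau)$ is feasible and weakly better for the principal. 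Obedience then reduces to a single comparison: $\E[u(\t,\hat Y,\tau)]\ge \max_{\tau'}\E[u(\t,0,\tau')]$. By the argument in \cref{lem:2_part_a} and \cref{ass:A-payoff-main}(f), this maximum is attained by a simple stopping time, so it equals $\max_s U_s(\mathbf 0)$.

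\textbf{Step 2 (simplicity).} If $\tau$ is non-simple, let $t$ be its stopping period on the no-breakdown path and $\ms R$ its breakdown-stopping set. Choose $(r,s)\in\ms R$ as the principal's favorite post-breakdown stop. Replicate the construction of \cref{lem:principal_wants_simple}: set the policy to $y(\bfb_r)$ on all $\bfb_i$ with $i\le t$, keep the policy for later histories, and switch to $\tau_0$ in a standard contract. The principal's payoff comparison is identical to the one in that proof and uses APM and positive monotonicity (\cref{lem:pos-mon}). The needed nonnegativity facts, $v(0,y(\bfb_r),s)\ge0$ and $V_t\ge 0$, follow from optimality as in \cref{OA:simple_claim}. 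For the agent, only the participation inequality must be checked. The agent's on-path payoff under $\tau_0$ is at least his payoff under $(\hat Y,\tau)$, by the same term-by-term matching as in the Claim in that proof. That payoff is in turn at least $\max_s U_s(\mathbf 0)$.

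\textbf{Main obstacle.} The delicate step is verifying that the individual-rationality-type constraint survives the switch to $\tau_0$. In particular, the matching argument needs $u(0,y(\bfb_r),s)\ge 0$. Under a punishing contract this is no longer a consequence of the agent's best response. If it fails, I would first drop the stops at breakdowns with negative ex post payoff, setting the policy there to $0$ and never stopping. This raises the agent's payoff and weakly raises the principal's payoff by \cref{ass:P-payoff-main}(c), (d). After this pruning, every remaining $(r,s)\in\ms R$ has $u\ge0$, and the argument goes through.
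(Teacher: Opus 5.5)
Your architecture matches the paper's. You first punish all deviations, which is the paper's second case. You then collapse a non-simple $\tau$ into the standard contract $(\tilde Y,\tau_0)$, built from the principal's favorite post-breakdown stop $(r,s)$, and copy the principal-side comparison from \cref{lem:principal_wants_simple}.

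The gap is on the agent side of Step 2. You assert that the agent's payoff under $(\tilde Y,\tau_0)$ is at least his on-path payoff under $(\hat Y,\tau)$ ``by the same term-by-term matching.'' That matching would need $u(0,Y(\bfb_r,s),0)\ge u(0,Y(\bfb_{\hat r},\hat s),\hat s)$ for every $(\hat r,\hat s)\in\ms R$.

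In the Claim inside \cref{lem:principal_wants_simple}, both sides are evaluated under the \emph{same} new rule. Only the single policy $y(\bfb_r)$ appears, so positive monotonicity (\cref{lem:pos-mon}) suffices. In your comparison, the right-hand side carries the original policies $Y(\bfb_{\hat r},\hat s)$, which can differ across breakdown dates. Since $(r,s)$ maximizes the principal's payoff rather than the agent's, and $v$ need not be monotone in $y$, nothing rules out $Y(\bfb_{\hat r},\hat s)>Y(\bfb_r,s)$. In that case the agent loses on $\bfb_{\hat r}$, and the APM gain on the post-$t$ histories need not compensate. So the first link of your chain $U_0(\tilde y)\ge \E[u(\t,\hat Y,\tau)]\ge\max_s U_s(\mathbf 0)$ is unsupported.

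A related point on Step 1: ``obedience reduces to a single comparison'' is only a necessary condition when $\tau$ is non-simple. A deviation that agrees with $\tau$ on part of the tree keeps the on-path policy there. The reduction is an equivalence only for a standard contract with a simple stopping time, which is all you need for $\tau_0$.

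The fix is to bypass the old on-path payoff and compare $U_0(\tilde y)$ with each $U_k(\mathbf 0)$ directly. Write $\bar y:=Y(\cdot,t)$ on $\{\bfb_{t+1},\ldots,\bfb_T,\bfa\}$.
\begin{itemize}
\item Obedience against ``never stop after the breakdown at $r$'' gives $u(0,Y(\bfb_r,s),s)\ge 0$, hence $u(0,Y(\bfb_r,s),0)\ge 0$.
\item Obedience against the deviation that follows $\tau$ on breakdowns up to $t$ but stops at $\bfa_k$ instead of $\bfa_t$ gives $U_t(\bar y)\ge U_k(\mathbf 0)$ for all $k>t$. The case $k=\infty$ gives $U_t(\bar y)\ge 0$.
\item APM, applied twice, then gives $U_{0|t}(\bar y)\ge U_{k|t}(\bar y)\ge U_t(\bar y)$ for $k\le t$.
\item \cref{ass:A-payoff-main}(f) and monotonicity in $y$ give $U_k(\mathbf 0)\le U_{k|t}(\mathbf 0)\le U_{k|t}(\bar y)$.
\end{itemize}
Combining these, $U_0(\tilde y)\ge U_{0|t}(\bar y)\ge \max_k U_k(\mathbf 0)$. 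For $t=\infty$, a direct term-by-term comparison with $U_k(\mathbf 0)$ works.

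This also shows your ``main obstacle'' is misdiagnosed. The inequality $u(0,Y(\bfb_r,s),s)\ge 0$ \emph{is} implied by obedience, because never investing yields $u(\t,\cdot,\infty)=0$ regardless of the contract. Your pruning step is therefore never triggered. That is fortunate, because its claim that the principal weakly gains does not follow from \cref{ass:P-payoff-main}(c),(d) when $v(0,Y(\bfb_{\hat r},\hat s),\hat s)>0$.
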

\begin{proof}
    We will argue that any $(Y,\tau)$ which solves $[\ms P_0^{FB}]$ but is not a standard contract can be weakly improved by a standard contract. Since a solution to $[\ms P_0^{FB}]$ exists by the Weierstrass theorem,\footnote{The argument for existence is analogous to that of the proof of \cref{lem:standard-solutions-exist}(a).} this will establish the result. 

    There are two ways in which $(Y,\tau)$ could fail to be a standard contract.  First, $\tau$ could fail to be a simple stopping time. Second, $\tau$ could be a simple stopping time, but $Y(\bfh,t')>0$ for some $t'\neq \tau(\bfh)$ (i.e., agent disobedience is not fully punished). In the second case, the principal can weakly improve by using $(\widetilde{Y},\tau)$, where $\widetilde{Y}(\bfh,t')=0$, for all $t'\neq \tau(\bfh)$, and otherwise $\widetilde{Y}=Y$. (This is because fully punishing deviations maintains the agent's incentives, and the principal's payoff is not affected on-path.)

    In the first case, let $d$ be the stopping rule induced by $\tau$, and let $t$ be the period where $d(\bfa_{t})=\ttstop$ (or $t=\infty$ if the agent never stops). Let $\ms R:=\{(\hat{r},\hat{s}):d(\bfb_{\hat{r},\hat{s}})=\ttstop\}$ denote the set of other partial histories at which $d$ stops. Let $(r,s)$ be the principal's favorite instance of the agent stopping after a breakdown under $Y$ -- i.e., $(r,s)\in\argmax_{(\hat{r},\hat{s})\in\ms R} v(0,Y(\bfb_{\hat{r}},\hat{s}),\hat{s}).$
    
     We claim that the principal can improve by using the standard contract $(\widetilde{Y},\tau_0)$, where $\widetilde{Y}$ is given by:
    \[\widetilde{Y}(\bfh,0)=\begin{cases} 
Y(\bfb_{r},s) & \bfh\in\{\bfb_1,\dots,\bfb_t\}, \\
Y(\bfh,t) & \text{otherwise,}  
\end{cases} \qquad \widetilde{Y}(\bfh,\hat{t})=0 \text{ for all }\bfh\text{ and }\hat{t}\neq 0,\]
if $t<\infty$; and $\widetilde{Y}(\bfh,0)=Y(\bfb_r,s)$ for all $\bfh$, and $\widetilde{Y}(\bfh,\hat{t})=0$ for all other $\hat{t}$, if $t=\infty.$ That is, the principal induces the agent to stop at 0 by promising him, if he is obedient, $Y(\bfb_r,s)$ when a breakdown occurs by time $t$ and otherwise the obedient policy under $Y$; if he is not obedient, he receives 0. The remainder of the argument closely parallels the proof of \cref{lem:principal_wants_simple}, and is omitted.
\end{proof}

Now, recall the principal's auxiliary first-best problem described in the main text, 
\begin{align}
    [\ms P^{FB}] \hspace{13em}\max_{y,\,\tau_t} &\, \mathbb{E} [v(\t,y(\bfh),\tau_t)] \hspace{15em}  \notag\\
    \text{subject to } &\, U_t(y) \geq \max_s U_s(\mathbf{0}). \notag
\end{align}
We aim to relate $[\ms P^{FB}]$ and $[\ms P_0^{FB}]$; the following lemma will be useful.

\begin{lem}\label{lem:fb-feas}
    Suppose $(Y,\tau_t)$ is a standard contract which is feasible for $[\ms P_0^{FB}]$. Then, $(y^{(Y,\tau_t)},\tau_t)$ is feasible for $[\ms P^{FB}]$. Suppose $(y,\tau_t)$ is feasible for $[\ms P^{FB}]$. Then, $(Y^{(y,\tau_t)},\tau_t)$ is feasible for $[\ms P_0^{FB}]$. 
\end{lem}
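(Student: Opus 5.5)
The two claims are direct translations between a standard contract and its policy rule. Throughout I use the bijection $y^{(Y,\tau_t)}(\bfh)=Y(\bfh,\tau_t(\bfh))$ and $Y^{(y,\tau_t)}$, and the observation stated above that the agent's obedient payoff under a standard contract $(Y,\tau_t)$ equals $U_t(y^{(Y,\tau_t)})$.

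\emph{First claim.} Let $(Y,\tau_t)$ be a standard contract that is feasible for $[\ms P_0^{FB}]$, and write $y=y^{(Y,\tau_t)}$. Fix any simple stopping time $\tau_s$ with $s\neq t$. Under a standard contract, deviating to $\tau_s$ yields policy $0$ at every history where $\tau_s(\bfh)\neq\tau_t(\bfh)$.
\begin{itemize}
\item If $\tau_s(\bfh)=\infty$, the agent's payoff is $0=u(\theta,0,\infty)$ by \cref{ass:A-payoff-main}(c).
\item If $\tau_s(\bfh)=\tau_t(\bfh)<\infty$, this can only happen when $s=t$, which is excluded.
\end{itemize}
So for $s\neq t$ the deviation payoff is exactly $U_s(\mathbf 0)$. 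Obedience then gives $U_t(y)\geq U_s(\mathbf 0)$ for all $s\neq t$. For $s=t$ we need $U_t(y)\geq U_t(\mathbf 0)$. This follows from monotonicity of $u$ in $y$ (\cref{ass:A-payoff-main}(a),(b)) together with $y\geq 0$, since $U_t$ only evaluates $y$ at histories where the agent stops at $t$. Hence $U_t(y)\geq\max_s U_s(\mathbf 0)$, and $(y,\tau_t)$ is feasible for $[\ms P^{FB}]$.

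\emph{Second claim.} Let $(y,\tau_t)$ be feasible for $[\ms P^{FB}]$ and set $Y=Y^{(y,\tau_t)}$. The obedient payoff is $U_t(y)$. Take an arbitrary, possibly non-simple, deviation $\tau'$, described as in \eqref{eq:u_formula} by a no-breakdown stopping time $t'$ and a set $\ms R$ of post-breakdown stops. At each history the agent receives either the on-path policy (where $\tau'(\bfh)=\tau_t(\bfh)$) or $0$. I bound the deviation payoff in three steps.
\begin{itemize}
\item Where $\tau'(\bfh)=\tau_t(\bfh)<\infty$, we have $\bfh\in\{\bfb_{t+1},\ldots,\bfa\}$ and the stop occurs at $t$ in both. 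Replacing the policy there by $0$ weakly increases that term only if the term is negative, so this step needs care. I instead compare directly against the zero policy using $\max_s U_s(\mathbf 0)$.
\item Under the zero policy, the agent's optimal stopping problem is solved by a simple stopping time. After a breakdown $\theta=0$ is known, and stopping gives $u(0,0,\cdot)<0$ by \cref{ass:A-payoff-main}(f), so the agent never stops after a breakdown.
\item Hence $U_{\tau''}(\mathbf 0)\leq\max_s U_s(\mathbf 0)$ for every stopping time $\tau''$.
\end{itemize}

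The main obstacle is the histories where $\tau'$ agrees with $\tau_t$ and collects $y(\bfh)$ rather than $0$. These occur only when $t'=t$, that is, when the agent follows $\tau_t$ on the no-breakdown path. In that case the deviation differs from $\tau_t$ only by added post-breakdown stops at $\bfb_r$ with $r\leq t$. There the policy is $0$ because $\tau_t(\bfb_r)=\infty$, so each added stop contributes $u(0,0,s)<0$, and such a deviation payoff is at most $U_t(y)$.

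If instead $t'\neq t$, every stop of $\tau'$ differs from $\tau_t$, so the deviation payoff equals $U_{\tau'}(\mathbf 0)\leq\max_s U_s(\mathbf 0)\leq U_t(y)$.

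Both cases give $U_{\tau'}(Y)\leq U_t(y)$, so $(Y^{(y,\tau_t)},\tau_t)$ is feasible for $[\ms P_0^{FB}]$.
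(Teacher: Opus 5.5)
Your proof is correct and follows the paper's argument. For the first claim you restrict the first-best incentive constraint to simple deviations, and you also spell out the $s=t$ case via monotonicity of $u$ in $y$, which the paper leaves implicit. For the second claim you split on whether the deviation stops at $\mathbf{a}_t$, and bound its payoff by negative post-breakdown terms $u(0,0,s)$ plus either $U_t(y)$ or a zero-policy payoff that is at most $\max_s U_s(\mathbf{0})$; this is exactly the paper's decomposition. The first bullet of your second part is an abandoned aside and can simply be deleted.
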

\begin{proof}
    \textit{    Suppose $(Y,\tau_t)$ is a standard contract which is feasible for $[\ms P_0^{FB}]$. Then, $(y^{(Y,\tau_t)},\tau_t)$ is feasible for $[\ms P^{FB}]$. }

    Let $y:=y^{(Y,\tau_t)}.$ Because $(Y,\tau_t)$ is feasible for $[\ms P_0^{FB}]$, we have $U_t(y)\geq U_{t'}(\mathbf{0})$ for all $t'$ (the agent cannot improve by deviating to a different simple stopping strategy). Thus, $(y,\tau_t)$ is feasible for $[\ms P^{FB}]$.
    \medskip
    
    \textit{Suppose $(y,\tau_t)$ is feasible for $[\ms P^{FB}]$. Then, $(Y^{(y,\tau_t)},\tau_t)$ is feasible for $[\ms P_0^{FB}]$.}
    
    Let $Y:=Y^{(y,\tau_t)}.$ Consider any alternative stopping rule, $d$, to $\tau_t,$ and let $\hat{t}$ be the period where $d(\bfa_{\hat{t}})=\ttstop$ (or $\hat{t}=\infty$ if the agent never stops). Let $\ms R:=\{(r,s):d(\bfb_{r,s})=\ttstop\}$ denote the set of other partial histories at which $d$ stops. Then, the agent's utility from following $d$ under $Y$ is given by:
    \[U_{\tau_d}=\sum_{(r,s)\in\ms R} p(\bfb_r)u(0,0,s)+\begin{cases}U_{\hat{t}}(\mathbf{0}) & \hat{t}\neq t\\
    U_t(y) & \hat{t}=t\end{cases}.\]
    The agent's utility from following $\tau_t$ is $U_t(y).$ Then, in either case, we have that $U_t(y)\geq U_{\tau_d}$, since the first term is negative by \cref{ass:A-payoff-main}(f), and $U_t(y)\geq U_{\hat{t}}(\mathbf{0})$ by the feasibility of $y$ for $[\ms P^{FB}].$ Thus, $(Y,\tau_t)$ is feasible for $[\ms P_0^{FB}]$.
\end{proof}

Now, we can prove \cref{thm:fb-equivalence}, which shows an equivalence between $[\ms P^{FB}]$ and $[\ms P_0^{FB}]$.
\begin{thm}
\label{thm:fb-equivalence}
    Suppose $(y,\tau_t)$ solves $[\ms P^{FB}].$ Then, $(Y^{(y,\tau_t)},\tau_t)$ solves $[\ms P_0^{FB}]$. Suppose $(Y,\tau_t)$ is a standard solution to $[\ms P_0^{FB}].$ Then, $(y^{(Y,\tau_t)},\tau_t)$ solves $[\ms P^{FB}]$.
\end{thm}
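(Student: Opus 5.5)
The plan mirrors the proof of \cref{thm:aux=original}: combine \cref{lem:fb-standard} and \cref{lem:fb-feas} with a payoff-equivalence observation. The key observation is that, for a fixed simple stopping time $\tau_t$, the bijection between standard contracts and policy rules preserves the principal's on-path payoff. Indeed, $Y^{(y,\tau_t)}(\bfh,\tau_t(\bfh))=y(\bfh)$, so
\[
\E[v(\t,Y^{(y,\tau_t)}(\bfh,\tau_t(\bfh)),\tau_t(\bfh))]=\E[v(\t,y(\bfh),\tau_t(\bfh))],
\]
and the same identity holds for $y^{(Y,\tau_t)}$ whenever $(Y,\tau_t)$ is standard. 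The objectives of $[\ms P^{FB}]$ and $[\ms P_0^{FB}]$ therefore coincide along the bijection. Combined with \cref{lem:fb-feas}, this means feasibility is also preserved in both directions.

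\emph{First statement.} Let $(y,\tau_t)$ solve $[\ms P^{FB}]$. By \cref{lem:fb-feas}, $(Y^{(y,\tau_t)},\tau_t)$ is feasible for $[\ms P_0^{FB}]$ and yields the same principal payoff. Now take any feasible $(Y',\tau')$ for $[\ms P_0^{FB}]$. By \cref{lem:fb-standard}, and more precisely by the improvement argument in its proof, there is a standard contract $(\widetilde Y,\tau_{t'})$ that is feasible for $[\ms P_0^{FB}]$ and weakly better for the principal. It is cleaner to argue via values. Let $(\widetilde Y,\tau_{t'})$ be a standard contract solving $[\ms P_0^{FB}]$, which exists by \cref{lem:fb-standard}. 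Its image $(y^{(\widetilde Y,\tau_{t'})},\tau_{t'})$ is feasible for $[\ms P^{FB}]$ by \cref{lem:fb-feas}, with equal payoff. So the value of $[\ms P_0^{FB}]$ is at most the value of $[\ms P^{FB}]$, which equals the payoff of $(Y^{(y,\tau_t)},\tau_t)$. Hence the latter is optimal.

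\emph{Second statement.} Let $(Y,\tau_t)$ be a standard solution to $[\ms P_0^{FB}]$. Its image $(y^{(Y,\tau_t)},\tau_t)$ is feasible for $[\ms P^{FB}]$ with the same payoff. For any feasible $(y',\tau_{s})$ of $[\ms P^{FB}]$, the contract $(Y^{(y',\tau_s)},\tau_s)$ is feasible for $[\ms P_0^{FB}]$ with the same payoff, and so its payoff is bounded by that of $(Y,\tau_t)$. Optimality of $(y^{(Y,\tau_t)},\tau_t)$ follows.

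\emph{Main obstacle.} The only delicate point is the existence of a standard solution to $[\ms P_0^{FB}]$, which the first statement needs in order to compare values. That is exactly \cref{lem:fb-standard}, whose proof was only sketched by analogy with \cref{lem:principal_wants_simple}. I would take it as given. I would also double-check that the payoff identity needs no assumption on off-path policies, which holds because $[\ms P^{FB}]$ evaluates $v$ only on path.
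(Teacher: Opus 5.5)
Your proof is correct and follows the paper's argument: both directions combine \cref{lem:fb-feas} with the on-path payoff identity along the bijection, and the first direction uses the standard solution guaranteed by \cref{lem:fb-standard} exactly as the paper does. The only difference is in the second direction, where you compare against every feasible point of $[\ms P^{FB}]$ rather than invoking a Weierstrass existence step as the paper does, which is slightly more economical. You were also right to drop the ``improve any feasible contract'' route, since the improvement argument behind \cref{lem:fb-standard}, as written, uses optimality of the starting contract.
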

\begin{proof}
        \textit{Suppose $(y,\tau_t)$ solves $[\ms P^{FB}].$ Then, $(Y^{(y,\tau_t)},\tau_t)$ solves $[\ms P_0^{FB}]$.}

        Let $Y:=Y^{(y,\tau_t)}.$ First, by \cref{lem:fb-feas}, $(Y,\tau_t)$ is feasible for $[\ms P_0^{FB}]$. 
        
        Next, by \cref{lem:fb-standard}, a standard solution to $[\ms P_0^{FB}]$ exists; call it $(Y^*,\tau_{t^*}).$ Again, by \cref{lem:fb-feas}, $(y^{(Y^*,\tau_{t^*})},\tau_{t^*})$ is feasible for $[\ms P^{FB}]$. Then, by optimality of $(y,\tau_t)$ for $[\ms P^{FB}]$, 
        $(y,\tau_t)$ is at least as good as $(y^{(Y^*,\tau_{t^*})},\tau_{t^*})$ for $[\ms P^{FB}]$.
        Thus, $(Y,\tau_t)$ is at least as good as $(Y^*,\tau_{t^*})$ for $[\ms P_0^{FB}]$, and hence optimal.
        \medskip

        \textit{Suppose $(Y,\tau_t)$ is a standard solution to $[\ms P_0^{FB}].$ Then, $(y^{(Y,\tau_t)},\tau_t)$ solves $[\ms P^{FB}]$.}

    Let $y:=y^{(Y,\tau_t)}$. First, by \cref{lem:fb-feas}, $(y,\tau_t)$ is feasible for $[\ms P^{FB}]$. 
    
    Next, by the Weierstrass theorem, a solution to $[\ms P^{FB}]$ exists;\footnote{The argument is analogous to that of the proof of \cref{lem:standard-solutions-exist}(a).} call it $(y^*,\tau_{t^*}).$ Again, by \cref{lem:fb-feas}, $(Y^{(y^*,\tau_{t^*})},\tau_{t^*})$ is feasible for $[\ms P_0^{FB}]$. Then, by optimality of $(Y,\tau_t)$ for $[\ms P_0^{FB}]$, 
    $(Y,\tau_t)$ must be at least as good as $(Y^{(y^*,\tau_{t^*})},\tau_{t^*})$ for $[\ms P_0^{FB}]$.
    Thus, $(y,\tau_t)$ is optimal for $[\ms P^{FB}].$
\end{proof}
Similarly to \cref{thm:aux=original}, \cref{thm:fb-equivalence} is useful in two respects. First, it allows us to find (standard) solutions to $[\ms P_0^{FB}]$ by solving the more tractable auxiliary problem, $[\ms P^{FB}].$ Second, it shows that any properties we establish of solutions to the auxiliary problem also hold of standard solutions to $[\ms P^{FB}_0].$

\subsection{\texorpdfstring{Proof of \cref{prop:benchmark-solution}}{Proof of Proposition \ref{prop:benchmark-solution}}}\label[onlineapp]{proof:benchmark-solution}
    Fix time $t \in T$, and consider the \textit{inner first-best problem}:
\begin{align}
    [\ms P^{FB}(t)] \hspace{13em}\max_{y} &\, \mathbb{E} [v(\t,y(\bfh),\tau_t)] \hspace{15em}  \notag\\
    \text{subject to } &\, U_t(y) \geq \max_s U_s(\mathbf{0}). \notag
\end{align}
Consider the Lagrangian of $[\ms P^{FB}(t)]$. Suppose $y(\bfh)>0$ for each history $\bfh \in \{\bfb_{t+1},\bfb_{t+2},\ldots,\bfb_T,\bfa\}$. Then, the first-order condition of the Lagrangian with respect to $y(\bfh)$ is
    \begin{align}\label{eq:prop-2-3}
        v_y\left(\theta, y(\bfh),t\right)+\lambda u_y\left(\theta, y(\bfh),t\right)=0,
    \end{align}
    where $\lambda$ is the Lagrange multiplier on the  agent's incentive constraint. The LHS of \eqref{eq:prop-2-3} is strictly decreasing in $y(\bfh)$, so for each $\bfh$, there exists a unique $y(\bfh)$ which solves \eqref{eq:prop-2-3}. Since $\t=0$ whenever $\bfh \in \{\bfb_{t+1},\bfb_{t+2},\ldots,\bfb_T\}$, the first part of \eqref{eq:prop-2-1} follows. The second part of \eqref{eq:prop-2-1} follows from  rearranging \eqref{eq:prop-2-3}.

\section{\texorpdfstring{Proof of \cref{lem:principal_wants_simple} Claim}{Proof of Lemma \ref{lem:principal_wants_simple} Claim}}\label[onlineapp]{OA:simple_claim}

We claim that because $(y,\tau)$ solves $[\ms P_0],$ we must have $v(0,y(\bfb_r),s)\geq0$ and $V_t(y)\geq 0.$
\begin{proof}
    Suppose not. Then, the principal can strictly improve by changing her policy to $\hat{y}$, which is equal to $y$ except at the following histories: (1) $\hat{y}(\bfb_{t'})=0$ for all $t'\leq t$ where the agent never stopped under $\tau$ in $\bfb_{t'}$; (2) $\hat{y}(\bfb_{t'})=0$ for all $t'\leq t$ where the principal received negative payoff under $(y,\tau)$ in $\bfb_{t'}$ (i.e., $v(0,y(\bfb_{t'}),\tau(\bfb_{t'}))<0$); (3) if $V_t(y)<0$, then $\hat{y}(\bfa)=0$; and (4) if $V_t(y)<0$, then $\hat{y}(\bfb_{t'})=0$ for all $t'>t$.

    The principal recommends that the agent not delay stopping after any breakdown (i.e., not stop at $\bfb_{r',s'}$ for some $s'>r'$); this is incentive-compatible by APM (i.e., delaying stopping is weakly dominated by stopping immediately following the breakdown). The principal also recommends that, if $V_t(y)\geq 0,$ the agent not stop at some $\bfa_{t'}$ with $t'>t$. This is incentive-compatible because, in this case, $\hat{y}=y$ for all histories $\bfh\in\{\bfb_{t+1},\dots,\bfb_{T},\bfa\}$ and so, under $\hat{y}$, stopping at $\bfa_t$ inherits superiority over stopping at $\bfa_{t'}$ from its superiority under $y$.

    We claim that, whatever stopping time $\hat{\tau}$ the agent uses subject to following these two recommendations, the principal strictly benefits relative to $(y,\tau).$ In particular, in any history $\bfh\in\{\bfb_1,\dots,\bfb_t\}$, either (a) the agent stopped under $\tau$ and the principal received non-negative payoff from this; now under $\hat{\tau}$ the agent stops weakly earlier, so the principal weakly improves by APM; or (b) the agent did not stop under $\tau$, or stopped but the principal received negative payoff; now whatever the agent does under $\hat\tau$, the principal receives non-negative payoff since policy is 0. 

    To analyze the remaining histories, we consider two cases. \medskip

    \noindent \textit{Case 1.} $V_t(y)\geq 0$, $v(0,y(\bfb_r),s)<0.$
    
    In this case, since $V_t(y)\geq 0$, in any history $h\in\{\bfb_{t+1},\dots,\bfb_{T},\bfa\},$ the agent stops at $\bfa_t$ under both $y$ and $\hat{y}$, and so the principal receives equal payoff and in particular does no worse under $\hat{y}$.
    But, the histories described in (b) above where the principal received negative payoff under $y$ and hence strictly improved under $\hat{y}$ occur with positive probability, and so the principal strictly improves under $(\hat{y},\hat{\tau}).$\medskip

        \noindent \textit{Case 2.} $V_t(y)< 0$.
        
    In this case, as in (b) above, in any history $h\in\{\bfb_{t+1},\dots,\bfb_{T},\bfa\},$ the principal receives non-negative payoff. But, by the case supposition, she received negative expected payoff over all these histories under $(y,\tau)$. Thus, the principal strictly improves under $(\hat{y},\hat{\tau})$.
\end{proof}


\section{A Simple Example with Agent Rent}\label[onlineapp]{OA:rent}

We provide a simple example which serves two purposes. First, it shows that the agent can obtain positive rent even if his limited liability constraint ($y\geq 0$) is slack, and the principal's payoff $v$ is decreasing in $y$. Second, it shows that condition (iii) in \cref{prop:rent} is not necessary for the agent to obtain positive rent.

The example is similar to the motivating example in \Cref{sec:motivating-example}, with three key differences.
First, the state space is $c\in\{1,10\}$; second, the cost of the policy to the principal is $cy^2$, now quadratic; finally, policy and early investment are complements for the agent:
\begin{align*}
        \text{Agent's payoff} &\,= \begin{cases}
            ay - I \caseif \text{he invests in period 0} \\
            y - I  \caseif \text{he invests in period 1} \\
            0 \caseif \text{he never invests,}
        \end{cases}
\end{align*}
where $a>1$ scales the policy's benefit to the agent if he invests in period 0.
The principal's cost-minimization problem of inducing investment in period 0 is:
\begin{align*}
    \min_{y} &\, py(1)^2+10(1-p)y(10)^2 \\
    \text{subject to } &\,  pay(1)+(1-p)ay(10)-I\geq 0 \tag{IR} \\
    &\, pay(1)+(1-p)ay(10)-I\geq p(y(1)-I). \tag{NW}
\end{align*}

Consider the following parameters: $ I=2,\, a=2,\, p=\frac13$. The optimal policy is $y(1)=\frac{20}9,\, y(10)=\frac49$. 
Importantly, the agent's equilibrium payoff is
\[pay(1)+(1-p)ay(10)-I=\frac{2}{27} > 0,\]
so the IR constraint is slack. That is, the agent receives positive rent.

Note that, although the principal's payoff is decreasing in $y$, limited liability of the agent does not bind -- $y$ is strictly positive under both states. Also note that, under the null policy rule which is $y\equiv 0$, the agent will never invest, so condition (iii) of \cref{prop:rent} is violated.

\section{Moral Hazard Can Hasten Investment}\label[onlineapp]{OA:hastening}

\cref{subsec:investment-timing} provided a sufficient condition for moral hazard to delay investment. Here, we provide an example in which moral hazard strictly hastens investment -- the second-best investment timing is strictly earlier than the first-best investment timing.

Let $T=2$ and $p_1=p_2=1/2$. The payoffs are
    \begin{align*}
        u(\t,y,t) &\, = y - 12 + 8\theta + 4.9\ind{t=0} + 3\ind{t=1} \\
        v(\t,y,t) &\, = -y^2 + 26.9 \ind{t=0 \text{ or } t=1}.
    \end{align*}
    That is, for the agent, investment costs $12$, gives a benefit of 8 if $\t=1$, and also gives flow benefits of 1.9 in $t=0$ and 3 in $t=1$. The principal faces a quadratic cost of policy and obtains a benefit of 26.9 if the agent invests early enough, in $t=0$ or $t=1$. One can easily verify that the first-best investment timing is $t^{FB}=1$, while the second-best investment timing is $t^{SB} = 0$. 

    \begin{table}[h]
        \centering
        \subfloat[First-best]{\begin{tabular}{|l|l|l|l|l|}
        \hline
        Stopping time & $y(\bfb_1)$   & $y(\bfb_2)$   & $y(\bfa)$ & P's payoff \\ \hline
        $t=0$         & 5.1 & 5.1 & 5.1 & 0.89   \\ \hline
        $t=1$         & 0    & 5    & 5 & 0.95   \\ \hline
        \end{tabular}}
        \vspace{0.2cm}
        
        \subfloat[Second-best]{\begin{tabular}{|l|l|l|l|l|}
        \hline
        Stopping time & $y(\bfb_1)$   & $y(\bfb_2)$   & $y(\bfa)$ & P's payoff \\ \hline
        $t=0$         & 5.47 & 5.47 & 4 & 0.49   \\ \hline
        $t=1$         & 0    & 6    & 4 & 0.45   \\ \hline
        \end{tabular}}

        \caption{Hastening}
        \label{tab:hastening}
    \end{table}

    To understand this result intuitively, see \cref{tab:hastening}, which shows the optimal policy for the $t=0$ and $t=1$ inner problems in both the first-best and second-best cases. (The principal's benefit from early stopping is large enough that these are the only stopping times she should consider.) Notice that, since the principal's payoff is state-independent, the first-best policy is constant across histories. In contrast, the second-best policy is state-measurable, but has $y(0)>y(1).$ 

    To select between inducing stopping in $t=0$ and $t=1$, the principal must weigh a number of considerations. Some  considerations -- namely, that the agent requires more aggregate rewards to stop at 0, but having him stop at 1 risks an early breakdown occurring and hence the principal losing her early-stopping benefit -- are quantitatively identical between the first-best and second-best cases. The key consideration which differs relates to the principal's ability to \textit{smooth} her policy. Holding fixed the aggregate policy level the agent faces, the principal would prefer to smooth this policy over time (due to the convexity of $c$). Since inducing the agent to stop at 0 gives the principal more periods to smooth policy over, this consideration pushes the principal to prefer stopping at 0.
    
    This smoothing effect is more significant in the second-best setting. In the second-best setting, if the principal targets stopping in $t=1$, she must set $y(0)=6$ (a relatively high value) as a result of the agent's no waiting constraint. Additionally, whether the principal targets stopping in $t=0$ or $t=1$, she sets $y(\bfa)$ to the same level. This means that, in the second-best case, by moving from $t=1$ to $t=0,$ a high policy ($y(0)=6$) is smoothed from 1 to 2 histories; while in the first-best case, a lower policy ($y=5$) is smoothed from 2 to 3 histories.\footnote{Notice that these policies are both smoothed and (in aggregate) raised when moving to $t=0;$ this raising is why we see an overall increase from $y=5$ to $y=5.1$ in the first-best case.} 
    As a result, smoothing is more valuable in the second-best case, leading to hastening.

\end{document}